\declaretheorem{definition}
\declaretheorem{Frame Rule}
\declaretheorem{Observation}
\declaretheorem{Fitting Rule}
\newtheorem{lemma}{Lemma}
\newtheorem{theorem}{Theorem}
\newtheorem{proposition}{Proposition}
\newcommand{\lii}[1]{{\small \textit{\phantom{1}#1}} \>}
\newcommand{\com}[1]{// {\it #1}}
\newcommand{\windowstime}{\ensuremath{(2k)^{4k^2}\cdot k^{O(k)}}}
\newcommand{\overalltime}{\ensuremath{k^{21k^2}\poly(n)}}
\newcommand{\textttup}[1]{\texttt{\upshape #1}}
\newcommand{\splitf}{\textttup{split}\xspace}
\newcommand{\framesf}{\textttup{frames}\xspace}
\newenvironment{noSpaceTabbing}
  {\setlength{\topsep}{0pt}%
   \setlength{\partopsep}{0pt}%
   \tabbing}
  {\endtabbing}
\renewcommand\bibsection%
\newcommand{\rep}{\ensuremath{\text{rep}}}
\newcommand{\rel}[1]{\ensuremath{{#1}^*}}
\newcommand{\til}[1]{\ensuremath{#1'}}
\newcommand{\betacrit}{$\beta$-critical\xspace}
\newcommand{\cons}[0]{\ensuremath{\mathcal C}\xspace} 
\newcommand{\leftbreak}[1]{\ensuremath{\operatorname{l_{break}}(#1)}} 
\newcommand{\rightbreak}[1]{\ensuremath{\operatorname{r_{break}}(#1)}} 
\newcommand{\rext}[1]{\ensuremath{\operatorname{r_{ext}}(#1)}} 
\newcommand{\lext}[1]{\ensuremath{\operatorname{l_{ext}}(#1)}} 
\newcommand{\size}[1]{\ensuremath{\overline{#1}}}
\newcommand{\length}[1]{\ensuremath{\left\|{#1}\right\|}}
\newcommand{\shiftr}[2]{\ensuremath{#1\triangleright #2}} 
\newcommand{\shiftl}[2]{\ensuremath{#1\triangleleft #2}} 
\newcommand{\reprep}[0]{\ensuremath{\rep{-}\rep}} 
\newcommand{\sol}[0]{\ensuremath{\mathcal P}\xspace} 
\newcommand{\piecelen}{\ensuremath{\lceil \beta /3\rceil}}
\DeclareMathOperator{\poly}{poly}
\newcommand{\lpiece}[2]{\ensuremath{#1^\triangleleft_{#2}}}
\newcommand{\rpiece}[2]{\ensuremath{#1^\triangleright_{#2}}}
\begin{document}

\graphicspath{{./Figures/}}

\title{\Large Minimum Common String Partition Parameterized by Partition Size
  Is Fixed-Parameter Tractable}
\author{Laurent Bulteau, 
\thanks{Universit\'e de Nantes, LINA, UMR CNRS 6241, Nantes, France. L.B.~was supported by the DAAD.} \\
\and Christian Komusiewicz
\thanks{Institut f\"ur  Softwaretechnik und Theoretische Informatik, TU Berlin, Germany. C.K.~was supported by a post-doc fellowship of the region Pays de la Loire.}}
\date{}

\maketitle


\begin{abstract}
  The NP-hard \textsc{Minimum Common String Partition} problem  asks whether two strings~$x$ and~$y$ can each be
  partitioned into at most~$k$ substrings
  such that both partitions use exactly the same substrings in a different
  order.  We present the first fixed-parameter
  algorithm for \textsc{Minimum Common String Partition} using only parameter~$k$.
\end{abstract}


\section{Introduction}
Computing the evolutionary distance between two genomes is a
fundamental problem in comparative genomics~\cite{FLR+09}. Herein, the genomes are usually represented as either strings or permutations and the task is
to determine how many operations of a certain kind are needed to
transform one genome into the other. If the input is a pair of
permutations, these problems can be formulated as sorting problems,
 such as \textsc{Sorting by
  Transpositions}~\cite{BP98} and \textsc{Sorting by
  Reversals}~\cite{BP96}. In this work, we study a problem in this
context whose input is a pair of strings~$x$ and~$y$. Informally, the
operation to transfer~$x$ into~$y$ is to
cut~$x$ into nonoverlapping substrings and to reorder these substrings
such that the concatenation of the reordered substrings is
exactly~$y$. This transformation is formalized by the notion of
\emph{common string partition} (CSP): a partition~$\sol$ of two
strings~$x$ and~$y$ into \emph{blocks}~$x_1 x_2 \cdots x_k$ and~$y_1 y_2 \cdots y_k$
is a common string partition if there is a bijection~$M$
between~$\{x_i\mid 1\le i\le k\}$ and $\{y_i\mid 1\le i\le k\}$ such
that~$x_i$ is the same string as~$M(x_i)$ for all~$1\le i\le k$ (see Figure~\ref{fig:CSP} for an
example). Herein,~$k$ is called the
\emph{size} of the common string partition~$\sol$. We study the problem
of finding a minimum-size CSP:
\begin{quote}
  \textsc{Minimum Common String Partition (MCSP)}\\
  Input: Two strings~$x$ and~$y$ of length~$n$, and an integer~$k$.\\
  Question: Is there a common string partition (CSP)~$\sol$ of size at most~$k$
  of~$x$ and~$y$?
\end{quote}
MCSP was introduced independently by~\citet{CZF+05}
and~\citet{SMEM08} (who call the problem \textsc{Sequence Cover}). MCSP is NP-hard
and
APX-hard even when each letter occurs at most twice~\cite{GKZ05}.  \citet{Dam08}
initiated the
study of MCSP in the context of parameterized algorithmics by showing
that MCSP is fixed-parameter tractable with respect to the combined
parameter ``partition size~$k$ and repetition number~$r$ of the
input strings''. Subsequently, \citet{JZZ+12}  showed that MCSP can be solved
in~$(d!)^k\cdot \poly(n)$ time, where~$d$ is the maximum number
of occurrences of any letter in either input string. This was later improved to a~$d^{2k}\poly(n)$-time algorithm that can solve biologically relevant input instances~\cite{BFKR13}. MCSP can be solved
in~$2^n\cdot \poly(n)$ time~\cite{FJY+11}. The current best approximation ratio is~$O(\log n\log^* n)$ and follows from the  work of~\citet{CM07}. An approximation ratio of~$O(d)$ can also be achieved~\cite{KW07}.

A greedy heuristic for MCSP
was presented by~\citet{SS07}.
In this work, we answer an open question~\cite{Dam08,FJY+11,JZZ+12} by showing that MCSP is fixed-parameter tractable when
parameterized only by~$k$, that is, we present an algorithm with running time~$f(k)\cdot \poly(n)$.
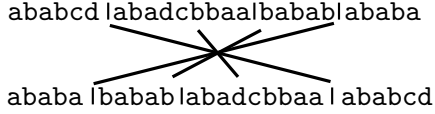
\begin{figure}[t]\centering
  \begin{tikzpicture}[>=stealth,y=16pt,x=32pt]
    \tikzstyle{inv}=[inner sep=0pt,outer sep=0pt,minimum size=0mm]
    \tikzstyle{vert}=[circle,draw,thick, inner sep=0pt,minimum
    size=2mm] \tikzstyle{subtree}=[regular polygon, regular polygon
    sides=3,draw,thick, inner sep=0pt,minimum size=15mm]
    \node (sx1) at (0.9,2) {\texttt{ababcd}}; \node (sx2) at (2.4,2)
    {\texttt{abadcbbaa}}; \node (sx3) at (3.75,2) {\texttt{babab}};
    \node (sx4) at (4.75,2) {\texttt{ababa}};
  
    \node (sy1) at (4.8,0) {\texttt{ababcd}}; \node (sy2) at (3.25,0)
    {\texttt{abadcbbaa}}; \node (sy3) at (1.87,0) {\texttt{babab}};
    \node (sy4) at (0.8,0) {\texttt{ababa}};

    \draw [-,thick] (1.54,2.2) -- (1.54,1.8); \draw [-,thick]
    (3.25,2.2) -- (3.25,1.8); \draw [-,thick] (4.24,2.2) --
    (4.24,1.8);

    \draw [-,thick] (1.36,0.2) -- (1.36,-0.2); \draw [-,thick]
    (2.4,0.2) -- (2.4,-0.2); \draw [-,thick] (4.16,0.2) --
    (4.16,-0.2);

    \draw [-,very thick] (sx1) -- (sy1); \draw [-,very thick] (sx2) -- (sy2);
    \draw [-,very thick] (sx3) -- (sy3); \draw [-,very thick] (sx4) -- (sy4);
 
  \end{tikzpicture}\qquad
\caption{\label{fig:CSP}An instance of MCSP with a common string partition of
size four.}
\end{figure}

\paragraph*{Basic Notation.}
A \emph{marker} is an occurrence of a letter at a specific position in
a string; we denote the marker at position~$i$ in a string~$x$
by~$x[i]$. For all~$i$, $1\le i<n$, the markers~$x[i]$ and~$x[i+1]$
are called \emph{consecutive}. An \emph{adjacency} is a pair of
consecutive markers. An \emph{interval} is a set of consecutive
markers, that is, an interval is a set~$\{x[i], x[i+1], \ldots ,
x[j]\}$ for some~$i\le j$. We say that an interval \emph{contains an adjacency} if it contains both markers of the adjacency. We write $[a,b]$ to denote the interval
whose first marker is~$a$ and whose last marker is~$b$. The
\emph{length} $\length{I}$ of an interval $I$ is the number of markers
it contains. Given two markers $a$ and $b$ in the same string~$x$, we
write $\size{ab}$ to denote the signed distance between $a$ and $b$,
that is, $\size{ab}=\length{[a,b]}-1$ if~$a$ appears before~$b$
in~$x$, and~$\size{ab}=-\length{[b,a]}+1$, otherwise. Given two
intervals~$s$ and~$t$, we write $s\equiv t$ if they represent the same
string of letters (if they have the same contents) and~$s=t$ if they
are the same interval, that is, they are substrings of the same string~$z$
and start and end at the same position of~$z$.  Similarly, for
two markers~$a$ and~$b$ we write~$a\equiv b$ if their letters are the
same, and~$a=b$ if the markers are identical. We say that a string~$s$
has \emph{period}~$\pi$ if $s=\rho \pi^i \tau$, where~$i\ge 1$,~$\rho$
is a (possibly empty) suffix of~$\pi$, and~$\tau$ is a (possibly
empty) prefix of~$\pi$. The following lemma is useful when dealing with periodic strings and follows from the periodicity lemma~\cite{FW65}.
\begin{lemma}\label{lem:periodicity}
  Let~$s$ and~$t$ be two strings such that~$s$ has period~$\pi_s$
  and $t$ has period~$\pi_t$. If~$s$ has a suffix of length at least~$\length{\pi_s}+\length{\pi_t}$
  that is also a prefix of~$t$, then~$t$ has period~$\pi_s$ 
  and~$s$ has period~$\pi_t$.
\end{lemma}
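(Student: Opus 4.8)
The plan is to reduce everything to the Fine--Wilf periodicity lemma~\cite{FW65}, which states that if a string~$w$ has two periods~$p$ and~$q$ and~$\length{w}\ge p+q-\gcd(p,q)$, then~$w$ also has period~$\gcd(p,q)$, and in particular it has periods $p$ and $q$ simultaneously in the strong sense. First I would fix notation: let~$u$ be the common suffix of~$s$ and prefix of~$t$ guaranteed by the hypothesis, so $\length{u}\ge \length{\pi_s}+\length{\pi_t}$. Since~$u$ is a suffix of~$s$ and~$s$ has period~$\pi_s$, the string~$u$ itself has period~$\pi_s$; symmetrically, since~$u$ is a prefix of~$t$ and~$t$ has period~$\pi_t$, the string~$u$ has period~$\pi_t$. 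Thus~$u$ is a single string carrying both periods, and its length is at least~$\length{\pi_s}+\length{\pi_t}\ge \length{\pi_s}+\length{\pi_t}-\gcd(\length{\pi_s},\length{\pi_t})$, so Fine--Wilf applies to~$u$.

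Next I would upgrade ``$u$ has period~$\gcd$'' to the two statements we actually want, namely that~$t$ has period~$\pi_s$ and~$s$ has period~$\pi_t$. The cleanest route is to show that the letter at any position of~$t$ is determined periodically with period~$\length{\pi_s}$, by walking along~$t$ in steps of~$\length{\pi_s}$ and using that within the window~$u$ the two periodicities agree letter-by-letter (this is where~$u$ being long enough matters: each ``jump'' of size~$\length{\pi_s}$ stays consistent because~$u$ simultaneously respects period~$\length{\pi_t}$, so one can pivot between the two periods). Concretely: take two positions of~$t$ that are congruent modulo~$\length{\pi_s}$; I want to show they carry the same letter. If both lie inside the prefix~$u$ of~$t$ this is immediate since~$u$ has period~$\pi_s$. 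Otherwise, I would use the period~$\pi_t$ of~$t$ to shift one of them by a multiple of~$\length{\pi_t}$ into~$u$ without changing its letter or its residue modulo... — here one must be a little careful, because shifting by~$\length{\pi_t}$ changes the residue modulo~$\length{\pi_s}$ in general. The correct argument instead uses that~$\gcd(\length{\pi_s},\length{\pi_t})$ is a period of~$u$: combining the two periods one shows the letter of~$t[i]$ depends only on~$i \bmod \gcd$ for~$i$ in~$u$, and then the global period~$\pi_t$ of~$t$ propagates the $\gcd$-periodicity across all of~$t$ (since~$\length{\pi_t}$ is a multiple of~$\gcd$); hence~$t$ has period~$\gcd$, which in particular forces period~$\length{\pi_s}$ on~$t$. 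The symmetric argument, using that~$\length{\pi_s}$ is a multiple of~$\gcd$ and that~$s$ has period~$\pi_s$, gives that~$s$ has period~$\gcd$ and hence period~$\length{\pi_t}$.

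A subtlety worth flagging is the paper's definition of ``period'': here $s=\rho\pi^i\tau$ with~$i\ge1$ and~$\rho$ a suffix, $\tau$ a prefix of~$\pi$, which is exactly the ``$\length{\pi}$ is a period in the Fine--Wilf sense, restricted to $\length{s}\ge\length{\pi}$'' notion — so I should check the length condition $\length{t}\ge\length{\pi_s}$ and $\length{s}\ge\length{\pi_t}$ hold, which they do since~$\length{s}\ge\length{u}\ge\length{\pi_s}+\length{\pi_t}\ge\length{\pi_t}$ and likewise for~$t$. The main obstacle is purely bookkeeping: making the ``pivot between the two periods'' step rigorous, i.e.\ translating the Fine--Wilf conclusion on the finite window~$u$ into a statement about the (longer) strings~$s$ and~$t$. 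Once one has that~$\gcd(\length{\pi_s},\length{\pi_t})$ is a period of the overlap~$u$, propagation to all of~$s$ and all of~$t$ is routine because each original period is an integer multiple of the~$\gcd$. I expect the whole argument to be three or four short paragraphs, with the Fine--Wilf invocation doing the real work.
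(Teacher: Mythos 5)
Your argument is correct and is essentially what the paper intends: the paper offers no proof of this lemma beyond the remark that it ``follows from the periodicity lemma [FW65]'', and your derivation --- applying Fine--Wilf to the overlap $u$ (which carries both periods and has length at least $\length{\pi_s}+\length{\pi_t}$) to obtain period $\gcd(\length{\pi_s},\length{\pi_t})$, then propagating this period to all of $t$ (resp.\ $s$) using that $\length{\pi_t}$ (resp.\ $\length{\pi_s}$) is a multiple of the gcd --- is the standard way to fill in that citation. The one point you leave implicit is that the paper's ``period'' is a string rather than just a length, so one should also observe that the phase of $\pi_s$ inside $t$ is inherited from the shared overlap $u$; this is routine bookkeeping that the paper does not address either.
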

We define \emph{offset}
operators~\shiftr{}{} and~\shiftl{}{}: For each marker $e$ and integer~$d$, $e'=\shiftr{e}{d}$ 
is the marker such that $\size{ee'}=d$, and  $\shiftl{e}{d}:=\shiftr{e}{(-d)}$.

\section{Fundamental Definitions and an Outline of the Algorithm.}
In this section, we first present the most fundamental definitions used by our algorithm and then give a brief outline of the main algorithmic strategy followed by the algorithm. 

\paragraph{Some Fundamental Definitions.}
Let~$\sol=\{x_1 x_2 \ldots x_\ell; y_1 y_2 \ldots y_\ell; M\}$ be a
CSP of strings~$x$ and~$y$. A \emph{breakpoint} of~$\sol$ is an
adjacency in~$x$ (or~$y$) that contains the last marker of some block~$x_i$
($y_i$) and the first marker of the next block~$x_{i+1}$
($y_{i+1}$). We say that \sol~\emph{matches two blocks}~$x_i$
and~$y_j$ if~$M(x_i)=y_j$. Furthermore, we say that~\sol~\emph{matches
  two markers}~$a$ and~$b$ if~$a$ and~$b$ are at the same position
in matched blocks. By the definition of a~CSP, this implies~$a\equiv
b$.

The algorithm works on subdivisions of both strings into shorter
parts. These subdivisions are formalized as follows.
\begin{definition}
  A \emph{splitting} of a string (or an interval) $z$ is a list of
  intervals $[a_1,b_1]$, $[a_2,b_2]$, \ldots, $[a_m,b_m]$, each of length at least two, called \emph{pieces} such that $a_1=z[1]$, $a_{j+1}=b_{j}$ for all~$j< m$, and $b_m=z[\length{z}]$.
\end{definition}
Note that successive pieces overlap by one marker. Thus, every adjacency 
of~$z$ is included in one piece. In fact, 
a splitting can be seen as a partition of the adjacencies of a string
(or an interval) such that each part contains only consecutive
adjacencies.

The strategy of the algorithm is to infer more and more information
about a small CSP. To put it another way, it makes more and more
restrictions on the CSP that it tries to construct. To this end, the
algorithm will annotate splittings as follows: a piece
is called \emph{fragile} if it contains at least one breakpoint, and
\emph{solid} if it contains no breakpoint. To simplify the
representation, the algorithm sometimes \emph{merges} consecutive
pieces~$[a_{i},b_{i}]$ and~$[a_{i+1},b_{i+1}]$ (where~$b_i=a_{i+1}$) into one, that is, it
removes~$[a_i,b_i]$ and~$[a_{i+1},b_{i+1}]$ from some splitting and
adds the interval~$[a_i,b_{i+1}]$ to this splitting. 

To further restrict the CSP, the algorithm finds pairs of solid pieces
in~$x$ and~$y$ that are contained in blocks that are matched by the
CSP. Accordingly, a pair of solid pieces~$s$ in~$x$ and~$t$ in~$y$ is
called \emph{matched} in a CSP~$\sol$ if~$s$ is contained
in a block of \sol that is matched to a block that
contains~$t$. Note that 
matched solid pieces may correspond to different parts of their
blocks. For example, one piece may contain the first marker but not
the last marker of its block in~$x$ and it can be matched to a solid
piece that contains the last but not the first marker of its block
in~$y$. Hence, when looking at the two blocks containing the pieces,
there can be a ``shift'' between the matched pieces. We formalize this 
notion by identifying \emph{reference markers}, which are meant to 
be mapped to each other, as follows (see Figure~\ref{fig:alignment} (left) for an example).

\begin{definition}
  Let~$[a,b]$ be a piece of a splitting of~$x$ and~$[c,d]$ be a piece
  of a splitting of~$y$. The \emph{alignment} of~$[a,b]$ and~$[c,d]$
 of \emph{shift}~$\delta$ is the pair of \emph{reference markers}~$a$ and~$\shiftr{c}\delta$, where
\begin{itemize}
 \item $(-\size{ab})\leq \delta \leq \size{cd}$,
 \item $[a,b]\equiv[\shiftr{c}{\delta},\shiftr{c}{(\size{ab}+\delta)}]$, and
 \item 
$[c,d]\equiv[\shiftl{a}{\delta},\shiftl{a}{(\delta-\size{cd})}]$.
\end{itemize}
\end{definition}

Hence, an alignment fixes how the interval~$[a,b]$ is shifted with
respect to~$[c,d]$ in the matched blocks that contain the
intervals. That is, if~$[a,b]$ starts at position~$j$ in its block,
then~$[c,d]$ starts at position~$j-\delta$.  For matched solid pieces,
an alignment thus fixes which markers are matched to each other by the
CSP. In particular, the marker~$a$ is matched to~$\shiftr{c}\delta$ and
$c$ is matched to~$\shiftl{a}\delta$.  Note that the maximum and
minimum values allowed for~$\delta$ ensure that there is at least one
marker in~$[a,b]$ that is matched to a marker in~$[c,d]$ by a CSP
corresponding to this alignment. The algorithm will only consider such
alignments between matched solid pieces. The second condition verifies
that all pairs of matched markers indeed correspond to the same
letter. Clearly, this restriction is fulfilled by every CSP that does
not put breakpoints in the solid pieces~$[a,b]$ and~$[c,d]$.  A pair
of matched solid pieces is called \emph{fixed} if it is associated
with an alignment (equivalently, with a pair of reference markers) and
\emph{repetitive} otherwise (the reason for choosing this term will
be given below).  For a fixed solid piece~$s$, we use~$\rel{s}$ as
shorthand for the uniquely determined reference marker of the
alignment of~$s$ which is in the same string as~$s$.
\begin{figure*}[t]\centering 
   \includegraphics{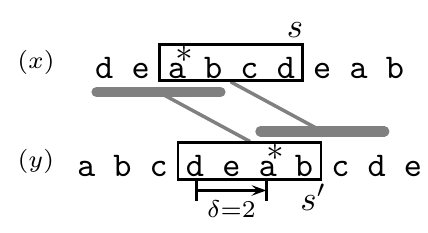}
\hfill
   \includegraphics{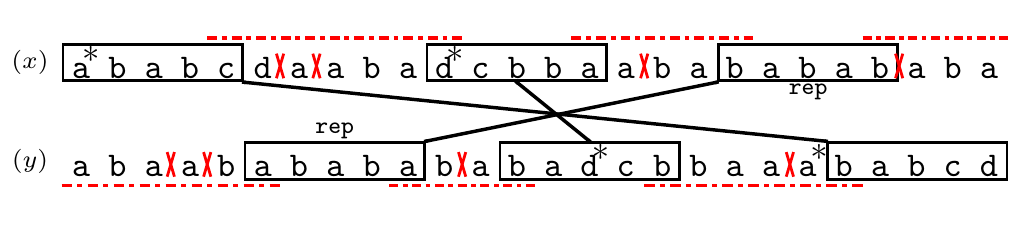}

  \caption{\small 
Left: 
  Example of alignment between two pieces $s$ and $s'$. 
  Reference markers are marked with a star, the shift is 2. 
  Intervals having the same content as the pieces according 
  to this alignment are marked in gray. 
  Note that there also exists an alignment of shift $-3$, 
  where the reference marker in $y$ is the first occurrence 
  of $a$.
Right:~A constraint with three pairs
  of solid pieces illustrated by boxes. Two of these pairs are
  fixed and one is repetitive (\texttt{rep}). 
  Matched solid pieces are linked with edges. 
  The fragile pieces (red and dashed lines) contain
  the breakpoints (red crosses) of a size-5 CSP satisfying the constraint.
}
 \label{fig:example-csp-cons}
 \label{fig:alignment}
\end{figure*}

These restrictions on a possible CSP are summarized in the notion of
constraints, defined as follows, see Figure~\ref{fig:example-csp-cons} 
(right) for an example.
\begin{definition}
  A \emph{constraint} $\cons$ is  a tuple
  $(S,F,M,R_S)$ such that:
  \begin{itemize}
  \item $S$ is a set of solid pieces. Let $S_x$
    ($S_y$) denote the pieces of $S$ from $x$ ($y$).
  \item $F$ is a set of fragile pieces. Let $F_x$
    ($F_y$) denote the pieces of $F$ from $x$ ($y$).
  \item The pieces of $S_x\cup F_x$ ($S_y\cup F_y$) form a
    splitting of $x$ ($y$) in which solid and fragile pieces alternate.
  \item $M:S_x\rightarrow S_y$ is a \emph{matching}, that is, a
    bijection between~$S_x$ and~$S_y$. As shorthand,
    we write~$\til s=M(s)$ if~$s\in S_x$ and~$\til s=M^{-1}(s)$ if~$s\in S_y$.
  \item $R_S$ is a set of alignments that 
    contains for each matched pair of solid pieces at most one alignment.
  \end{itemize}
\end{definition}
Our algorithm will search for CSPs that satisfy such constraints. 
\begin{definition}
  A CSP \sol\ \emph{satisfies} the constraint
  $\cons=(S,F,M,R_S)$ if:
  \begin{enumerate}
  \item All breakpoints of \sol\ are contained in fragile pieces.
    \label{prop:dont-break-solid}
  \item Each fragile piece contains at least one breakpoint from~\sol.
    \label{prop:break-fragile}
  \item Matched solid pieces are contained in matched blocks in~\sol.
  \item If $s$ is a fixed solid piece, then markers~$\rel{s}$
    and~$\rel{\til s}$ are matched in~\sol.
  \item If $s$ is a repetitive solid piece, then $s$, $s'$ and the blocks
    containing them in~\sol{} all have the same shortest periods.
    \label{prop:repetitive-period}				
  \end{enumerate}
\end{definition}
Equivalent formulations of Conditions~\ref{prop:dont-break-solid}
and~\ref{prop:break-fragile} are that (1') all solid pieces are
contained in blocks of~\sol, and (2') different solid pieces in the
same string are in different blocks.  Given a CSP \sol\ that
satisfies a constraint \cons, we call a block \emph{undiscovered by \cons{}} if it does not contain a solid piece
(equivalently, if it is contained in a fragile piece). The other blocks are
 called \emph{discovered by~\cons{}}.

Finally, we introduce the following notion that concerns reference
markers and fixed solid pieces.  
\begin{definition}
  Let $s$ and~$\til s$ be fixed matched solid pieces in~$x$ and~$y$. Two
  markers $a$ in $x$ and $b$ in $y$ are \emph{equidistant from~$s$} if
  $\size{s^* a}=\size{{\til s}^*b}$. Similarly, two intervals~$[a,b]$ in~$x$ and~$[c,d]$ in~$y$ are \emph{equidistant from~$s$} if~$a$ and~$c$ are equidistant from~$s$ and~$b$ and~$d$ are equidistant from~$s$.   
\end{definition}
We will use it to talk about the
``local environment'' of the reference markers in both strings. In
particular, with this notation we can identify (sets of) markers that
are matched to each other if they are both in the same block as
the reference markers.

\paragraph{An Outline of the Algorithm and its Main Method.}
We now give a high-level description of the main idea of the
algorithm; the pseudo-code of the main algorithm loop is shown in
Algorithm~\ref{algo:main-pc}.\footnote{Parts of this algorithm, in
  particular the \splitf procedure follow somewhat the approach of
  Damaschke~\cite{Dam08}.} For the discussion, assume that the
instance is a yes-instance, that is, there exists a~CSP~$\sol$ of
size~$k$. Since we can check in polynomial time the size and
correctness of any CSP before outputting it, we can safely assume that
the algorithm gives no output for no-instances; hence the focus on
yes-instances. The algorithm gradually extends a constraint that is
satisfied by a solution~$\sol$ and outputs~$\sol$
eventually. Initially, the constraint consists solely of two fragile
pieces, one containing all of~$x$ and one all of~$y$. We assume that
the input strings are not identical. Hence, every CSP has at least one
breakpoint and the initial constraint is thus satisfied by every
size-$k$ CSP.

The algorithm now aims at discovering the blocks of~$\sol$
successively, from the longest to the shortest. Recall that a block is
called discovered by a constraint \cons if there is a solid piece in
\cons that is contained in this block. To execute the strategy of
finding shorter and shorter blocks, the algorithm needs some knowledge
about the approximate (by a factor of 2) length of the longest
undiscovered block in~$\sol$. To this end, the algorithm keeps and
updates an integer variable~$\beta$ which has the following central
property: Whenever there is a size-$k$ CSP satisfying the current
constraint, then there is in particular one size-$k$ CSP~$\sol$ such
that
\begin{enumerate} 
\item the longest undiscovered block of~$\sol$ has length~$\ell$
with~$\beta\le \ell< 2\beta$, and
\item $\beta$ is minimum among all powers of two satisfying Property~1.
\end{enumerate}
Accordingly, we call a block~\emph{\betacrit }if it has length~$\ell$
with~$\beta\le \ell< 2\beta$. To obtain~$\beta$, we consider all
subsets~$\Pi'$ of the set~$\Pi$ containing all powers of 2 that are
smaller than~$n$. One of these sets will contain the ``correct''
approximate block lengths. The central strategy is: Set~$\beta$ to be
the largest value in~$\Pi'$. Discover all \betacrit blocks. Then,
there is a satisfying CSP such that all undiscovered blocks are
shorter than the current~$\beta$. Thus update~$\beta$ by taking the
next largest value from~$\Pi'$. Then, again discover all~\betacrit
blocks, update~$\beta$ again and so on. The crucial algorithmic trick
is to use the fact that for the correct~$\Pi'$ we know that there is
at least one \betacrit block and that this block is the largest of all
undiscovered blocks. Hence, this is why we guess~$\Pi'$ instead of
considering the set~$\Pi$ of all powers of two in descending order.
Note that since the values of~$\beta$ are decreasing in the course of
the algorithm, branching into all possible $\Pi'\subseteq \Pi$ in the beginning is
equivalent to performing a branching at the point where the next
$\beta$ is ``needed''. By fixing~$\Pi'$ in advance it is clearer,
however, that the number of branches is $2^{\log n}$ instead of ${\log
  n}^k$.

First, note that there is at least one block of length at
least~$\lceil n/k\rceil$ since~$\sol$ has size~$k$, so $\max \Pi'\geq
\lceil n/2k\rceil$.  Furthermore, for any CSP of size~$k$,~$|\Pi'|\le
k$. Hence, the outer algorithm loop of Algorithm~\ref{algo:main-pc} is
traversed once for the correct~$\Pi'$.
Note furthermore, that the number of subsets of~$\Pi$ is~$O(2^{\log n}) =
O(n)$. Hence, there are~$O(n)$ traversals of the outer loop of the
main method.

\begin{algorithm*}[t]
  \caption{The main algorithm loop \textttup{MCSP}$(x,y,k)$.}
  \label{algo:main-pc}
  \begin{noSpaceTabbing}
    \hspace*{0.7cm}\=\hspace*{0.5cm}\=\hspace*{0.5cm}\=\hspace*{0.6cm}\=\hspace*{5.5cm}\=\kill
    \lii{1} $\Pi:=\{i\in \mathbb{N}\mid i < n \wedge \exists  j\in \mathbb{N}: 2^j=i\}$\\
\lii{2} $\cons:= \{S:=\emptyset, F:=\{[x[1],x[n]],[y[1],y[n]]\} , M:=\emptyset, R_S:= \emptyset\}$ 
\quad \com{initially: only two fragile pieces}\\    
    \lii{3} \textbf{for each} $\Pi'\subseteq \Pi$ with $\max \Pi'\ge \lceil n/2k \rceil \wedge |\Pi'|\le k$  \textbf{:}\\
    \lii{4} \> $\beta \leftarrow \max \Pi'$; $\Pi'\leftarrow \Pi'\cup \{0\}\setminus \{\beta\}$ \> \>\> \com{2-approx. length of longest undiscovered block}\\ 
    \lii{5} \> \textbf{repeat until} $\beta < 4$ \textbf{:}\\
    \lii{6} \> \> \splitf  \> \>  \com{discover blocks of length at least~$\beta$}\\
    \lii{7} \> \> $\beta \leftarrow \max \Pi'$; $\Pi'\leftarrow \Pi'\setminus \{\beta\}$  \> \>  \com{update 2-approx. length of longest undiscovered blocks}\\
    \lii{8}  \> \> \framesf \> \>  \com{reduce length of fragile pieces} \\
    \lii{9} \> branch into all cases to set breakpoints within fragile pieces\\ 
    \lii{10} \> \textbf{if} the resulting string partition~$\sol$ is a size-$k$ CSP \textbf{:}  \textbf{output} $\sol$
  \end{noSpaceTabbing}
\end{algorithm*}

Consider now the traversal for the correct set~$\Pi'$. The inner loop
of the algorithm consists of two main steps. In the first step, called
\splitf, the algorithm discovers the~\betacrit blocks. More precisely,
it refines~\cons{} by breaking fragile pieces into shorter pieces (of length~\piecelen)
and identifying those that are contained in~\betacrit blocks. 
It then produces a matching and, if this is possible without considering too many options, aligns these blocks.

To be efficient, \splitf{} requires that the input fragile pieces are
short enough compared to~$\beta$ and~$k$. Initially, this is not a problem, since the fragile pieces have length~$n$,
and~$\beta\geq n/2k$.  After \splitf, however, we update~$\beta$. Hence, between two calls to \splitf{}
the fragile pieces have to be reduced in order to fit the undiscovered
blocks more ``tightly''.  This is the objective of \framesf, which
uses a set of rules to identify smaller intervals containing all
breakpoints of~\sol. It thus shrinks the fragile pieces of~\cons{} so
that they are sufficiently small for the next call to \splitf.

The algorithm now continues with this process for smaller and smaller
values of~$\beta$. It stops in case~$\beta<4$, since it can then
locate all breakpoints by applying a brute-force branching. Note that
in order to ensure that there is always a $\beta<4$, we add the
value~0 to the set~$\Pi'$ in Line~4 of the main method.

In the remainder of this work, we give the details for the
procedures~\splitf and \framesf. In Section~\ref{sec:split}, we
describe the~\splitf procedure, and show its correctness. We also
show, using several properties of~\framesf as a black box, our main
result. Then, in Sections~\ref{sec:frames-aligned}
and~\ref{sec:frames-repetitive}, we fill in the blanks by proving the
properties of~\framesf. 

The algorithm is a branching algorithm that extends the
constraint~$\cons$ in each branch. In order to simplify the
pseudo-code somewhat, we describe the algorithm in such a way that the
variables~$\cons$ and $\beta$ are global variables. 
After a branching statement in the pseudo-code, the algorithm
continues in each branch with the following line of the
pseudo-code. If a  branch is known to be unsuccessful, then
the algorithm returns immediately to the branching statement that
created this branch (or to the branching statement above, if the current branch is the last branch of that statement). We denote this by the ``abort branch'' command; 
all modifications within this branch are
undone. 

\section{Splitting of Fragile Pieces}
\label{sec:split}
In this section, we describe the procedure \splitf and show its
correctness. The pseudo-code of \splitf is shown in
Algorithm~\ref{algo:split-pc}. At the beginning of~\splitf the
constraint contains a set of discovered blocks. Assume that all blocks
of length at least~$2\beta$ are discovered by this constraint. The aim
of~\splitf now is to perform a branching into several cases such that
in at least one of the created branches the
constraint~$\cons$ now additionally contains all~\betacrit
blocks. Hence, in this branch all blocks of length at least~$\beta$
are discovered.
\begin{algorithm*}[t]
  \caption{Procedure \splitf. Global variables: $\cons=(S,F,M,R_S)$ and~$\beta$.}
  \label{algo:split-pc}
  \begin{noSpaceTabbing}
    \hspace*{0.7cm}\=\hspace*{0.5cm}\=\hspace*{0.5cm}\=\hspace*{0.6cm}\=\hspace*{5.5cm}\=\kill
    \lii{1} $N:= \emptyset $\> \> \> \>  \com{the set of new pieces}\\   
    \lii{2} \textbf{for each} fragile piece $f\in F$ \textbf{:}\\
    \lii{3}  \> $F\leftarrow F\setminus \{f\}$  \>  \>  \> \com{old  fragile pieces are removed}\\
    \lii{4}  \> $N\leftarrow N\; \cup $ ``$\lceil  \beta /3  \rceil$-splitting of $f$''  \> \> \>  \com{update set of new pieces} \\
    \lii{5} \textbf{for each} $p\in N$~\textbf{:} \>\>\>\> \com{make $p$ either fragile or solid}\\  
    \lii{6}  \> \> branch into the case that either~$S\leftarrow S\cup \{p\}$ or $F\leftarrow F\cup \{p\}$ \\
    \lii{7}  \textbf{while} $\exists$ consecutive pieces~$s_1, s_2$ s.t.~$
\{s_1, s_2\}\subseteq S$ (or~$\{s_1, s_2\}\subseteq F$) \textbf{:}\\
    \lii{8} \> $p:=$ ``merged interval of~$s_1$ and~$s_2$''\\
    \lii{9}  \> $S\leftarrow (S\cup p) \setminus \{s_1, s_2\}$ (or $F\leftarrow (F\cup p) \setminus \{s_1, s_2\}$)\\
    \lii{10}  \textbf{if} $|S_x|\neq |S_y|$ \textbf{: abort branch}  \> \> \> \> \com{no bijection of solid pieces exists}  \\
    \lii{11}  \textbf{if} $|F_x|\ge k$ or $|F_y|\ge k$ \textbf{: abort branch} \> \> \> \> \com{too many fragile pieces in~$x$ (or~$y$)}  \\    
    \lii{12}  \textbf{while} $\exists$ unmatched solid piece~$s\in S_x$ \textbf{:}  \\
    \lii{13} \>  \textbf{for each} unmatched solid piece~$t$ in~$S_y$ \textbf{:}\\ 
    \lii{14} \> \> branch into the case that~$M(s):=t$\\
    \lii{15}  \textbf{for each} new pair~$(s,t)$ of matched solid pieces \textbf{:}\\
    \lii{16} \> $i:=$ ``number of alignments with shift $\delta$ s.t.~$|\delta|\leq \piecelen$''\\
    \lii{17} \> \textbf{if} $i\le 6$ \textbf{:}  for each alignment~branch into the case to add this alignment to~$R_S$\\
    \lii{18} \> \textbf{else:} branch into the cases to: \> \>  \> \com{$s$ and~$\til s$ are periodic}\\
    \> \> \> -- align~$s$ and~$\til s$ such that~$\leftbreak{s}$ and~$\leftbreak{\til s}$ are equidistant from~$s$ \\ 
    \> \> \> -- align~$s$ and~$\til s$ such that~$\rightbreak{s}$ and~$\rightbreak{\til s}$ are equidistant from~$s$\\
    \> \> \> -- do not align~$s$ and~$\til s$
\end{noSpaceTabbing}
\end{algorithm*}
Procedure \splitf{} starts by replacing each former fragile piece 
by a splitting where all new pieces have length~\piecelen{} except for the
 rightmost new piece of each such splitting which can be shorter. We call such a
splitting a \emph{$\piecelen$-splitting}. It then considers all branches
where each piece is either fragile or solid. In order to maintain the 
alternating condition, consecutive solid (resp. fragile) pieces are
merged into one solid (fragile) piece, Lines~7--9.

Next, \splitf extends the matching and the set of
alignments of the constraint.  All possible matchings are considered in separate
branches (Lines~12--14).  Then, \splitf{} performs an exhaustive
branching over all alignments for a given pair of solid pieces, but
only if there are very few of them (Line~17). If there are too many
(Line~18), then it can be seen that the pieces are periodic with a
short period length. Thus, the blocks containing them might be periodic as
well.  If the blocks are not periodic, then there are at most two
alignments that the algorithm needs to consider: informally, the
period in the blocks can be ``broken'' either to the left or to the
right of the pieces.  To specify these two possibilities more clearly,
we introduce the following notation. Let~$s=[a,b]$ be an interval in a
string~$x$ such that~$s$ has a shortest period~$\pi$. Then, we denote
by~$\leftbreak{s}$ the rightmost marker in~$x$ such
that~$[\leftbreak{s},b]$ does not have period~$\pi$. Similarly, let
$\rightbreak{s}$ be the leftmost marker in~$x$ such
that~$[a,\rightbreak{s}]$ does not have period~$\pi$.  If the blocks
are periodic, there may be too many possible alignments, and the
alignment between the pieces will be fixed at a later point (when
$\beta$ becomes smaller than the period). However, the algorithm will
use the ``knowledge'' that the blocks are periodic in the \framesf
procedure.

We now show that \splitf is correct if the input constraint can be
satisfied and that it discovers all \betacrit{} blocks. 
\begin{lemma} \label{lem:split-correct} Let~$\cons$ be the constraint
  at the beginning of~\splitf, and let~\sol be a size-$k$ CSP satisfying \cons
  such that all blocks of length at least~$2\beta$ of \sol are
  discovered by~$\cons$. Then, \splitf creates at least one 
  branch whose constraint $\cons$
  \begin{itemize}
  \item is satisfied by $\sol$, and
  \item all blocks of length at least~$\beta$ are discovered by $\cons$
  \end{itemize}
\end{lemma}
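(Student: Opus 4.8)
The plan is to trace through the three phases of \splitf---the $\piecelen$-splitting with fragile/solid labelling (Lines~1--9), the matching extension (Lines~10--14), and the alignment extension (Lines~15--18)---and to exhibit at each branching point the choice that keeps a (possibly modified) size-$k$ CSP satisfying the constraint. First I would fix the solution~$\sol$ given in the hypothesis and observe where its breakpoints fall after the $\piecelen$-splitting of the old fragile pieces: since by assumption every block of length at least~$2\beta$ is already discovered, every \betacrit\ block (length in $[\beta,2\beta)$) is currently inside an old fragile piece. Each new piece of length~$\piecelen=\lceil\beta/3\rceil$ is then short enough that a \betacrit\ block, being at least~$\beta$ markers long, must contain at least one new piece entirely in its interior; I would pick the branch in Line~6 that labels such a piece \emph{solid} and every piece containing a breakpoint of~$\sol$ \emph{fragile}. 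The merging in Lines~7--9 preserves Conditions~\ref{prop:dont-break-solid} and~\ref{prop:break-fragile}, and after it each \betacrit\ block contains a (now possibly merged) solid piece, i.e.\ is discovered; this is the crux of the ``discovers all \betacrit\ blocks'' part of the statement. I also need to check that this branch survives the abort tests in Lines~10--11: $|S_x|=|S_y|$ because solid pieces are in bijection with discovered blocks (using formulation (1')--(2')), and $|F_x|,|F_y|<k$ because $\sol$ has only $k$ blocks, hence fewer than $k$ fragile pieces in each string.

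Next I would handle the matching and alignment branchings. For the matching (Lines~12--14): in the chosen branch the solid pieces correspond exactly to the discovered blocks of~$\sol$, and~$\sol$'s block matching~$M$ induces the matching to pick in each branch of Line~14, so Condition~3 of ``satisfies'' holds. For the alignments (Lines~15--18) I would split on the value of~$i$, the number of alignments of small shift. If~$i\le 6$, one of these alignments is the one realized by~$\sol$ (its reference markers are matched markers at the appropriate offset), and I pick that branch, establishing Condition~4; note the shift is bounded by~$\piecelen$ because the solid pieces sit inside blocks of length~$<2\beta$ in the relevant direction, so $\sol$'s alignment is indeed among the $i$ enumerated. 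If~$i>6$, I would invoke \Cref{lem:periodicity} (the periodicity lemma) to argue that having so many valid small-shift alignments forces both~$s$ and~$\til s$ to have a short period~$\pi$; then I distinguish whether the blocks containing~$s$ and~$\til s$ in~$\sol$ share this period. If not, I argue the block must ``break'' the period either just left or just right of the piece, which is exactly one of the first two sub-branches of Line~18 (matching $\leftbreak{\cdot}$ or $\rightbreak{\cdot}$ up to equidistance), and this additional alignment is consistent with~$\sol$; if the blocks are periodic, the ``do not align'' branch is taken, and~$\sol$ with these pieces marked repetitive satisfies Condition~\ref{prop:repetitive-period}.

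The one place where I expect to actually need to modify~$\sol$ rather than just select a branch is where two distinct \betacrit\ blocks of~$\sol$, or a \betacrit\ block and an already-discovered block, could be forced by the splitting into configurations the constraint cannot express---for instance when the blocks are periodic and the new solid piece could be ``slid'' within its periodic block. I would deal with this by a standard exchange argument: if the natural choice of branch is not satisfied by~$\sol$, there is an alternative size-$k$ CSP~$\sol'$, obtained by shifting periodic blocks so that their breakpoints land in the pieces we labelled fragile, that has the same multiset of blocks (hence the same size) and does satisfy the chosen branch. Making this exchange argument precise---identifying exactly which blocks may need to be shifted, and checking that the shifts can be done simultaneously and consistently across all affected blocks in both strings---is the main obstacle; the rest is bookkeeping that follows the definitions of splitting, constraint, and ``satisfies''.
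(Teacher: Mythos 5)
Your overall route is the same as the paper's: walk through the three branching stages, exhibit the good branch at each, and verify the five conditions of ``satisfies''. Two localized slips in the first two stages. (a) Your branch for Lines~5--6 is underspecified: you commit only one designated piece per \betacrit{} block to~$S$ and the breakpoint-containing pieces to~$F$, leaving the remaining pieces (those inside a \betacrit{} block but not designated, and those inside shorter undiscovered blocks) unassigned. The paper's branch puts \emph{every} new piece contained in an undiscovered \betacrit{} block into~$S$ and everything else into~$F$; this is what makes Condition~2 provable, via the argument that a merged fragile piece adjacent to a new solid piece~$s$ must contain the breakpoint at the end of the block containing~$s$. (b) Your justification of the shift bound, ``the solid pieces sit inside blocks of length $<2\beta$'', only yields $|\delta|=O(\beta)$, not $|\delta|<\piecelen$. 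The correct reason is that the splitting pieces immediately preceding $s$ and $\til s$ are fragile, hence each contains a breakpoint, so the leftmost markers of $s$ and $\til s$ each lie within $\piecelen-1$ positions of the start of their respective blocks.

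The two more substantial issues are in the last stage. First, the $i>6$ case is where the real work of the proof lies, and your sketch omits its key geometric ingredient: one must show that each new solid piece contains the length-$\piecelen$ \emph{middle} of its block (which follows from $\size{u\hat u}\geq\piecelen-2$ together with the fragile left neighbor containing a breakpoint). This containment is what (i) upgrades ``$s$ and $\til s$ each have a short period'' to ``they have a \emph{common} shortest period~$\pi$'' via Lemma~\ref{lem:periodicity}, and (ii) lets you conclude that when the blocks are not $\pi$-periodic, $\sol$ matches $\leftbreak{s}$ to $\leftbreak{\til s}$ or $\rightbreak{s}$ to $\rightbreak{\til s}$; without it, the ``period breaks just left or just right of the piece'' step does not go through. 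Second, your closing paragraph locates the difficulty in the wrong place: the lemma asserts the good branch is satisfied by the \emph{same}~$\sol$, and the paper's proof never modifies~$\sol$. The slidable-periodic-block scenario you worry about is precisely what the ``do not align'' branch together with Condition~\ref{prop:repetitive-period} is designed to absorb; an exchange argument producing some other~$\sol'$ would prove a weaker statement and is unnecessary here (such shifting arguments do appear, but later, in the correctness proofs for \framesf).
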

\begin{proof} 
  Let~$B = \{(x^1,y^1), \ldots ,(x^\ell,y^\ell)\}$ be the uniquely
  defined set of matched pairs of undiscovered blocks in~$\sol$ that
  are~\betacrit.

  Consider the following branching for Lines~5--6 for each piece~$p\in N$: If~$p$ is  contained in some block~$x^i$ or~$y^i$
  of~$B$, then branch into the case that~$p$ is added
  to~$S$. Otherwise, branch into the case that~$p$ is added to~$F$ 
	(note that we may add in $F$ some pieces that do not contain any breakpoint, 
	but are contained in blocks not in $B$).
  
  Now consider the constraint obtained for the above branching after
  the merging operations performed in Lines~7--9. We show that~$\sol$
  satisfies Conditions~1 and~2 of this constraint. First, consider a
  breakpoint in~$\sol$. This breakpoint is contained in some fragile
  piece~$f$ of the input constraint since~$\sol$ satisfies this input
  constraint. Hence, it is contained in some new piece~$p$ of the
  splitting of this fragile piece. Clearly, the piece~$p$ is added
  to~$F$ in the considered branching. Moreover, in case~Lines~7--9
  merge fragile pieces, the resulting piece is also fragile, hence~$p$
  remains in a fragile piece. Consequently, all breakpoints of~$\sol$
  are in fragile pieces of~$F$, and thus Condition~1 is satisfied by~$\sol$.

  Now consider a fragile piece~$f\in F$ after Lines~7--9 of the
  algorithm. We show that~$f$ contains at least one breakpoint. Note
  that~$f$ is obtained after a (possibly empty) series of merging
  operations. After the merging,~$f$ is between two solid
  pieces. If~$f$ is also a fragile piece in the input constraint,
  then~$f$ contains a breakpoint since~$\sol$ satisfies the input
  constraint. Otherwise,~$f$ is contained in a fragile piece of the
  input constraint, and at least one of its neighbor pieces is a new
  solid piece~$s$. Since~$f$ (or all the smaller pieces that were merged
  to~$f$) are added to~$F$ by the branching, they are not contained in
  the block that contains~$s$. Hence,~$f$ contains the
  breakpoint between the first (or last) marker of the block
  containing the new solid piece and its predecessor (or
  successor). Thus,~Condition~2 is also satisfied by~$\sol$.

  Note that the above also implies that, for each~$x^i$ of~$B$, there
  is exactly one new solid piece that is contained
  in~$x^i$. Similarly, for each~$y^i$ of~$B$, there is exactly one new
  solid piece that is contained in~$y^i$. Note that in this branching,
  $|S_x|=|S_y|$ and furthermore, since~\sol has size~$k$, $|F_x|< k$
  and~$|F_y|< k$. Hence, the algorithm does not abort in Lines~10
  and~11. We now consider the branching in which for each
  pair~$(x^i,y^i)$, the two corresponding solid pieces are matched to
  each other. Clearly, this branching fulfills Condition~3: the
  condition holds obviously for all pieces contained in blocks
  of~$B$. Furthermore, it holds for all old solid pieces since for
  these, the matching~$M$ has not changed. Note that the function~$M$
  also remains a bijection: it is changed only for unmatched solid
  pieces, and the number of new solid pieces in~$x$ and~$y$ is equal.

  It remains to show that there is a branching in which Conditions~4
 and~5 also hold. 
  Consider a pair of matched solid pieces~$s$ and $s'$,
 and the blocks $x^i$, $y^i$ containing them. We use the following 
 technical claim in order to clarify the discussion; it will be proven afterwards.
 \begin{quote}
   \emph{Fact.} If there are more than six alignments of~$s$ and~$\til s$ whose
   shift each has absolute value of at most $\piecelen$, then
   \begin{itemize}
   \item[i.] $s$ and~$\til s$ are periodic with a common shortest
     period~$\pi$ (with $\length{\pi}\leq\piecelen/2$);
   \item[ii.] if the blocks $x^i$ and $y^i$ do not have period $\pi$,
     then in \sol\ either~$\leftbreak{s}$ is matched
     to~$\leftbreak{\til s}$, or~$\rightbreak{s}$ is matched
     to~$\rightbreak{\til s}$ (or both).
   \end{itemize}
 \end{quote}
  Let $a$ be the leftmost marker of $s$ and $\tilde a$ be the marker matched to~$a$ 
  in~\sol{}. Then~$\{a,\tilde a\}$ is an alignment for~$(s,s')$
  whose shift has absolute value less than~$\piecelen$:
  there are at most~$\piecelen-1$ markers preceding
  either~$s$ or~$s'$ that can belong to the same
  block since the pieces of the $\piecelen$-splitting
  preceding~$s$ and $s'$ are fragile and thus not contained in the
  same blocks. 	
  If the condition of Line~17 is satisfied, then there is one branch
  where alignment~$\{a,a'\}$ is added to~$R_S$. Otherwise, by 
  the fact above, the following cases 
  are possible. Either~$\{a,a'\}$ is one of the alignments where
 \leftbreak{s} is matched to \leftbreak{s'} or \rightbreak{s}
 is matched to \rightbreak{s'}, in which cases~$\{a,a'\}$ is added to 
 $R_S$ in one of the branches. Otherwise,~$(s,s')$ is not fixed, and~$s$ 
 and~$s'$ are contained in blocks having the same shortest periods.

 Altogether this shows the first claim of the lemma.
 The second claim can be seen as follows. The blocks of
 length~$\ell\ge 2\beta$ are already discovered, and the corresponding
 solid pieces remain in the constraint. It thus remains to consider
 the~\betacrit blocks. We show that for each~$x^i$ there is at least
 one piece that is contained in~$x^i$. Consider the marker~$a$ at
 position~$\piecelen$ in~$x^i$ and a piece~$s$ of the
 \piecelen-splitting that contains this marker.  Then~$s$ contains
 only markers from~$x^i$ since~$s$ has length at most~$\piecelen$
 and~$x^i$ has length at least~$\beta\geq2\piecelen$ (for $\beta\geq
 4$).  Afterwards, $s$ is only merged with other pieces that are
 contained in~$x^i$ (recall that in the considered branching there is
 a fragile piece between all solid pieces from different
 blocks). Hence, the second claim of the lemma also holds.  

It remains to show the correctness of the claimed fact. We first need to prove the following claim. 
Define  the \emph{$\piecelen$-middle} of an interval~$[u,v]$ as 
the length-$\piecelen$ interval centered in $[u,v]$ (formally, the 
interval $[\hat u,\hat v]$ with 
$\hat u=\shiftr u{\lfloor(\size{uv}-\piecelen)/2\rfloor}$ and 
$\hat v= \shiftl v{\lceil(\size{uv}-\piecelen)/2\rceil}$).
Then $s$ contains the $\piecelen$-middle of $x^i$
and $s'$ contains the $\piecelen$-middle of $y^i$.

The claim is shown for $s=[a,b]$ (the proof for $s'$ is similar).
Let $x^i=[u,v]$, and let~$[\hat u,\hat v]$ be the $\piecelen$-middle of $x^i$.
First note that since~$x^i$ has length at least~$\beta$, 
we have $\size{uv}\geq \beta-1$.
We show that
$a$ is in the interval $[ u,\hat u]$. The length of this interval is 
\begin{align*}
\size{u\hat u}
 &=\lfloor(\size{uv}-\piecelen)/2\rfloor\\
 &\geq \lfloor(\beta-1-\piecelen)/2\rfloor\\
 &\geq \lfloor(\lfloor2\beta/3\rfloor-1)/2\rfloor\\
 &\geq \lfloor(2\beta/3-1.7)/2\rfloor\\
 &\geq \lfloor\beta/3-0.85\rfloor\\
 &\geq  \piecelen-2. 
\end{align*}
Since the piece with right endpoint $a$ in the $\piecelen$-splitting is fragile 
(it has not been merged with $s$), 
it contains a breakpoint of \sol{} and hence a marker strictly to the left of $u$. 
Moreover it has length at most $\piecelen$, 
so $\size{ua}\leq \piecelen-2$, which implies that $a$ is in the interval $[u,\hat u]$. 
Similarly, 
$b$ is in the interval $[\hat v,v]$, 
and $[a,b]$ contains the \piecelen-middle of $x^i$.

We can now turn to proving the two statements of the fact.

(i)  Let $s=[a,b]$, $s'=[a',b']$ and  
$\delta_1,\delta_2,\ldots,\delta_m$ be
the shifts of the $m\geq 7$ alignments
such that
$-\piecelen\leq \delta_1\leq \delta_2\leq\ldots\leq \delta_m\leq \piecelen$.
Let $i$ be the index such that $\delta_{i+1}-\delta_i$ is minimal, 
and $p=\delta_{i+1}-\delta_i$. We thus have 
\begin{align*}
p\leq\frac{ 2\piecelen }{m-1} \leq \piecelen/2
\end{align*}
Let $q$ be an integer with $p\leq q \leq \size{ab}$. 
Using the second condition in the definition
of alignment, we have 
\begin{align*}
 \shiftr aq &\equiv \shiftr {a'}{(\delta_i+q)}\quad\mbox{(since $\shiftr aq\in[a,b]$)}\\
      &= \shiftr {a'}{(\delta_{i+1}+q-p)}\\
      &\equiv \shiftr a{(q-p)}\quad\mbox{(since $\shiftr a{(q-p)}\in[a,b]$)}
\end{align*}
Thus intervals $[a,b]$ and (symmetrically) $[a',b']$ 
are both periodic with period length $p$, hence
$s$ and $s'$ have shortest periods of length at most $\piecelen/2$.

Using the fact that $s$ and $s'$ both contain the
\piecelen-middle of the matched blocks in which they are contained,
they have a common substring of length greater than the sum of their 
shortest periods. By~Lemma~\ref{lem:periodicity} they thus have a common shortest period, written 
$\pi$, with $\length{\pi}\leq\piecelen/2$.

(ii)
Recall that $x^i$ ($y^i$) is the block containing $s$ ($s'$) in \sol.
Let $[\hat u,\hat v]$ ($[\hat u',\hat v']$) be the \piecelen-middle of  $x^i$ ($y^i$).
Since $[\hat u,\hat v]\subset s$ and $\length{[\hat u,\hat v]}\geq \length{\pi}$, 
we have that $\leftbreak{s}$ is the rightmost marker in~$x$ and
 $\rightbreak{s}$ is the leftmost marker in~$x$ such
that intervals~$[\leftbreak{s},\hat v]$ and $[\hat u,\rightbreak{s}]$ do not have period~$\pi$.
The same holds for $\leftbreak{\til s}$ ($\rightbreak{\til s}$) and $\hat v'$ ($\hat u'$).

Now if $x^i$ does not have period $x_i$ it either contains \leftbreak{s} or \rightbreak{s}.
Suppose that $x^i$ contains  \leftbreak{s} (the case where
$x^i$ contains \rightbreak{s} is similar).
Let~$l'$ be the marker in $y^i$ matched to \leftbreak{s} by \sol.
Then~$[l',\hat v']\equiv [\leftbreak{s},\hat v]$. Hence,~$[l',\hat v']$ does not have period $\pi$. Furthermore, by the definition of~$\leftbreak{s}$, 
 for all $m'\in[\shiftr{l'}1,\hat v']$, $[m',\hat v']$ has period $\pi$.
Thus, $l'$ is the rightmost marker such that $[l',\hat v']$ does not have period $\pi$. Since $[\hat u',\hat v']\subset \til s$ and since $\til s$ has the same shortest period~$\pi$ as $[\hat u',\hat v']$ we have $l'=\leftbreak{\til s}$.
Hence, markers \leftbreak{s} and \leftbreak{\til s} are matched in \sol. 
\end{proof}

The following trivial observation follows from the check in~Line~11 of
\splitf. It is useful for bounding the running time of \splitf (in
particular for later calls to \splitf).
\begin{Observation}\label{obs:piece-number}
  After \splitf has finished, the constraint contains at most~$2k-2$ fragile pieces from each of~$x$ and~$y$. The overall number of solid pieces is thus at most~$2k$.
\end{Observation}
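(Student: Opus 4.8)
The plan is to trace through \splitf{} and read off the bound directly from the places where fragile pieces are created, merged, or discarded, together with the abort check in Line~11. First I would observe that after the whole-procedure execution completes (without aborting), the constraint's pieces in~$x$ form an alternating splitting $f_0, s_1, f_1, s_2, \ldots$ of fragile and solid pieces (and likewise in~$y$), so it suffices to bound the number of fragile pieces in each string; the bound on solid pieces then follows since between any two consecutive solid pieces there is a fragile piece, so $|S_x|\le |F_x|+1$ and symmetrically $|S_y|\le |F_y|+1$, and by Line~10 we have $|S_x|=|S_y|$.

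Next I would invoke Line~11: if at that point $|F_x|\ge k$ or $|F_y|\ge k$ the branch is aborted, so in every surviving branch $|F_x|\le k-1$ and $|F_y|\le k-1$ at the moment Line~11 is executed. The remaining work is to check that no subsequent line of \splitf{} increases the number of fragile pieces. Lines~12--18 only extend the matching $M$ and the alignment set $R_S$; they neither add nor remove pieces. The merging while-loop in Lines~7--9 runs before Line~11 and can only decrease the piece count. Hence the count of fragile pieces at the end of \splitf{} equals the count at Line~11, which is at most $k-1$ in each string. Actually, a slightly sharper accounting is possible: since the constraint is an alternating splitting and its first and last pieces are fragile (the initial constraint has only fragile pieces, and splitting/merging preserves that the extreme pieces are fragile — or, failing that symmetry argument, one simply notes the claimed bound is $2k-2$, which already follows from $|F_x|\le k-1$ when $k\ge 1$ gives $2k-2\ge k-1$ only for $k\ge 1$, so I would instead argue directly). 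Concretely, I would argue that each surviving branch has $|F_x|\le k-1 \le 2k-2$ and $|F_y|\le k-1\le 2k-2$, which is the stated ``at most $2k-2$ fragile pieces from each of $x$ and $y$''; then $|S_x|=|S_y|\le (k-1)+1 = k$, so the total number of solid pieces is $|S_x|+|S_y|\le 2k$.

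The only genuinely non-routine point — and the one I would spend a sentence on — is justifying that the piece-count at the end of \splitf{} is the same as at Line~11, i.e., that no line after Line~11 reintroduces fragile pieces; this is immediate from inspection of Lines~12--18, which touch only $M$ and $R_S$. A secondary point worth stating is why the alternating structure is maintained throughout, so that the solid-piece bound follows from the fragile-piece bound: the $\piecelen$-splitting in Lines~2--4 followed by the merging in Lines~7--9 is precisely designed to produce an alternating splitting, and this is guaranteed by the definition of a constraint which \splitf{} maintains. I expect no real obstacle here; the observation is ``trivial'' as the paper says, and the proof is essentially a pointer to Line~11 plus the remark that later lines do not create pieces.
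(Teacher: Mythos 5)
Your proof is correct and is essentially the paper's own argument: the paper gives no separate proof and simply notes that the observation ``follows from the check in Line~11 of \splitf'', which is exactly your core step ($|F_x|\le k-1$ and $|F_y|\le k-1$ in every surviving branch, later lines only touch $M$ and $R_S$, and alternation gives $|S_x|,|S_y|\le k$). The detour about the first and last pieces being fragile is unnecessary, but you correctly discard it and the direct bound you fall back on suffices.
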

To obtain a fixed-parameter algorithm for parameter~$k$, we now
``shrink'' the fragile pieces between the solid pieces of the
constraint. This will ensure that in the next call to
\splitf, the number of new pieces created in the splitting will be 
bounded by a function of~$k$. 
Note that by
Lemma~\ref{lem:split-correct},~\splitf has discovered \emph{all}
pieces that have length at least~$\beta$. Hence, we now update the
value~$\beta$ denoting the approximate length of the longest undiscovered
blocks (by taking the largest remaining value from~$\Pi$).  Then,
\framesf uses this updated value of~$\beta$ to shrink the fragile
pieces. For the moment, we make some claims about \framesf; their
proof is deferred to the Sections~\ref{sec:frames-aligned}
and~\ref{sec:frames-repetitive}. First, we claim that~\framesf is correct, that is, there is at least one good branching for yes-instances.
\begin{restatable}{lemma}{windowslem} \label{lem:windows-correct} If
  there exists a size-$k$ CSP \sol\ satisfying \cons\ at the beginning
  of~\framesf such that longest undiscovered block is~\betacrit, then
  \framesf creates at least one branch such that the
  constraint in this branch is satisfied by a size-$k$
  CSP~$\sol'$ whose longest undiscovered block has length at most
  $2\beta-1$.
\end{restatable}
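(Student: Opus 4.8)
## Proof Proposal for Lemma~\ref{lem:windows-correct}

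The plan is to work from the hypothesized CSP~$\sol$ and produce the target CSP~$\sol'$ by tracking which branch of \framesf{} is ``correct,'' in the same style as the proof of Lemma~\ref{lem:split-correct}. Since \framesf{} is specified only through a set of rules (the \textsc{Frame Rule} and \textsc{Fitting Rule} environments are declared but not yet used in this excerpt), I would organize the argument around the invariant that each rule preserves, rather than around the rule mechanics. Concretely: fix~$\sol$ satisfying~$\cons$ with longest undiscovered block \betacrit. For each individual rule application that \framesf{} performs, I would show that there is a branch in which the rule shrinks some fragile piece to a subinterval that still contains all breakpoints of \emph{some} size-$k$ CSP satisfying the (updated) constraint; crucially I must allow the CSP to change, since a rule may exploit periodicity of a repetitive block and thereby force a ``canonical'' placement of breakpoints that differs from~$\sol$'s placement. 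So the real object I maintain along the execution of \framesf{} is: a size-$k$ CSP~$\sol^{(t)}$ satisfying the constraint~$\cons^{(t)}$ after the $t$-th rule application, with the property that $\sol^{(t)}$'s longest undiscovered block is still \betacrit{} (or has already shrunk below $2\beta$, which is even better). At the end I set~$\sol' := \sol^{(\text{final})}$.

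The key steps, in order, would be: (1) Set up the invariant above and note that it holds initially with~$\sol^{(0)} = \sol$. (2) For each type of rule, prove a one-step lemma: given~$\sol^{(t)}$ satisfying~$\cons^{(t)}$, there is a branch of the rule producing~$\cons^{(t+1)}$ and a CSP~$\sol^{(t+1)}$ satisfying it with the invariant preserved and with no fragile piece enlarged. For ``aligned'' rules (those acting near fixed solid pieces, handled in Section~\ref{sec:frames-aligned}) this should be a relatively direct argument: the reference markers~$\rel s$, $\rel{\til s}$ are matched in~$\sol^{(t)}$ by Condition~4, the equidistant-from-$s$ relation then pins down where the surrounding blocks extend to, and a fragile piece that provably contains no breakpoint of~$\sol^{(t)}$ on one side can be trimmed — here one can take~$\sol^{(t+1)} = \sol^{(t)}$. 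For ``repetitive'' rules (Section~\ref{sec:frames-repetitive}), I must use Condition~5: $s$, $s'$, and their blocks share a shortest period~$\pi$, and~Lemma~\ref{lem:periodicity} controls how far periodicity can propagate; here I may need to replace~$\sol^{(t)}$ by a rotated/re-cut CSP of the same size whose breakpoints sit in the canonical position the rule expects. (3) Argue termination: \framesf{} performs finitely many rule applications (each strictly shrinks total fragile length or makes progress by some measure), so the chain~$\sol^{(0)}, \sol^{(1)}, \dots$ is finite. (4) Finally, argue the length bound: after \framesf{} finishes, every undiscovered block of~$\sol'$ — i.e., every block contained in a fragile piece — fits in a fragile piece, and since $\beta$ has been updated to the next power of two below the old one and the old longest undiscovered block was $<2\beta_{\text{old}}$, combined with whatever explicit length guarantee the framing rules give on the shrunk fragile pieces, we get longest undiscovered block $\le 2\beta - 1$. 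Here I would lean on the promised property of \framesf{} that it shrinks fragile pieces ``tightly'' to fit blocks of length~$<2\beta$.

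The main obstacle I expect is step~(2) for the repetitive case: showing that when a framing rule uses the periodicity of a repetitive block to trim a fragile piece aggressively, there genuinely remains a size-$k$ CSP~$\sol^{(t+1)}$ with its breakpoints inside the trimmed pieces. The subtlety is that the rule's correctness relies on the \emph{block} being periodic, not just the solid piece, and translating ``$\sol$ has a breakpoint somewhere in this long periodic stretch'' into ``$\sol$ can be modified to have that breakpoint at the specific period-aligned spot the rule wants'' requires a re-cutting argument: one shows that shifting the cut within a periodic run yields blocks with identical contents, hence a valid CSP of the same size, and that this re-cutting does not disturb already-discovered solid pieces (because those are period-compatible by Condition~5) nor increase the number of blocks. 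Making this re-cutting simultaneously consistent across both strings~$x$ and~$y$ — so that the matching~$M$ is preserved — is the delicate point, and I would expect to need the equidistant-from-$s$ machinery together with~Lemma~\ref{lem:periodicity} to force the shifts on the~$x$ side and the~$y$ side to agree. A secondary, more bookkeeping-type obstacle is keeping track of the invariant ``longest undiscovered block is \betacrit'' across rule applications, since \framesf{} does not discover new blocks but only trims fragile pieces; I would phrase the invariant as ``$\le$ longest undiscovered block of~$\sol$'' throughout and only invoke the \betacrit{} hypothesis at the very start and when converting to the final $2\beta-1$ bound.
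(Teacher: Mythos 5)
Your proposal matches the paper's approach: the paper defines a frame rule to be \emph{correct} precisely when it preserves the existence of a (possibly re-cut) size-$k$ satisfying CSP without lengthening any undiscovered block, proves each rule correct individually (including the periodic re-cutting/shifting arguments you anticipate for the repetitive case), and then proves Lemma~\ref{lem:windows-correct} as the one-paragraph composition of these per-rule guarantees plus the exhaustiveness of the alignment branching in Lines~8--11. Your closing remark is the right reading of the length bound --- it follows from the \betacrit{} hypothesis together with the monotonicity of undiscovered-block lengths under the rules, not from the fragile-piece length guarantee, which is the separate Lemma~\ref{lem:frame-size}.
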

Second,~\framesf increases the exponential part
of the running time by a factor that depends only on~$k$.

\begin{restatable}{lemma}{windowsrt} \label{lem:windows-rt} Overall, the calls to \framesf create~$\windowstime$ branches;
  all other parts of the algorithm can be performed
  in~$\poly(n)$ time.
\end{restatable}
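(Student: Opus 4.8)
The plan is to prove both Lemma~\ref{lem:windows-correct} and Lemma~\ref{lem:windows-rt} later (in Sections~\ref{sec:frames-aligned} and~\ref{sec:frames-repetitive}), but here I sketch how Lemma~\ref{lem:windows-rt} would be established once \framesf is fully specified. The statement has two parts: a bound of \windowstime{} on the total number of branches created by all calls to \framesf, and the claim that every other part of the algorithm runs in polynomial time. The plan is to attack the second (easy) part first and the first (hard) part second.

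\paragraph*{Polynomial-time parts.} For the second claim I would go through the main loop (Algorithm~\ref{algo:main-pc}) and \splitf (Algorithm~\ref{algo:split-pc}) line by line and argue that, apart from the branchings, every operation is polynomial in~$n$. The outer loop of the main method is executed~$O(n)$ times (already argued in the text, since there are~$O(2^{\log n})=O(n)$ relevant subsets~$\Pi'$), and the inner \textbf{repeat} loop runs at most~$|\Pi'|\le k$ times. Inside \splitf, forming the $\piecelen$-splittings, the merging of consecutive solid/fragile pieces, the feasibility checks of Lines~10--11, counting alignments with small shift (Line~16), and computing $\leftbreak{\cdot}$, $\rightbreak{\cdot}$ are all clearly polynomial; each branching in Lines~6, 14, 17--18 has a number of branches bounded by a function of~$k$ (using Observation~\ref{obs:piece-number}: at most~$2k$ solid pieces, so at most~$(2k)!$ matchings, and at most~$8$ alignment branches per pair). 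The final brute-force branching in Line~9 of the main method: once $\beta<4$, each fragile piece has length bounded by a function of~$k$ (to be guaranteed by \framesf), so the number of ways to place breakpoints is a function of~$k$, and checking whether the result is a size-$k$ CSP is polynomial. Hence everything except the \framesf calls contributes only an $f(k)\cdot\poly(n)$ factor, and in particular the non-\framesf \emph{branching} factor is some $k^{O(k^2)}$ coming from the $O(k)$ matchings-and-alignments branchings, each of size at most~$(2k)!\cdot 8^{O(k)}$, across the at most~$k$ iterations times~$O(n)$ outer iterations.

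\paragraph*{The branch count for \framesf.} This is the main obstacle and the reason the proof is deferred. The plan is: first establish (in the later sections) that a \emph{single} call to \framesf creates at most \windowstime{} branches — this is the quantity named in the macro and reflects the cost of guessing, for each fragile piece, how the breakpoints sit relative to the $O(k)$ reference markers / periodic structures in its ``local environment,'' with the $(2k)^{4k^2}$ term accounting for a product over $O(k)$ fragile pieces of a $(2k)^{O(k)}$-sized choice each, and the $k^{O(k)}$ term for lower-order guesses. Then I would argue that the calls to \framesf do \emph{not} compound multiplicatively across iterations of the \textbf{repeat} loop: although \framesf is called up to~$k$ times per outer iteration, the branches it creates in one call are the leaves from which the next call starts, so naively the bound would be $(\windowstime)^k$. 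The key point to make this telescope is that \framesf shrinks fragile pieces so aggressively — to length bounded by a function of~$\beta$ and~$k$ — that the \emph{next} \splitf produces only~$O(k)$-many new pieces regardless of history, and more importantly that the relevant ``budget'' being consumed is the number of breakpoints, which is globally at most~$k-1$. So I would set up an accounting where the total branching over \emph{all} \framesf calls in one outer iteration is bounded by a single \windowstime{}-type expression (one factor per breakpoint-discovery event, of which there are~$O(k)$), not its $k$-th power; combined with the~$O(n)$ outer iterations this gives $\windowstime\cdot\poly(n)$ total.

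\paragraph*{Where the difficulty lies.} The genuinely hard part is not the bookkeeping above but proving that one call to \framesf achieves the claimed per-call branch bound \windowstime{} while also guaranteeing Lemma~\ref{lem:windows-correct} (correctness) and the required shrinkage of fragile pieces; this requires the detailed rule system of \framesf and the case analysis split between the ``aligned'' reference-marker situation (Section~\ref{sec:frames-aligned}) and the ``repetitive''/periodic situation (Section~\ref{sec:frames-repetitive}), and in particular controlling how the periodicity information propagated from Line~18 of \splitf interacts with the shrinking rules. Thus the honest statement here is that Lemma~\ref{lem:windows-rt} follows by combining the easy polynomial-time audit above with the per-call bound established later, using the breakpoint-budget telescoping to avoid an exponential blow-up over the~$\le k$ iterations.
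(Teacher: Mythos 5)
There is a genuine gap here: your write-up is essentially a plan that defers the substantive counting, and the mechanism you sketch for the hard part does not match what is actually needed. You propose to first prove a \emph{per-call} bound of \windowstime{} and then argue via a ``breakpoint-budget telescoping'' that the $\le k$ calls do not compound into $(\windowstime)^k$. But \windowstime{} is the bound for \emph{all} calls combined, and branches multiply rather than add, so no budget on breakpoints rescues a per-call bound of that size. The correct argument is a global amortized count of \emph{branching events}: (i) only Frame~Rules~\ref{rule:aligned-cycles} and~\ref{rule:repetitive-cycle} branch, and each branches into at most $2k$ cases because a cycle of the piece graph has at most $k$ solid vertices and hence at most $2k$ edges; (ii) each iteration of the while-loop in \framesf{} places one frame, and by Observation~\ref{obs:piece-number} there are at most $2k-2$ fragile pieces, so each pass of the outer repeat-until loop triggers at most $2k$ frame placements; (iii) crucially, the outer repeat-until loop is executed at most $2k$ times \emph{in total across all calls to \framesf}, because beyond the $\le k$ initial passes (one per call), every additional pass is caused by \textttup{new-align} being set, i.e.\ by a repetitive solid piece becoming fixed at Line~10, which can happen at most $k$ times overall. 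This gives at most $4k^2$ branching events of width $\le 2k$, i.e.\ $(2k)^{4k^2}$, plus at most $k$ alignment branchings of width $24k^2+18k$ (Lemma~\ref{lem:long-period-few-options}), i.e.\ $k^{O(k)}$.

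Your sketch misses all three ingredients: you attribute the $(2k)^{4k^2}$ factor to ``a product over $O(k)$ fragile pieces of a $(2k)^{O(k)}$-sized choice each,'' whereas it really arises as $4k^2$ events each of width exactly $2k$; you do not identify which rules branch or why the width is $2k$; and your amortization is pinned to the wrong resource (breakpoints rather than the number of repetitive pieces that get fixed, which is what bounds the number of repeat-until iterations). Without item (iii) one only gets $O(k^2)$ repeat-until passes per call and hence $(2k)^{O(k^3)}$ overall, which is still $f(k)$ but not the stated bound. The polynomial-time audit in your first part is fine and matches the paper's (brief) treatment of that claim.
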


Finally, to bound the number of branches in the subsequent call
to \splitf, and for the case~$\beta<4$, we use the following
lemma.
\begin{restatable}{lemma}{lemframesize}
 \label{lem:frame-size}
 When \framesf terminates, every fragile piece has
 length at most $12(k^2+k)k\beta $.
\end{restatable}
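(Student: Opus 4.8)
The plan is to trace through the rules applied by \framesf{} and argue that each rule, when it shrinks a fragile piece or replaces it by smaller pieces, cannot leave a residual fragile piece longer than the stated bound. Since \framesf{} is described only at the level of the claims above, the argument will necessarily be organized around the structure of \framesf{} as it is presented in Sections~\ref{sec:frames-aligned} and~\ref{sec:frames-repetitive}: there will be a family of rules, each of which identifies, inside a current fragile piece, a smaller interval (or a constant number of smaller intervals) that must contain all breakpoints of a satisfying CSP~$\sol$. The bound $12(k^2+k)k\beta$ should come out as a product of three factors, $12k\beta$ (a per-breakpoint ``reach'' bound) times $k+1$ (a bound on how many breakpoints a single fragile piece can be forced to host) times roughly $k$ (a bound coming from the number of discovered blocks or aligned solid pieces whose periodic structure can still ``protect'' stretches of the fragile piece). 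First I would isolate, from the statement of \framesf, the elementary shrinking step and its guarantee: after the step, every maximal sub-interval of a fragile piece that contains no breakpoint of some fixed satisfying~$\sol$ has length $O(k\beta)$. This is the place where the factor~$12\beta$ and one factor of~$k$ enter, presumably because such a breakpoint-free stretch inside a fragile piece must be coverable by (at most $O(k)$) undiscovered blocks, each of length at most $2\beta-1$ by \betacrit{}-ness together with Lemma~\ref{lem:windows-correct}.

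Next I would bound the number of breakpoints that a single fragile piece is allowed to contain after \framesf{} finishes. Since~$\sol$ has size~$k$, it has at most~$k-1$ breakpoints in~$x$ and at most~$k-1$ in~$y$ in total, but this global bound alone is already enough: any one fragile piece contains at most $k-1$ breakpoints of~$\sol$. Combining this with the previous paragraph, a fragile piece is a concatenation of at most~$k$ breakpoint-free stretches (separated by at most $k-1$ breakpoints), each stretch of length $O(k\beta)$, giving length $O(k^2\beta)$ after the elementary step — but this is not yet the claimed bound, so the remaining factor of roughly $k+1$ must come from the interplay with periodic (repetitive) solid pieces: a fragile piece adjacent to, or interleaved with, periodic blocks can be temporarily longer because the period structure prevents the rules from cutting it until $\beta$ has dropped below the period length. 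I would therefore account separately for the contribution of periodic regions, using Condition~\ref{prop:repetitive-period} of a constraint (repetitive pieces and their blocks share a shortest period) and the fact, established in the proof of Lemma~\ref{lem:split-correct}, that such periods have length at most $\piecelen/2 < \beta$; the number of repetitive solid pieces is at most $2k$ by Observation~\ref{obs:piece-number}, which supplies the last linear factor.

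The main obstacle, I expect, will be the periodic case: bounding the length of a fragile piece that is sandwiched between, or overlaps with, blocks that \framesf{} knows to be periodic but whose alignment has not yet been fixed (the deferred-alignment situation flagged in Line~18 of \splitf). There the naive ``breakpoint-free stretches are short'' argument fails, because inside a long periodic region the rules of \framesf{} deliberately do \emph{not} shrink, waiting for a future iteration with smaller~$\beta$. I would handle this by arguing that such an un-shrunk periodic region can be charged to the at most $2k$ periodic solid pieces of the constraint, each contributing a stretch whose length is tied to the corresponding block length (at most $2\beta-1$) times the number of period repetitions that can still be ``undecided'', which the rules of \framesf{} cap at $O(k\beta/\length{\pi})$ periods, i.e. $O(k\beta)$ markers per periodic piece. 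Summing $O(k)$ such contributions of size $O(k\beta)$ and $O(k)$ breakpoint-free stretches of size $O(k\beta)$ over the at most $2k$ relevant structures yields $O(k^2 \cdot k\beta) = O(k^3\beta)$, and tracking the constants (the $12$ presumably absorbs the $\lceil\beta/3\rceil$ rounding, the $2\beta-1 < 2\beta$ slack, and the per-rule overheads) gives exactly $12(k^2+k)k\beta$. Once \framesf{} and its rules are spelled out, verifying these constants is routine bookkeeping; the conceptual content is entirely in (a) the $O(k\beta)$ reach bound for a single breakpoint and (b) the charging scheme that converts the at most $k-1$ breakpoints plus $O(k)$ periodic regions into the cubic-in-$k$ factor.
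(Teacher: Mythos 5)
Your proposal is an outline of a hoped-for argument rather than a proof, and the decomposition you guess at does not match how the bound actually arises; more importantly, it skips the two ideas that carry all the weight. First, the quantity that \framesf{} controls is not the length of breakpoint-free stretches but the length of \emph{frames}: intervals inside each fragile piece guaranteed to contain all of its breakpoints. After the Fitting Rule the fragile piece \emph{is} its frame, so the lemma reduces to bounding frame lengths. Your step (a) --- that every breakpoint-free sub-interval of a fragile piece has length $O(k\beta)$ --- fails for the portions of a fragile piece that precede its first breakpoint or follow its last one: these lie inside discovered blocks, whose length is not bounded in terms of $\beta$ at all. Locating the window (the interval spanned by the breakpoints, of length at most $w:=2\beta k+1$) inside such a long fragile piece is exactly the hard part, and the paper does it by propagating frames through the piece graph: each propagation step (Frame Rule~2 or~6) can lose an additive $w$, and an amortized ``weight'' argument over open vertices (Lemma~\ref{lem:frame-size-with-period}) yields the bound $6k^2w+3kw+3k\max\{w,2\length{\pi}\}$. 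So the $k^2$ factor comes from (number of fragile pieces) times (additive loss $w$ per propagation), not from (breakpoints per piece) times (number of periodic pieces) as you propose.

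Second, your treatment of the periodic case misses the actual mechanism. The residual term $3k\max\{w,2\length{\pi}\}$ still depends on the longest shortest period $\pi$ of a repetitive piece, which a priori is not bounded by any function of $k$ and $\beta$. The paper removes it not by charging period repetitions to blocks but through the termination condition of \framesf{}: if $2\length{\pi}>w$, the frame bound becomes $(12k^2+9k)\length{\pi}$, which triggers the condition of Line~9, so the corresponding repetitive piece gets fixed (Lemma~\ref{lem:long-period-few-options}) and the outer loop repeats with \textttup{new-align} set to True. Hence when \framesf{} actually terminates every remaining period satisfies $2\length{\pi}\le w$, the $\max$ collapses to $w$, and the bound becomes $6(k^2+k)w\le 12(k^2+k)k\beta$. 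Without this interplay between the frame-length bound and the re-alignment loop there is no way to eliminate $\length{\pi}$ from the final bound, and your outline does not supply a substitute for it.
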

Note that the above also holds before the first call
of \splitf. Using these lemmas, we obtain our main result.
\begin{theorem}
  \textsc{Minimum Common String Partition} can be solved in
  \overalltime~time; it is thus fixed-parameter tractable with respect
  to the partition size~$k$.
  \label{thm:mcsp-fpt}
\end{theorem}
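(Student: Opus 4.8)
The plan is to assemble Theorem~\ref{thm:mcsp-fpt} from the pieces already laid out: the main loop of Algorithm~\ref{algo:main-pc}, the correctness guarantee for \splitf (Lemma~\ref{lem:split-correct}), and the three black-box properties of \framesf (Lemmas~\ref{lem:windows-correct}, \ref{lem:windows-rt}, \ref{lem:frame-size}). The argument splits into a \emph{correctness} part and a \emph{running-time} part.

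For correctness, I would argue as in the algorithm outline. It suffices to show that on a yes-instance with a size-$k$ CSP, at least one branch outputs a valid size-$k$ CSP; soundness is immediate because Line~10 explicitly verifies the candidate before output. Fix a size-$k$ CSP and let $\Pi'$ be the (multi-)set of powers of two that $2$-approximate, from below, the lengths of the blocks of \emph{some} canonical size-$k$ CSP, chosen as in the two $\beta$-properties stated before Algorithm~\ref{algo:main-pc}; one checks $\max\Pi'\ge\lceil n/2k\rceil$ and $|\Pi'|\le k$, so this $\Pi'$ is among those enumerated in Line~3. Then I would prove by induction over the iterations of the \textbf{repeat} loop the invariant: ``the current constraint \cons\ is satisfied by some size-$k$ CSP all of whose undiscovered blocks have length~$<\beta$ (equivalently, length~$<2\beta$ before the $\beta$-update).'' The base case is the initial two-fragile-piece constraint, which is satisfied by every size-$k$ CSP since the strings are not identical and hence $\beta\le 2\beta$ bounds all undiscovered blocks trivially. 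The inductive step uses Lemma~\ref{lem:split-correct} to pass from ``all blocks of length $\ge 2\beta$ discovered'' to ``all blocks of length $\ge\beta$ discovered'' in some branch while preserving satisfiability, and then Lemma~\ref{lem:windows-correct} to pass, after the $\beta$-update, to a constraint satisfied by a size-$k$ CSP whose longest undiscovered block has length at most $2\beta-1$, i.e.\ $<2\beta$ for the new $\beta$. When the loop ends with $\beta<4$, every undiscovered block has length at most~$3$, so its (at most two) breakpoints lie in the few fragile pieces, which by Lemma~\ref{lem:frame-size} have length $O(k^3\beta)=O(k^3)$; hence the brute-force branching of Line~9 has a branch realizing the remaining breakpoints of the canonical CSP, and that branch outputs it in Line~10.

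For the running time, I would multiply the branching factors along the recursion tree. The outer loop runs $O(2^{\log n})=O(n)$ times. For a fixed $\Pi'$, the \textbf{repeat} loop runs at most $|\Pi'|+1\le k+1$ times. In each iteration: \splitf\ creates, by Observation~\ref{obs:piece-number} and the $\piecelen$-splitting, at most $O(k\beta/\beta)=O(k)$\,--\,wait, more carefully, each fragile piece of length $L$ splits into $O(L/\beta)$ pieces, and by Lemma~\ref{lem:frame-size} $L\le 12(k^2+k)k\beta$, so the total number of new pieces is $O(k)\cdot O(k^3)=k^{O(1)}$; the fragile/solid branching (Line~6) contributes $2^{k^{O(1)}}$, the matching branching (Lines~12--14) at most $(2k)!=2^{O(k\log k)}$, and the alignment branching (Lines~15--18) at most $7^{O(k)}$ since there are $O(k)$ matched pairs. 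Crucially \framesf\ contributes, \emph{over all its calls combined}, only the factor $\windowstime$ by Lemma~\ref{lem:windows-rt}, and all remaining bookkeeping is $\poly(n)$. Collecting the exponents—$O(n)$ outer branches, then $k+1$ iterations each contributing a $k^{O(k)}$ or $2^{k^{O(1)}}$ factor, times the single $\windowstime$ factor—yields a bound of the form $2^{k^{O(1)}}\poly(n)$; pushing the constants through (the dominant term being the per-iteration $2^{O(k^2\log k)}$-type factor raised to the $O(k)$ iterations, together with $(2k)^{4k^2}$ from \framesf) gives $\overalltime$.

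The main obstacle is bookkeeping the branching factor of \splitf\ tightly enough that the product over the $O(k)$ iterations of the \textbf{repeat} loop stays within $k^{O(k^2)}$: the naive count of new pieces in Line~4 is $\Theta(k^3)$ per fragile piece (using Lemma~\ref{lem:frame-size}), and the fragile/solid branching on Line~6 is $2$ per piece, so one must check that $2^{\Theta(k^4)}$ does not sneak in—this is why the frame-size bound and Observation~\ref{obs:piece-number} must be combined carefully, and why the per-call work of \framesf\ is deliberately \emph{not} counted here but deferred to the global bound of Lemma~\ref{lem:windows-rt}. Everything else is routine: soundness from the Line~10 check, and the correctness invariant from feeding Lemmas~\ref{lem:split-correct} and~\ref{lem:windows-correct} into each other.
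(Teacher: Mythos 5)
Your correctness argument matches the paper's: the invariant fed by Lemma~\ref{lem:split-correct} and Lemma~\ref{lem:windows-correct}, termination at $\beta<4$ via Lemma~\ref{lem:frame-size}, and soundness from the explicit check in Line~10. One caveat: the correct $\Pi'$ cannot simply be read off the block lengths of a single canonical CSP fixed up front, because Lemma~\ref{lem:windows-correct} only guarantees that \emph{some possibly different} CSP $\sol'$ satisfies the new constraint; the paper defines each successive element of $\Pi'$ as the smallest power of two for which \emph{some} CSP satisfying the \emph{current} constraint has a \betacrit{} longest undiscovered block. Your existential phrasing of the invariant absorbs this, so I read it as an imprecision rather than an error.

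The running-time part, however, has a genuine gap exactly at the point you flag as ``the main obstacle'' and then leave unresolved. Combining Observation~\ref{obs:piece-number} with Lemma~\ref{lem:frame-size} only bounds the \emph{number} of new pieces per call to \splitf{} by $\Theta(k^4)$; it does nothing to tame the Line~6 branching, which naively remains $2^{\Theta(k^4)}$ per call and hence $2^{\Theta(k^5)}$ over the $O(k)$ iterations --- far larger than $\overalltime=k^{21k^2}\poly(n)$, which is only $2^{O(k^2\log k)}\poly(n)$. The missing idea is that one need not enumerate all solid/fragile assignments: since consecutive pieces of the same type are merged and at most $2k$ solid pieces survive, it suffices to guess the first and last new piece of each \betacrit{} block, i.e.\ at most $4k$ endpoints among the $\le 145k^4$ candidates, giving $O\bigl(\binom{145k^4}{4k}\bigr)=k^{O(k)}$ branches per call and $k^{O(k^2)}$ overall. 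Without this replacement of the power-set branching by a binomial one, your final step (``pushing the constants through \ldots gives $\overalltime$'') does not follow; the bound you actually establish is $2^{O(k^5)}\poly(n)$. Relatedly, your remark that the dominant per-iteration factor is of type $2^{O(k^2\log k)}$ would, raised to $O(k)$ iterations, give $k^{O(k^3)}$, which also overshoots; the correct per-call factor for \splitf{} is $k^{O(k)}$.
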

\begin{proof}
  For the correctness proof assume that the instance is a yes-instance
  (for a no-instance the algorithm can always check the correctness
  and size of a CSP before returning, thus it has empty output for no-instances). Then, assuming that the input strings are not
  identical, there is a CSP~\sol  satisfying the initial
  constraint~\cons which demands only that there is at least one
  breakpoint in~$x$ and in~$y$.

  We now show that there is a set~$\Pi'$ of powers of 2, all of which
  are smaller than~$n$ such that the algorithm outputs, in at least
  one of its branches, a size-$k$ CSP, in case the main
  algorithm loop is traversed for this set~$\Pi'$.

  Let~$\beta$ be the smallest integer such that there is a size-$k$
  CSP in which the longest block is \betacrit. Then, the largest
  integer of~$\Pi'$ is~$\beta$. Now, if~$\beta<4$ the algorithm
  directly finds all breakpoints by a brute-force
  branching. Otherwise, the procedure \splitf is called. By
  Lemma~\ref{lem:split-correct}, this procedure creates at least one
  branch where the constraint is satisfied by some
  size-$k$ CSP~$\sol$ and all its blocks of length at least~$\beta$
  are discovered by~$\cons$. Consider an arbitrary branch
  with this property. Now, let~$\beta$ denote the smallest power of 2
  such that there is a CSP satisfying the current constraint~$\cons$
  in which the longest blocks are \betacrit. This integer~$\beta$ is
  the second largest integer of~$\Pi'$. The algorithm now calls
  \framesf and by Lemma~\ref{lem:windows-correct} obtains in at least
  one branch a constraint such that there is a size-$k$ CSP that
  satisfies the constraint in this branch. Furthermore, also by
  Lemma~\ref{lem:windows-correct} the longest undiscovered block in
  this CSP has length at most~$2\beta-1$. By the choice of~$\beta$, it
  follows that the longest undiscovered block of this CSP
  is~\betacrit. Now, the algorithm either finds all breakpoints by
  brute-force (if~$\beta<4$) or again calls the procedure \splitf to
  discover all~\betacrit blocks. This whole procedure is repeated for
  smaller and smaller~$\beta$, each time~$\beta$ is defined as the
  smallest power of two such that there is a size-$k$ CSP satisfying
  the current constraint~\cons whose longest undiscovered block is
  \betacrit. The set~$\Pi'$ contains exactly all integers obtained
  this way. Eventually,~$\beta<4$ and the algorithm branches by
  brute-force into all cases to set the breakpoints without violating
  the current constraint. Clearly, one of these cases is equivalent to
  a CSP satisfying this constraint. The algorithm verifies that this
  is indeed a CSP and that it has size at most~$k$ and correctly
  outputs the CSP. Hence, the algorithm is correct.

  It remains to show the running time of the algorithm. First, the
  for-each-loop in the main method is executed~$O(2^{\log n})=O(n)$
  times. Second,~by the restriction on~$\Pi'$, the
  repeat-loop in the main method is executed at most~$k$ times. To
  obtain the claimed running time, we bound the number of
  branches created in each call to~\splitf.
 
  In each call to \splitf the total length of the fragile pieces is
  less than $(2k)12(k^2+k)k\beta =24(k^4+k^3)\beta$: In the first
  call,~$\beta>n/2k$, so the bound holds. In the other cases, there
  are, by Observation~\ref{obs:piece-number} at most~$2k-2$ fragile
  pieces in~$x$ and~$y$. Furthermore, in this case~\splitf is called
  after \framesf. Thus, by~Lemma~\ref{lem:frame-size}, each fragile
  piece has length at most $12(k^2+k)k\beta$, and the overall bound
  follows.

  The procedure splits the fragile pieces into new pieces of length at
  most $\piecelen$ (i.e. there is a distance $\piecelen-1$ between the
  left endpoints of two consecutive pieces of the same splitting).
  Since~$\beta\ge 4$, we have $\piecelen-1\ge \beta/6$. Hence, this
  creates less than $144(k^4+k^3)$ new pieces of length $\piecelen$ plus
  at most one additional shorter piece at the end of each fragile
  piece. Hence, $145k^4$ is an upper bound on the number of new
  pieces. Branching for each piece into the case that it is solid or
  fragile can be done in $2^{145k^4}$ branches. The number of
  necessary branches for this part of \splitf can be reduced as
  follows: Since we merge series of consecutive pieces in $F$ or $S$,
  and since we do not need to consider branches with more than $k$
  solid pieces, we can directly look for the first and last piece of
  each~\betacrit block. This creates $O(\binom{145k^4}{4k})= O(
  \frac{(145k^4)^{4k}}{(4k)!})$ branches in each call of \splitf.

  The matching requires up to $k!$ branches, and the alignment at most
  $6^k$. Since $145^{4k}k!6^k=o((4k)!)$, we can bound the number of
  branches in each call of \splitf by $k^{16k}$. The \splitf~procedure
  is called at most $k$ times (by the restriction on~$\Pi$), thus
  creating  $O({\left(k^{16k}\right)}^k)=O(k^{16k^2})$ branches
  throughout the algorithm. Finally, the number of branches created in
  \framesf is~$\windowstime$~by~Lemma~\ref{lem:windows-rt}, and the
  number of branches created in the final brute-force can be bounded
  as follows. The length of the fragile pieces is~$O(k^4+k^3)$ and we
  need to guess at most~$2k-2$ precise breakpoint positions from this
  number. This can be done with~$k^{O(k)}$ branches. 
 
  Finally, note that all other steps of the algorithm can be clearly
  performed in polynomial time. Altogether, the total running time of
  the algorithm thus is
  \begin{align*}
O(k^{2k}n) \cdot k^{16k^2} \cdot k^{O(k)}  \cdot \windowstime \cdot \poly(n)\\
 =\overalltime.
  \end{align*}
\end{proof}

%

\section{Putting Frames Next to Fixed Pieces}
\label{sec:frames-aligned}
In this and the next section, we prove the two claimed lemmas
concerning \framesf. Informally, we show that, with the right constraint in the beginning, \framesf finds a constraint~$\cons$ that is satisfied by a size-$k$ CSP~$\sol$ whose longest undiscovered block has length at most~$2\beta-1$. Moreover,  the length of each fragile piece is~$O(k^3\beta)$ in every constraint produced by \framesf. The pseudo-code
of~\framesf is shown in~Algorithm~\ref{algo:frames-pc}.

The approach of~\framesf is to use a set of reduction rules to put
``frames'' into the fragile pieces, where a frame is an interval
within the fragile piece that contains \textit{all} breakpoints that
are contained in this piece. We call the actual shortest interval
containing all breakpoints of a fragile piece a ``window'', defined as
follows. Let~$\sol$ be a size-$k$ CSP satisfying~$\cons$, and let~$f$
be a fragile piece in~$\cons$. The \emph{window} of~$f$ is the
interval~$[a,b]$ such that~$\{a,\shiftr{a}1\}$ is the
leftmost breakpoint of~\sol in~$f$ and~$\{\shiftl{b}1,b\}$ is the
rightmost breakpoint of~$f$. Since a frame is required to contain all
breakpoints of a fragile piece it can be seen as a
``super''-approximation of the actual window. A formal definition of frames is as follows.
\begin{definition}
  Let~$\cons$ be a constraint.  A \emph{frame}~$[a,b]$ for a fragile
  piece~$f$ of~\cons is an interval that is contained in~$f$. A
  \emph{frame set} for~\cons~is a set~$\Phi$ of frames such that each
  fragile piece~$f$ contains at most one frame. A CSP~\sol~that
  satisfies~\cons{} \emph{satisfies} a frame set~$\Phi$ for~\cons{} if
  each breakpoint of~\sol is contained in a frame of~$\Phi$ or in a
  fragile piece without frame.
\end{definition}
Initially, the frame set is empty. Then the frames are added one by
one until each fragile piece has a frame. The approach to add the
frames to the constraint can be summarized as follows: first, we
compute an upper bound~$w$ on the length of the windows that only
depends on~$\beta$ and~$k$. Then, we apply a series of
\emph{frame~rules} that eventually place a frame in all fragile pieces
(Lines~4--5). As we will show, the frame length then depends on $w$
(and thus on~$k$ and~$\beta$) and on the maximum period length of the
unfixed (repetitive) solid pieces.  Since the frames contain all
breakpoints of \sol, it is possible to reduce fragile pieces until
they ``fit'' their frames (Line~6).  We now check whether there are
some unfixed solid pieces with a shortest period that is long compared to~$w$. If this
is the case, then the number of ``feasible'' alignments for these
pieces is small, and we can thus branch how to align these pieces
(Lines~7--11).  Formally, we call an alignment of $s=[a,b]$ and
$s'=[a',b']$ \emph{feasible} for \cons{} if the interval equidistant
to $[a,b]$ ($[a',b']$) from $s$ does not intersect any other solid
piece than $s'$ ($s$) in \cons. Note that each satisfying CSP can only
have feasible alignments, otherwise there is at least one fragile
piece without breakpoints.

Afterwards, we go back to applying the frame~rules (we will obtain shorter frames since the number of fixed pieces has
increased). If this is not the case, that is, all pieces have short periods 
compared to~$w$, then we show that the maximum frame length depends
only on $w$. Hence, in this case they are sufficiently short, and the
\framesf procedure has achieved its goal. The algorithm thus returns
to the main method where it calls \splitf to find new solid
pieces. 
\begin{algorithm*}[t]
  \caption{Procedure \framesf. Global variables: \cons, $\beta$.}
  \label{algo:frames-pc}
  \begin{noSpaceTabbing}
    \hspace*{0.7cm}\=\hspace*{0.5cm}\=\hspace*{0.5cm}\=\hspace*{0.6cm}\=\hspace*{4cm}\=\kill
    \lii{1} $w:=2\beta k + 1$ \> \> \> \> \com{upper bound on window length}\\
    \lii{2} \textbf{repeat} \textbf{:}\\
    \lii{3} \> Compute the maximum extension of each solid piece,\\ \>\> the piece graph $G[\cons,\Phi:=\emptyset]$, and the strips of each \reprep\ path  \\
    \lii{4} \> \textbf{while} there is a frameless fragile piece \textbf{:}\\
    \lii{5} \>\> place frames in fragile pieces by applying
Frame~Rules~1--6 \\
    \lii{6} \>\textbf{for each} fragile piece \textbf{:} 
    apply Fitting Rule~1\\
		
		 \lii{7}\>  \textttup{new-align} $:=$ False\>\>\>\com{Fix pieces with long periods:}\\
    \lii{8}\>  \textbf{for each} repetitive solid piece~$s$ (with shortest period $\pi_s$) \textbf{:}\\
    \lii{9}\>  \>  \textbf{if} all fragile pieces adjacent to~$s$ or~$s'$ have length at most
$(12k^2+9k)\length{\pi_s}$ \textbf{:}\\
    \lii{10}\>  \> \> for each feasible alignment~branch into the case to add this alignment to~$R_S$\\		
    \lii{11}\>  \> \> \textttup{new-align} $\leftarrow$ True \\		
    \lii{12} \textbf{until} \textttup{new-align} = False\\
    \lii{13} \textbf{return} the modified constraint~$\cons$  
  \end{noSpaceTabbing}
\end{algorithm*}
\label{sec:frame-rules}

In this section, we describe the frame~rules that place frames in
fragile pieces which are next to fixed pieces and some further simple
frame rules. Before doing so, we define two concepts that will be used
by the frame~rules: \emph{maximum extensions} and the \emph{piece
  graph}. Roughly speaking, maximum extensions are used locally to
bound the position of some breakpoints in the fragile pieces. The piece graph
provides a structural view of the relationship between pieces and
is used to show that one of the frame~rules can always be applied in
case there is a frameless fragile piece.
\paragraph*{Maximum extension of solid pieces.} Let~$s$ be a solid
piece in a constraint \cons. The \emph{maximum extension} of $s$ is
the interval $[\lext{s},\rext{s}]$ containing $s$ where~$\rext{s}$ and
$\lext{s}$ are defined as follows. If~$s$ is fixed, then let $\ell$ be
the largest integer such that 
$[s^*,\shiftr{s^*}\ell]\equiv[s'^*,\shiftr{s'^*}\ell]$,
and that no marker of $[s^*,\shiftr{s^*}\ell]$ or $[s'^*,\shiftr{s'^*}\ell]$ is in a solid piece other than $s$ or $s'$.
Then $\rext
s:=\shiftr{s^*}\ell$ and $\rext {s'}:=\shiftr{s'^*}\ell$.  
If $s$ is repetitive with a shortest period $\pi_s$, then let $a$ be the leftmost
marker in $s$ and define \rext s as the rightmost marker such that
the interval $[a,\rext s]$ has period~$\pi_s$, and that
no marker in $[a,\rext s]$ is in a solid piece other than $s$.
Marker \lext s is obtained symmetrically.  

The
following proposition is a straightforward consequence of the
definition of maximum extension.
\begin{proposition}\label{prop:aligned-in-extension}
  Let $s$ be a fixed solid piece, and let $[a,b]$ and $[c,d]$ be
  two intervals that are equidistant from~$s$ and such
  that~$[a,b]$ is contained in~$[\lext{s},\rext{s}]$ and~$[c,d]$ is
  contained in~$[\lext{s'},\rext{s'}]$. Then,~$[a,b]\equiv [c,d]$.
\end{proposition}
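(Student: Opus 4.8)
The proposition is, as the surrounding text says, a direct unfolding of the definition of maximum extension; I would carry it out in three short steps, keeping careful track of offsets from the reference markers $s^*$ and $s'^*$ (here $s'=\til s$).

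\emph{Step 1: the whole extension of $s$ matches the whole extension of $s'$.} By the definition of maximum extension for a fixed piece there are integers $\ell_R\ge 0$ and $\ell_L\ge 0$ with $\rext s=\shiftr{s^*}{\ell_R}$, $\rext{s'}=\shiftr{s'^*}{\ell_R}$, $\lext s=\shiftl{s^*}{\ell_L}$, $\lext{s'}=\shiftl{s'^*}{\ell_L}$, such that $[\shiftl{s^*}{\ell_L},s^*]\equiv[\shiftl{s'^*}{\ell_L},s'^*]$ and $[s^*,\shiftr{s^*}{\ell_R}]\equiv[s'^*,\shiftr{s'^*}{\ell_R}]$. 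I would glue these two equivalences at the common marker $s^*$ (respectively $s'^*$): the first says the marker at relative offset $j$ from $s^*$ matches the one at relative offset $j$ from $s'^*$ for $-\ell_L\le j\le 0$, the second says the same for $0\le j\le\ell_R$, and at offset $0$ the two agree since $s^*\equiv s'^*$. Hence, for every integer $j$ with $-\ell_L\le j\le\ell_R$ we have $\shiftr{s^*}{j}\equiv\shiftr{s'^*}{j}$, and in particular $[\lext s,\rext s]\equiv[\lext{s'},\rext{s'}]$.

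\emph{Step 2: $[a,b]$ and $[c,d]$ sit at identical relative offsets.} Since $[a,b]$ and $[c,d]$ are equidistant from $s$, putting $\alpha:=\size{s^*a}$ and $\gamma:=\size{s^*b}$ gives $\size{s'^*c}=\alpha$ and $\size{s'^*d}=\gamma$, i.e.\ $a=\shiftr{s^*}{\alpha}$, $b=\shiftr{s^*}{\gamma}$, $c=\shiftr{s'^*}{\alpha}$, $d=\shiftr{s'^*}{\gamma}$, with $\alpha\le\gamma$. The containment $[a,b]\subseteq[\lext s,\rext s]=[\shiftl{s^*}{\ell_L},\shiftr{s^*}{\ell_R}]$ forces $-\ell_L\le\alpha$ and $\gamma\le\ell_R$; the containment $[c,d]\subseteq[\lext{s'},\rext{s'}]$ yields exactly the same two inequalities, because $\lext s$ is equidistant from $s$ to $\lext{s'}$ and $\rext s$ to $\rext{s'}$. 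Thus $[a,b]$ consists of the markers $\shiftr{s^*}{j}$ for $\alpha\le j\le\gamma$ and $[c,d]$ consists of the markers $\shiftr{s'^*}{j}$ for the same range of $j$. Applying Step~1 to each such $j$ (all of which lie in $[-\ell_L,\ell_R]$) gives $\shiftr{s^*}{j}\equiv\shiftr{s'^*}{j}$, so $[a,b]$ and $[c,d]$ agree letter by letter, i.e.\ $[a,b]\equiv[c,d]$.

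\textbf{Main obstacle.} There is no genuine difficulty; the content is routine offset bookkeeping. Two small points deserve care in the write-up: first, the ``no marker lies in another solid piece'' clauses in the definition of maximum extension are irrelevant here — they bound how far $\ell_L,\ell_R$ reach but not the content equivalences used in Step~1, so they can simply be ignored; second, it is essential that the \emph{same} bounds $-\ell_L,\ell_R$ control both containments in Step~2, which is exactly what turns ``equidistant and contained'' into ``occupying identical relative offsets'' and thereby lets the conclusion follow from Step~1.
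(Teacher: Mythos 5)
Your proof is correct and is exactly the routine offset bookkeeping the paper has in mind: the paper gives no explicit proof, calling the proposition ``a straightforward consequence of the definition of maximum extension,'' and your three steps are precisely that unfolding (same $\ell_L,\ell_R$ on both sides by definition, glue the two content equivalences at $s^*\equiv s'^*$, then read off the common offset range). Your remark that the ``no other solid piece'' clauses play no role in the content equivalence is also accurate.
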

Note that, as a special case, the above proposition includes
single markers (that is, length-one intervals). The next
proposition simply states formally that the maximum extensions of a
solid piece contain the block which contains the solid piece.
\begin{proposition}\label{prop:block-limits}
  Let~$\cons$ be a constraint and~$s$ be a solid piece of~$\cons$. Any
  CSP that satisfies~$\cons$ has a block which contains~$s$ and is
  contained in~$[\lext s,\rext s]$.  Furthermore, let $f$ be a fragile
  piece next to $s$. Then, the window in $f$ contains at least one
  marker of~$[\lext s,\rext s]$.
\end{proposition}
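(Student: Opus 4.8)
\textbf{Proof proposal for Proposition~\ref{prop:block-limits}.}
The plan is to prove the two assertions separately, in both cases arguing directly from the definition of maximum extension and the fact that the CSP in question satisfies $\cons$. For the first assertion, fix a CSP $\sol$ satisfying $\cons$ and let $B$ be the block of $\sol$ that contains the solid piece $s$; such a block exists by Condition~\ref{prop:dont-break-solid} (equivalently (1')), since $s$ contains no breakpoint of $\sol$. It remains to show $B\subseteq[\lext s,\rext s]$, and by symmetry it suffices to bound the right endpoint of $B$, i.e.\ to show that the rightmost marker of $B$ lies at or to the left of $\rext s$. I will split into the fixed and the repetitive case. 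If $s$ is fixed, write $s'=\til s$ and recall that $\rel s,\rel{s'}$ are matched in $\sol$ (Condition~4); hence the interval of $B$ starting at $\rel s$ and the corresponding interval of $B'=M(B)$ starting at $\rel{s'}$ are matched marker-by-marker, so they are equidistant from $s$, have the same content, and contain no marker of any other solid piece (no block of $\sol$ contains a breakpoint of $\sol$ in its interior, and distinct solid pieces lie in distinct blocks by (2')). Therefore the common length of these two intervals is at most the maximal $\ell$ in the definition of $\rext s$, which gives that the rightmost marker of $B$ is at most $\shiftr{\rel s}\ell=\rext s$. If $s$ is repetitive, use Condition~\ref{prop:repetitive-period}: the block $B$ containing $s$ has the same shortest period $\pi_s$ as $s$; so the portion of $B$ from the leftmost marker $a$ of $s$ rightward has period $\pi_s$ and (again by (2') and the absence of internal breakpoints) contains no marker of another solid piece, which by the definition of $\rext s$ in the repetitive case forces the right endpoint of $B$ to be at most $\rext s$. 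The symmetric arguments bound $\lext s$, so $B\subseteq[\lext s,\rext s]$.

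For the second assertion, let $f$ be a fragile piece adjacent to $s$; by symmetry assume $f$ is immediately to the right of $s$, so $f$ and $s$ share the marker that is simultaneously the right endpoint of $s$ and the left endpoint of $f$, and this shared marker is the first marker of $f$. By Condition~\ref{prop:break-fragile} the piece $f$ contains at least one breakpoint of $\sol$, so its window $[a,b]$ is well defined; I must show $[a,b]$ meets $[\lext s,\rext s]$. Let $B$ be the block of $\sol$ containing $s$, as above. The right endpoint $r_B$ of $B$ lies inside $f$: indeed $r_B$ is weakly to the right of $s$ (it contains $s$) and it cannot be at or beyond the right endpoint of $f$ without $f$ being entirely contained in $B$, contradicting that $f$ contains a breakpoint; more carefully, the adjacency $\{r_B,\shiftr{r_B}1\}$ is a breakpoint of $\sol$, which by Condition~\ref{prop:dont-break-solid} lies in a fragile piece, and since $r_B$ is not to the left of $s$ this fragile piece must be $f$ itself. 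Hence $\{r_B,\shiftr{r_B}1\}$ is a breakpoint of $\sol$ inside $f$, so by definition of the window its left marker $r_B$ satisfies $a\le r_B\le b$, i.e.\ $r_B\in[a,b]$. On the other hand, by the first assertion $B\subseteq[\lext s,\rext s]$, so in particular $r_B\in[\lext s,\rext s]$. Thus $r_B$ is a common marker of the window $[a,b]$ and of $[\lext s,\rext s]$, which proves the claim.

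The only genuinely delicate point is the case analysis establishing that the relevant intervals of $B$ "contain no marker of a solid piece other than $s$" — this is where I must carefully invoke the equivalent formulations (1') and (2') of the satisfaction conditions, namely that every solid piece lies in some block and that two solid pieces in the same string lie in different blocks, together with the observation that the interior of a block of $\sol$ contains no breakpoint of $\sol$ and therefore no endpoint of another solid piece of the same string. Everything else is bookkeeping: unwinding the definition of $\rext s$ (resp.\ $\lext s$) in the fixed and repetitive cases, and the elementary observation that the right endpoint of the block containing $s$ is the left marker of a breakpoint lying in the neighbouring fragile piece $f$, hence lies in the window of $f$.
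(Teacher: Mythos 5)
Your proof is correct and follows essentially the same route as the paper's: the fixed case via Condition~4 and the maximality in the definition of the extension (you argue it directly from the matched suffixes at the reference markers, where the paper phrases it as a contradiction at the marker just beyond $\rext s$ or $\lext s$), the repetitive case via Condition~5, and the second assertion by observing that the endpoint of the block containing $s$ is a breakpoint lying in $f$ and hence in its window. No gaps beyond the level of detail the paper itself leaves implicit.
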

\begin{proof}
  If~$s$ is fixed, then the block containing~$s$ cannot contain the
  marker~$\shiftl{\lext s} 1$: if this marker is contained in the
  block, it is matched to~$\shiftl{\lext {s'}} 1$. By definition
  of~$\lext s$, either~$\shiftl{\lext s} 1 \not\equiv \shiftl{\lext
    {s'}} 1$ or one of $\shiftl{\lext s} 1$, $\shiftl{\lext {s'}} 1$
  belongs to a different solid piece~$t$. In the first case, we do not
  obtain a CSP; in the second case, there is at least one fragile
  piece without a breakpoint. Similarly, the block containing~$s$
  cannot contain~$\shiftr{\rext s} 1$.
  
  Every repetitive solid piece~$s$ is
  contained in a block that has a common shortest period~$\pi$ with $s$. By
  definition of~$\lext$ and~$\rext$ for repetitive solid pieces this
  block must thus be contained in~$[\lext{s},\rext{s}]$.
  
  Finally, consider a fragile piece $f$ to the right (to the left) of
  $s$. The window in $f$ contains the last (first) marker of the block
  containing $s$. By the above it thus contains at least one marker
  of~$[\lext{s},\rext{s}]$. 
\end{proof}

\begin{figure*}[t]
\centering
\includegraphics[scale=0.9]{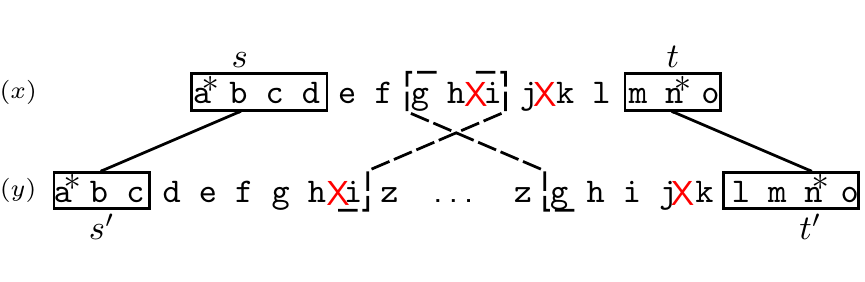}\hfill%
\includegraphics[scale=0.9]{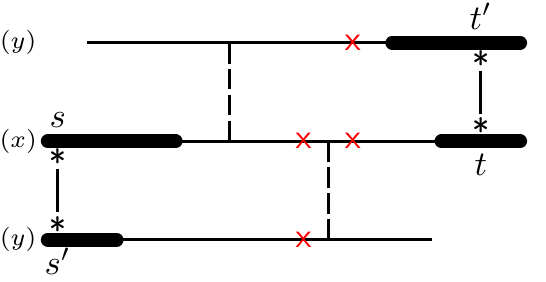}
 \caption{ Left: two pairs of fixed solid pieces $(s,s')$ and $(t,t')$.
Reference markers are shown with an asterisk, maximum extensions are
delimited with dashed lines, and the breakpoints of some possible CSP are
marked with red crosses.  
Right: a simplified representation of the same
pieces, where thick (resp. thin) lines are used for solid (resp. fragile)
pieces.}
\end{figure*}

\paragraph*{The~Piece~Graph.} 
Given a constraint \cons and a frame set~$\Phi$, the \emph{piece
  graph} $G[\cons,\Phi]$ is the bipartite graph $G:=(V_S\cup V_F, E)$
constructed as follows. 
\begin{itemize}
  \item $V_F$ contains one vertex $v_f$ for each frameless fragile piece $f\in F$,
  \item $V_S$ contains, for each repetitive solid piece $s\in S_x$ a
    vertex $v_s$, and for each fixed piece $s\in S_x$, two vertices
    $l_s$ and $r_s$ (for \emph{left} and \emph{right}). 
  \item For a fixed solid piece $s$ and a fragile piece $f\in F$,~$G$ contains the edge~$\{v_f,l_s\}$ if the last marker of $f$ is the first marker of~$s$ or of~$s'$, and the edge ~$\{v_f,r_s\}$ if the first marker of $f$ is the last marker of~$s$ or of~$s'$.
  \item For an unfixed solid piece $s$,~$G$ contains the
    edge~$\{v_f,v_s\}$ if the first marker of $f$ is the last marker
    of either~$s$ or~$s'$ or if the last marker of $f$ is the first
    marker of either~$s$ or~$s'$.
\end{itemize}
Note that the vertices $v_s$ or $l_s$ and $r_s$ are only defined for
pieces $s\in S_x$, but they represent both pieces $s$ and
$s'$. Observe furthermore that in case~$V_F\neq \emptyset$, there are
fragile pieces in~$\cons$ that do not have a frame
in~$\Phi$. Moreover, note that in this case the edge set of the piece
graph is nonempty. Our aim will thus be to gradually apply the
frame~rules until the piece graph is edge-less.  Each vertex is called
\emph{fragile}, \emph{fixed} or \emph{repetitive} depending on the
nature of the piece it represents. Note that most vertices of the
graph have degree at most 2, except for repetitive vertices which can
have degree up to 4. Vertices with smaller degree correspond initially
to the four pieces at the end of the sequences.

In order to deal seamlessly with pieces at the end of the input strings
(where no fragile piece is adjacent on one side), we introduce 
``phantom frames'' as follows. If $s$ contains the first
element of a string, i.e. $x[1]$ or $y[1]$, 
we say that $s$ has the \emph{phantom frame} $[x[0],x[1]]$ (resp. $[y[0],y[1]]$) 
to its left.
Likewise, if $s$ contains $x[n]$ or $y[n]$, it has the 
\emph{phantom frame} $[x[n],x[n+1]]$ (resp. $[y[n],y[n+1]]$) 
to its right. 
The idea behind phantom frames is as follows: solid pieces at the end of the strings
have one specific constraint, namely the first or last element is fixed. Since frames are 
used to bound the positions of endpoints of solid pieces, using phantom frames yields
that the constraint on end-of-string pieces is just a particular case of the general 
``endpoints in frames'' constraint.

We now have collected the prerequisites to state the frame~rules. A
frame rule is an algorithm that receives as input a constraint~\cons
and a frame set~$\Phi$ and updates both into a constraint~\cons' and a
frame set~$\Phi'$. A frame~rule is \emph{correct} if following holds. First, if
there is a size-$k$ CSP~$\sol$ satisfying~$\cons$ and~$\Phi$, then
there is also a size-$k$ CSP~$\sol'$ satisfying~$\cons'$
and~$\Phi'$. Second, the longest undiscovered block in~$\sol'$ is at
most as long as the longest undiscovered block in~$\sol$ (this
additional restriction will be used to argue that the choice
of~$\beta$ remains correct). Note that without loss of generality, we
describe all rules for pieces in~$x$ but they apply to fragile pieces
in~$x$ and~$y$. Furthermore, if a rule works on a single fixed vertex
in the piece graph, then we assume that this vertex is a left
vertex~$l_s$ (by inverting the instance one can also deal with all
right vertices). Finally, we state the additional frames of all rules
by defining an interval which contains the window, in order to ensure
that the frames are within the fragile pieces, we always intersect
this interval with the considered fragile piece~$f$.
\begin{figure*}[t]
\centering
\begin{tabular}{l@{\hspace{2.5cm}}l}
Frame Rule 1.&Frame Rule 2.\\
 \includegraphics[scale=1]{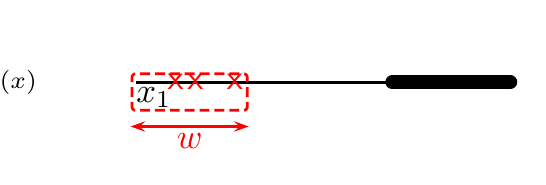}&\includegraphics[scale=1]{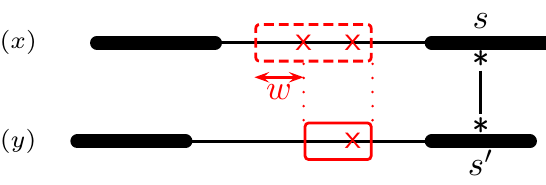}
 \\ 
Frame Rule 3.&Frame Rule 4.\\
 \includegraphics[scale=1]{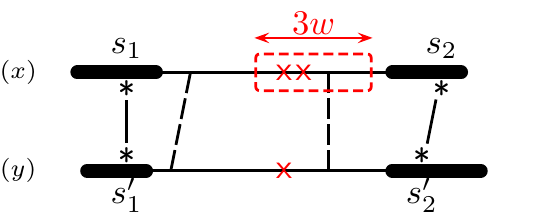}&\includegraphics[scale=1]{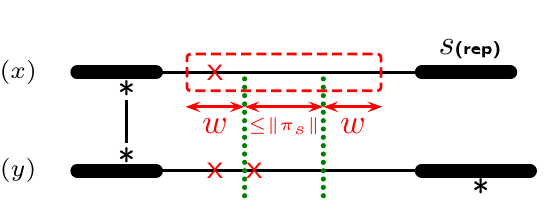}
 \\ 
Frame Rule 5.&Frame Rule 6.\\
 \includegraphics[scale=1]{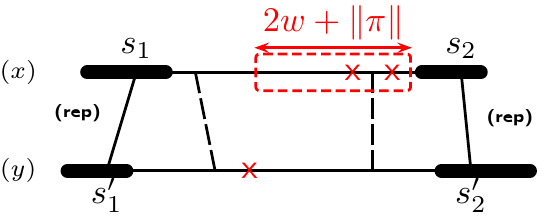}&\includegraphics[scale=1]{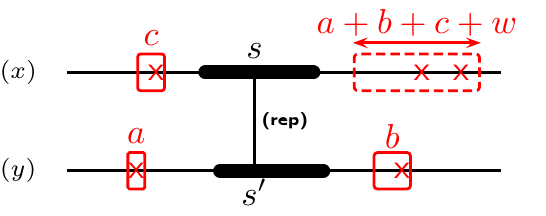}
\end{tabular}
 \smallskip
 \caption{ \label{fig:rules} Frame Rules 1--6  of 
\framesf. Frames are drawn as red boxes, the frame created at each step is dashed. 
Possible breakpoint
positions  in \sol\ are shown as red crosses.}
\end{figure*}
The first rule puts frames into fragile pieces at the end of the string.
\begin{Frame Rule} 
\label{rule:fragile-end} If the piece graph
  contains a fragile degree-one vertex~$v_f$, then $f$ contains either
  $x[1]$ or $x[n]$.  If~$f$ contains~$x[1]$ add $f\cap [x[1], \shiftr{x[1]}{w}]$
  to~$\Phi$, otherwise add~$f\cap [\shiftl{x[n]}w, x[n]]$ to~$\Phi$.
\end{Frame Rule}
\begin{proof}[of the correctness of Frame~Rule~\ref{rule:fragile-end}]
  Fragile pieces of~$x$ that do not contain the first or the last
  marker of~$x$ are preceded and followed by a solid piece (since the
  splitting is alternating) and thus the corresponding vertex in the
  piece graph has degree two. Hence, a fragile piece in~$x$
  corresponding to a degree-one vertex of the piece graph contains
  either the first or the last marker of~$x$.  Assume without loss of
  generality that $f$ contains $x[1]$. The leftmost block of \sol\ in
  $x$ is necessarily an undiscovered block since it is contained in the
  fragile piece~$f$.
  Hence, marker $x[1]$ belongs to the first undiscovered block of \sol and
  it is next to a breakpoint of~\sol. Since the window (which contains
  all breakpoints in~$f$) has length at most~$w$, it is contained in the
  created frame $[x[1],\shiftr{x[1]}w]$. 
\end{proof}

\begin{Frame Rule} \label{rule:propagation} If the piece graph
  contains a degree-one vertex $l_s$ with neighbor~$v_f$ such that~$f$
  is next to~$s$ and~$s'$ does not contain~$y[1]$, then: 
	let~$[\shiftl{s'^*}{u},\shiftl{s'^*}{v}]$ be the (possibly phantom)
	frame to the left of $s'$ in $y$;
	add
  the frame $f\cap [\shiftl{s^*}{(u+w-1))},\shiftl{s^*}{v}]$
  to~$\Phi$.
\end{Frame Rule}
\begin{proof}[of the correctness of Frame~Rule~\ref{rule:propagation}]
Consider first the case where $[\shiftl{s'^*}{u},\shiftl{s'^*}{v}]$
is a phantom frame: in this case, $\shiftl{s'^*}{v}$ is $y[1]$ and $u=v+1$. 
Hence, $y[1]$ is the first element of the block containing $s'$.
Since $y[1]$ and $\shiftl{s^*}{v}$ are equidistant from $s$, 
$\shiftl{s^*}{v}$ is the first element of the block containing $s$ 
 and the last element of the window in $f$. Since the window has length at
most $w-1$, it is contained in the frame $[\shiftl{s^*}{(v+w)},\shiftl{s^*}{v}]=[\shiftl{s^*}{(u+w-1))},\shiftl{s^*}{v}]$.

Consider now the (regular) case where $s'$ has a fragile piece $g$ to its left.  
By the frame definition, all breakpoints of a satisfying CSP \sol 
that are in~$g$ are within
   $[\shiftl{s'^*}{u},\shiftl{s'^*}{v}]$. Hence,  \shiftl{s'^*}{v} is in the same block as~$s'$. Consequently, the right limit of the window in~$f$ is to the left of  \shiftl{s^*}{v} in~$f$. Similarly, \shiftl{s'^*}{u} is in a different block than~$s'$ and thus there is a breakpoint to the right of \shiftl{s^*}{u} in~$f$. All other breakpoints in~$f$ can have distance at most~$w$ from this breakpoint. Hence, all breakpoints in~$f$ are contained in the created frame
  $[\shiftl{s^*}{(u+w-1)},\shiftl{s^*}{v}]$. 
\end{proof}

The above rules are relatively straightforward inferences of frame
positions that can be made because the piece graph has degree-one
vertices.  We now show some more intricate rules
that deal with the remaining cases. In particular, we show how to deal
with cycles in the piece graph.  We first consider cycles without
repetitive solid pieces. Note that the following rule performs a
branching. We thus extend the correctness notion to hold if there is
at least one branch in which the created constraint and
frame set can be satisfied.

\begin{Frame Rule} \label{rule:aligned-cycles} If the piece graph
  contains a simple cycle without repetitive vertices, then create one branch 
  for each edge~$\{v_f,u_s\}$ of this cycle. In each branch, add to~$\Phi$ the frame 
  \begin{itemize}
  \item 
    $f\cap [\shiftl{\lext{s}}{w}, \shiftr{\lext{s}}{(2w)}]$ if~$u_s=l_s$ for some solid piece~$s$, or
  \item $f\cap [\shiftl{\rext{s}}{2w}, \shiftr{\rext{s}}{(w)}]$ to~$f$ if~$u_s=r_s$ for some solid piece~$s$.
  \end{itemize}
\end{Frame Rule}

The following is a straightforward property of constraints and
satisfying solutions and used for showing the correctness of Frame~Rule~\ref{rule:aligned-cycles}.
\begin{proposition}\label{prop:keep-aligned}
 Let $s$ be a fixed solid piece in a constraint \cons. If markers $a$ and
$a'$ are equidistant from~$s$, then for any integer $i$, $\shiftr ai$ and $\shiftr
{a'}i$ are equidistant from~$s$.
 Moreover, given a CSP \sol satisfying \cons, the first markers (the last markers)
of the blocks of \sol containing $s$ and $s'$ are equidistant from~$s$.
\end{proposition}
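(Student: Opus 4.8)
The plan is to reduce both claims to the single elementary fact that, for markers lying in a common string, the signed distance is just the difference of positions and is therefore additive: if $a$, $b$, $c$ all lie in one string, then $\size{ac}=\size{ab}+\size{bc}$. This follows immediately from the definition of $\size{\cdot}$ by the three-case split ($a$ before $b$, $a=b$, $a$ after $b$), or simply by observing that $\size{ab}$ equals the position of $b$ minus the position of $a$. With this in hand, everything is a one-line manipulation.

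For the first claim I would set $b=\shiftr ai$ and $b'=\shiftr{a'}i$, so that $\size{ab}=\size{a'b'}=i$ by the definition of the offset operator. Here $a$, $b$ and $s^*$ all lie in~$x$, while $a'$, $b'$ and ${\til s}^*$ all lie in~$y$, so additivity applies separately in each string: $\size{s^* b}=\size{s^* a}+i$ and $\size{{\til s}^* b'}=\size{{\til s}^* a'}+i$. Since $a$ and $a'$ are equidistant from~$s$, we have $\size{s^* a}=\size{{\til s}^* a'}$; adding~$i$ to both sides gives $\size{s^* b}=\size{{\til s}^* b'}$, which is exactly the statement that $b$ and $b'$ are equidistant from~$s$.

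For the second claim, let $\sol$ satisfy~$\cons$ and let $X$ be the block of~$\sol$ containing~$s$. Since $s$ is solid and $\sol$ satisfies Condition~\ref{prop:dont-break-solid}, the piece $s$ lies entirely inside~$X$, so in particular its first marker $s^*$ lies in~$X$; by Condition~3 the matched piece $s'$ lies in the block~$Y$ of~$\sol$ matched to~$X$. By Condition~4, $s^*$ and ${\til s}^*$ are matched in~$\sol$, hence they occur at a common position~$p$ inside~$X$ and~$Y$ respectively, and in particular ${\til s}^*\in Y$. Because matched blocks represent the same string, the first marker $a$ of~$X$ (position~$1$) is matched to the first marker $a'$ of~$Y$ (also position~$1$), and similarly the last markers of $X$ and $Y$ are matched (both at position $\length{X}=\length{Y}$). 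Applying additivity inside~$x$ and inside~$y$ gives $\size{s^* a}=1-p=\size{{\til s}^* a'}$ for the first markers, and the analogous identity with $1$ replaced by $\length{X}$ for the last markers; in either case the two markers are equidistant from~$s$.

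I do not expect a real obstacle here, since every step is a direct consequence of a definition. The only points that need a moment of care are (i) making sure each distance manipulation stays within a single string so that additivity is legitimate, and (ii) invoking the correct clauses of the ``satisfies'' definition — Condition~\ref{prop:dont-break-solid} to place the whole solid piece inside one block of~$\sol$, Condition~3 to identify the matched block, and Condition~4 to locate ${\til s}^*$ inside that block.
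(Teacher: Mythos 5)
Your proof is correct and follows essentially the same route as the paper: the first part is the additivity identity $\size{s^*(\shiftr ai)}=\size{s^*a}+i=\size{{\til s}^*a'}+i=\size{{\til s}^*(\shiftr{a'}i)}$, and the second part uses Condition~4 to place $s^*$ and ${\til s}^*$ at the same position in their matched blocks, from which the equidistance of the block endpoints is immediate. Your version merely spells out the justification of additivity and the relevant clauses of the ``satisfies'' definition in slightly more detail than the paper does.
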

\begin{proof}
 The first part is directly obtained by definition: 
  \begin{equation*}   
  \size{s^*(\shiftr ai)}=\size{s^*a}+i
     =\size{s'^*a'}+i
     =\size{s'^*(\shiftr {a'}i)}.
  \end{equation*}
  For the second part, simply note that if~$s^*$ is at position~$j$ in
  the block containing~$s$, then~$s'^*$ is also at position~$j$
  in~$s'$. Hence, the first markers (and thus also the last markers)
  of both blocks are equidistant from~$s$.  
\end{proof}

\begin{proof}[of the correctness of Frame Rule~\ref{rule:aligned-cycles}]
  Let $\mathfrak{P}$ be the set of CSPs that satisfy the constraint \cons\ and frame
  set~$\Phi$ and additionally have a minimum total length of undiscovered
  blocks. We show that
  there is a~$\sol\in \mathfrak{P}$  which has all breakpoints
  in~$[\shiftl{\lext{s}}{w}, \shiftr{\lext{s}}{(2w)}]$ for some
  vertex~$l_s$ of the cycle, thus showing correctness of the rule.

  Since the piece graph~$G[\cons,\Phi]$ is bipartite with partition~$V_S$
  and~$V_F$, the cycle alternates between vertices of $V_S$ and $V_F$.
  Moreover, all cycle vertices from $V_S$ are fixed, and alternate
  between left and right vertices (each fragile vertex of the cycle 
	is adjacent to a left vertex and to a right
  vertex). Hence there exist solid pieces $s_1,s_2,\ldots, s_\ell$ and
  fragile pieces $f_1,f_2,\ldots, f_\ell$ such that the cycle is
  $(l_{s_1}, v_{f_1}, r_{s_2}, v_{f_2}, \ldots, l_{s_{\ell-1}},
  v_{f_{\ell-1}}, r_{s_\ell}, v_{f_\ell})$. For simplicity, we consider
  indices only modulo $\ell$ (that is, $s_{\ell+1}=s_1$, $f_0=f_\ell$,
  etc.), and we assume that fragile pieces with odd indices are in $x$
  and those with even indices are in $y$.  
\begin{figure}[t]\centering
\includegraphics{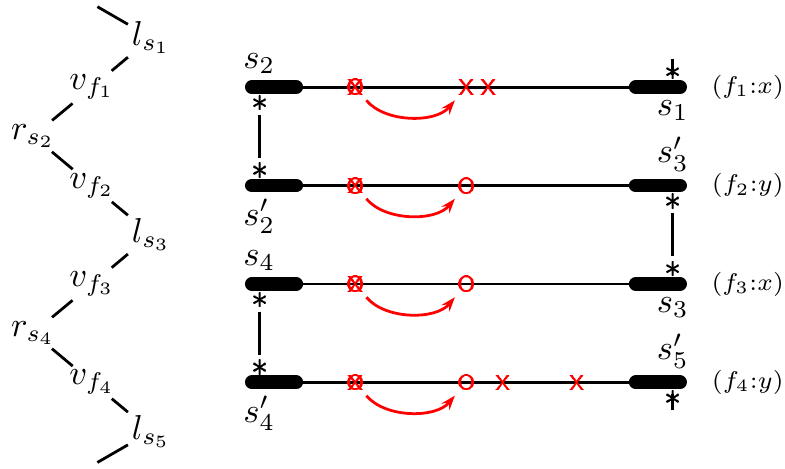}
\caption{\label{fig:proof-rule5a}Illustration for the first part of the correctness proof of Frame~Rule~\ref{rule:aligned-cycles}. If two fragile windows $f_i,f_j$ with different parity have several breakpoints (here, 
$i=1$ and $j=4$), then we can shift the position of the leftmost breakpoint in each fragile piece of the path to reduce the length of undiscovered blocks. The modifications (breakpoints added or deleted) are shown as red circles.}
\end{figure}
Consider a CSP~$\sol\in \mathfrak{P}$ such that there is no~$l_s$
whose window is contained in~$[\shiftl{\lext{s}}{w},
\shiftr{\lext{s}}{(2w)}]$. We transform this CSP into one that
fulfills this property. We first prove that in~$\sol$ either all
fragile pieces with odd or all fragile pieces with even indices
contain only one breakpoint. Assume towards a contradiction, that
there exist integers $i< j$ of different parity such that $f_{i}$ and
$f_{j}$ both have windows with at least two breakpoints and for each
$h$ with $i<h<j$, $f_h$ contains only one breakpoint.  Assume without
loss of generality that $i$ is odd and~$j$ is even. Hence,~$f_i$ is in
$x$ to the right of $s_{i+1}$ and $f_j$ is in $y$ to the right of
$s_j$.

For all $h$, $i\leq h\leq j$, let $a_h$ be the leftmost marker of the
window in $f_h$, and $b_h=\shiftr{a_h}1$.  For odd $h$, $a_h$ and
$a_{h+1}$ are the rightmost markers of the blocks containing $s_{h+1}$
and $s'_{h+1}$ and thus equidistant from $s_{h+1}$.  For even $h<j$, $b_h$
and $b_{h+1}$ are the left endpoints of the blocks containing
$s_{h+1}$ and $s'_{h+1}$, so they are equidistant from $s_{h+1}$. By
Proposition~\ref{prop:keep-aligned}, for all $i\leq h < j$, $[a_h,b_h]$
and $[a_{h+1},b_{h+1}]$ are equidistant from $s_{h+1}$. By definition of
$a_h$, the window in each $f_h$ is contained in $[a_h,\shiftr{a_h}w]$.
If one of these intervals is not contained in the maximum extension of
an adjacent solid piece, say~$[a_h,\shiftr{a_h}w]$ is not contained in
the maximum extension of~$s_{h+1}$, then~$\lext{s_{h+1}}$ is contained in~$[a_h,\shiftr{a_h}w]$. Hence, the window is contained in
$[\shiftl{\lext{s_{h+1}}}w,\shiftr{\lext{s_{h+1}}}w]$, contradicting our assumption on~$\sol$. In the following,
we thus assume that all intervals $[a_h,\shiftr{a_h}w]$ are contained in
the maximum extension of adjacent solid pieces, which by
Proposition~\ref{prop:aligned-in-extension} implies that they all have
the same content. In particular, this implies
$[a_i,\shiftr{a_i}w]\equiv [a_j,\shiftr{a_j}w]$.

We now describe a modification of~$\sol$ that results in a new CSP
which is not larger than~$\sol$, also satisfies the constraint and
frame set but has smaller total length of undiscovered blocks; the
modification is illustrated in
Figure~\ref{fig:proof-rule5a}. Let~$u+1$ and~$v+1$ be the lengths of
the leftmost undiscovered blocks in $f_i$ and $f_j$ respectively (assume
without loss of generality that $u\leq v$). These two undiscovered blocks are
thus $[b_i,\shiftr{b_i}u]$ and $[b_j,\shiftr{b_j}v]$, and they are
matched in \sol\ to other undiscovered blocks $[b_i',\shiftr{b_i'}u]$ and
$[b_j',\shiftr{b_j'}v]$. Note that since~$f_i$ is odd and~$f_j$ is
even,~$[b_i,\shiftr{b_i}u]$ is in a different string than
$[b_j,\shiftr{b_j}v]$. To create the new solution \sol' from \sol
apply the following modifications.  First, cut out $u+1$ markers from
the left of $[b_j',\shiftr{b_j'}v]$ (recall that~$u\le v$) which gives
two new blocks~$[b_j',\shiftr{b_j'}u]$
and~$[\shiftr{b_j'}(u+1),\shiftr{b_j'}v]$ if~$u<v$ and leaves the
block~$[b_j',\shiftr{b_j'}v]$ unmodified if~$u=v$. Now, match block
$[b_i',\shiftr{b_i'}u]$ to $[b_j',\shiftr{b_j'}u]$ (recall that these
blocks are in different strings). Now, shift the breakpoints of the
fragile pieces of the cycle as follows. For every odd $h$, $i<h<j$,
cut out $u+1$ markers from the left of $s_h'$ and $s_h$.  And for
every even $h$, $i<h\leq j$, add $u+1$ markers to the right of the
blocks containing $s_h$ and $s_h'$. Finally, in case~$u<v$, match the
shortened block~$\shiftr{b_j}(u+1),\shiftr{b_j}v$ to the
block~$\shiftr{b_j'}(u+1),\shiftr{b_j'}v$ created in the first
step. Note that by the discussion above, the pieces added to~$s_h$
and~$s'_h$ for even~$h$ have the same content. Hence, all matched
blocks have equal content. Furthermore, since the
block~$[b_i,\shiftr{b_i}u]$ is now unmatched, its markers are free to
be added to~$s_{i+1}$.

This new solution has at most as many blocks as $\sol$: we have
created at most one new breakpoint in~$[b_j',\shiftr{b_j'}v]$ and
removed a breakpoint in~$f_i$ by adding exactly~$u+1$ markers to the
right of~$s_{i+1}$. For all other fragile pieces~$f_h$, the breakpoint
has ``only'' been shifted to the right.  Furthermore,~$\sol'$
satisfies the same constraint \cons\ as \sol: the matching only
changed between undiscovered blocks which are not constrained. Moreover, the
fragile pieces for which the breakpoints have been modified are either
frameless (if they are on the cycle) or the modification adds a breakpoint that is between two breakpoints (in the modification of~$[b'_j,\shiftr{b'_j}v]$) However,the total length of the
undiscovered blocks has been reduced by~$2(u+1)$, which contradicts the
choice of \sol. 
We now know that in~$\sol$ the undiscovered blocks of
the cycle are either all in $x$ or all in~$y$. In the following, we
assume they are all in $x$, that is, in fragile pieces $f_j$ with odd
$j$. We now consider the following two cases: either there is no undiscovered
block, even in $x$, or there is at least one.

First consider the case that there is no undiscovered block in the cycle, that is, all the
windows contain only one breakpoint $[a_h,b_h]$.  If all markers
$b_h$ are within the maximum extensions of both adjacent solid pieces, we
create a new solution $\sol'$ from $\sol$ as follows: for every odd
$h$, cut out $b_h$ and $b_{h-1}$ from the left end of the blocks
containing $s_{h}$ and $s'_{h}$, and for every even $h$, add $b_h$ and
$b_{h-1}$ to the right end of the blocks containing $s_{h}'$ and
$s_{h}$. The solution $\sol'$ 
satisfies the same constraints as \sol, with the same total length of
undiscovered blocks. Repeat this operation of shifting the breakpoints to the
right until for some $i$ (without loss of generality, assume $i$ is even), $b_i$ is to the
right of $\rext{s_i}$. Then, the rule is correct, since for some branch the edge
$(r_{s_i}, v_{f_i})$ is selected and the frame $[\shiftl{\rext{s_i}}2w,
\shiftr{\rext{s_i}}w]$ which contains the only breakpoint of \sol in $f_i$ 
is added to $\Phi$.

\begin{figure}[t]\centering
\includegraphics{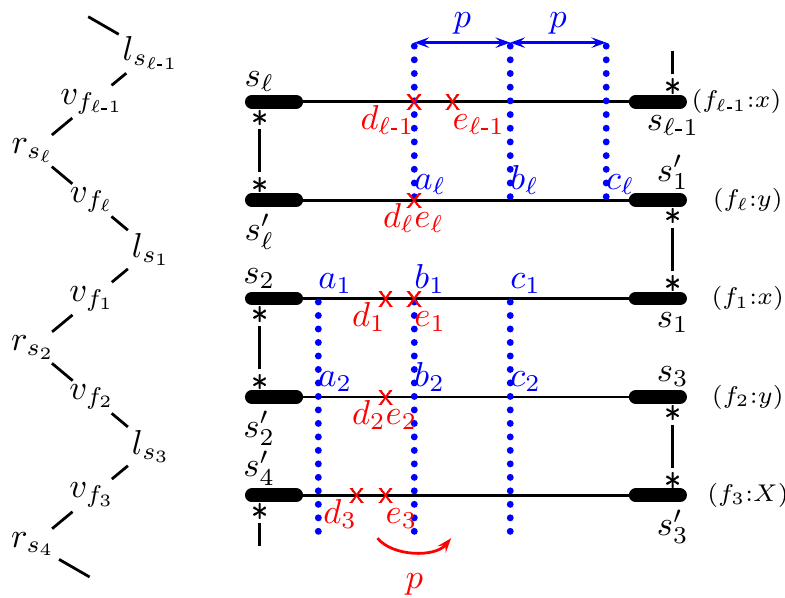}
\caption{\label{fig:proof-rule5b}Illustration for the correctness 
proof of the second part of~Frame~Rule~\ref{rule:aligned-cycles}. 
Given a cycle with total length of undiscovered blocks $p\geq 0$ we 
construct intervals $[a_h,b_h]$ and $[b_h,c_h]$ as shown (delimited 
by blue dotted lines). All the breakpoints in  intervals 
$[a_h,b_h]$ can be shifted to the corresponding $[b_h,c_h]$.}
\end{figure}
It remains to show the case where there is at least one undiscovered block in the fragile pieces of the cycle, that is, the total length $p$ of the
undiscovered blocks of the cycle is at least one. Note that by the choice
of~$w$,~$p<w$. We now show that the strings around the windows are
periodic with period length $p$, so that we can again shift all the
breakpoints of the fragile pieces to the right by steps of length $p$,
until at least one of them has distance at most $p$ from the end of a
maximum extension.

We first introduce some notations (see~Figure~\ref{fig:proof-rule5b}
for an illustration): for each $h$, let $[d_h,e_h]$ denote the window
of $f_h$. Let $b_1=e_1$, $a_1=\shiftl{b_1}p$,
$c_1=\shiftr{b_1}p$, and for each $h$, $2\leq h\leq \ell$, let~$a_h$,
$b_h$, and $c_h$ be the markers equidistant with $a_{h-1}$, $b_{h-1}$,
and $c_{h-1}$ from~$s_h$.

We first show that for every $h$ with $2\leq h\leq \ell$, we have
\begin{equation}\label{eq:ehbh_grows}
 \size{e_{h}b_h}=\size{d_{h-1}b_{h-1}}-1.
\end{equation}
For even values of $h$, $d_{h-1}$ and $d_h$ 
are equidistant from~$s_h$, so $\size{d_h b_h}=\size{d_{h-1} b_{h-1}}$. Since $f_h$ is in string $y$, it contains only one breakpoint, and thus
$\size{d_{h}e_{h}}=1$ and 
Equation~\eqref{eq:ehbh_grows} follows. 
For odd values of $h$, we have $\size{d_{h-1}e_{h-1}}=1$, and
$e_{h-1}$ and $e_h$ are equidistant from~$s_h$, thus
$\size{e_{h}b_h}=\size{e_{h-1}b_{h-1}}$, which also implies
equation~\eqref{eq:ehbh_grows}. 
Hence the distance between window endpoint~$e_h$ and the marker $b_h$ increases, compared to the distance of~$e_{h-1}$ and~$b_{h-1}$ by the length
of the undiscovered blocks contained in the window of~$f_{h-1}$.  
This has two implications: first, in $f_\ell$, we have $\size{e_\ell b_\ell}=p$ and thus~$e_\ell=a_\ell$ (by definition $a_1$ has distance~$p$ from~$b_1$, and
this distance is conserved through the cycle). Second, for every~$j$, the
undiscovered blocks in~$f_j$ are contained in~$[a_j,b_j]$, and the window is contained in
$[\shiftl{a_j}1,b_j]$.

First, consider the case where each interval $[a_h, c_h]$ is contained in
the maximal extensions of both adjacent blocks. 
Thus, with Proposition~\ref{prop:aligned-in-extension}, 
we have $[a_h,b_h]\equiv[a_1,b_1]$ and $[b_h,c_h]\equiv[b_1,c_1]$ for
all $h$. 
We can now ``close'' the cycle: since $e_\ell$ and $e_1$ are the left
endpoints of the blocks containing $s_1'$ and $s_1$, they are equidistant from
$s_1$. Moreover, $e_\ell=a_\ell$ and $e_1=b_1$, so $a_\ell$ and $b_1$ are
equidistant from~$s_1$,
 which implies that $[a_\ell,b_\ell]\equiv[b_1,c_1]$. This now implies that, for all $h$,
$[a_h,b_h]\equiv [b_h,c_h]$. We now create a solution $\sol'$ from $\sol$ as
follows: for odd values of~$h$, cut out the $p$ leftmost markers from each block containing $s_h$ or
$s_h'$. For even values of~$h$, add $p$ markers to the right of blocks containing $s_h$
or $s_h'$ for even values of $h$. Match every undiscovered block that was matched
to some $[u,v]$ in some $f_h$ to $[\shiftr up, \shiftr vp]$ instead.
The solution $\sol'$ is again a CSP satisfying the same constraints,
with the same total length of undiscovered blocks but with all the breakpoints in the
cycle shifted
to the right by $p$ positions. Repeat this operation until for some $h$ the
interval $[a_{h}, c_{h}]$ is no longer contained in the maximal extension of the block to its right. Then, $[a_{h}, c_{h}]$
  contains $\lext{s_{h}}$, and interval $[\shiftl{a_h}1,b_h]$ is contained in $[\shiftl{\lext{s_h}}{2w},\shiftr{\lext{s_h}}w]$.
As argued above, the rule is correct if such a~$\sol\in \mathfrak{P}$ exists. Note that the modifications made in the proof do not increase the length of any undiscovered block. Hence, the second requirement for correctness is also satisfied.
\end{proof}

The rules presented so far deal with fixed solid pieces. In fact, if all
solid pieces are fixed, then these rules suffice to obtain frames in
all fragile pieces. With the following three rules, we thus deal with
the presence of repetitive solid pieces. 

\section{Frame Rules for Repetitive Pieces}
\label{sec:frames-repetitive}
In the rules, we have to deal with cycles in the piece graph that
 contain some repetitive vertices. We introduce the following concepts in
order to analyze the structure of paths between repetitive vertices
that contain fixed solid vertices.
\begin{figure}[t]\centering
 \includegraphics{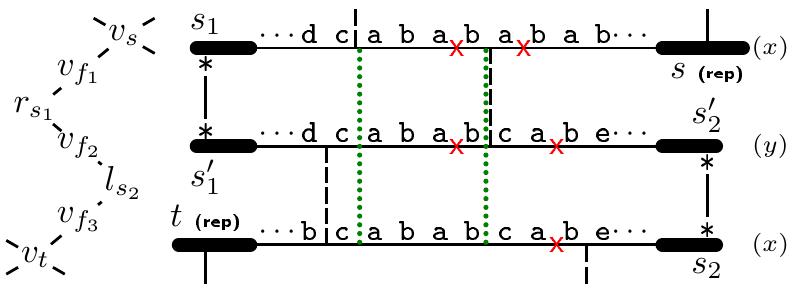}
 \caption{\label{fig:strips} Example of a \reprep\ path joining repetitive vertices
$v_s$ and $v_t$ (with respective periods \texttt{ab} and \texttt{ababc}), and
going through three fragile vertices and their adjacent fixed vertices.
 The strip of each fragile piece is delimited by the green dotted lines.}
\end{figure}
A $\reprep$ path~$(v_s, v_{f_1}, u_1,v_{f_2}, u_2, \ldots, u_{\ell-1}, v_{f_\ell} ,v_t)$ is a simple path of the piece graph such that the two endpoints~$v_s$ and~$v_t$ are repetitive vertices, and each~$u_i$ is a fixed solid vertex. Given a $\reprep$
path joining repetitive vertices $v_s, v_t$ and going through fragile
vertices $v_{f_1}, v_{f_2}, \ldots, v_{f_\ell}$, we define the
\emph{strip} of the path (see Figure~\ref{fig:strips}) as a set of intervals $\{I_{f_1}, I_{f_2},
\ldots, I_{f_\ell}\}$ such that:
\begin{enumerate}
\item Consecutive intervals~$I_{f_i},I_{f_{i+1}}$ are equidistant from
  the solid piece represented by~$u_i$. \label{prop:aligned-strips}
 \item Each interval $I_{f_i}$ is contained in the maximum extensions of 
 the two solid pieces next to $f_i$.
\label{prop:strips-in-extensions}
 \item The length of $I_{f_1}$ is maximal under Conditions~\ref{prop:aligned-strips} and~\ref{prop:strips-in-extensions}. \label{prop:maximal-strips}
\end{enumerate}
\begin{proposition}\label{prop:nice_strips}
  All the strips in a \reprep\ path have the same length and
  content. Each interval of the strip is contained in its
  respective fragile piece.  Moreover, the strip of a \reprep~path is
  uniquely defined and computable in polynomial time.
\end{proposition}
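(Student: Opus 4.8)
The plan is to establish the four assertions in the order they are stated, exploiting the fact that the defining conditions of a strip are "propagation" conditions along the path. First I would show that all strips have the same length and content. By Condition~\ref{prop:aligned-strips}, consecutive intervals $I_{f_i}$ and $I_{f_{i+1}}$ are equidistant from the fixed solid piece represented by $u_i$; in particular they have the same length (equidistance of two intervals means both their left and right endpoints are equidistant, so $\size{}$-distances, hence lengths, agree). Walking along the path this gives $\length{I_{f_1}}=\length{I_{f_2}}=\cdots=\length{I_{f_\ell}}$. For the content, Condition~\ref{prop:strips-in-extensions} places each $I_{f_i}$ inside the maximum extensions of the two solid pieces adjacent to $f_i$; in particular, for the fixed piece $u_i$, both $I_{f_i}$ and $I_{f_{i+1}}$ lie in the two maximum extensions associated with $u_i$, so Proposition~\ref{prop:aligned-in-extension} yields $I_{f_i}\equiv I_{f_{i+1}}$. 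Chaining along the path gives equal content for all strips.

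Next I would argue that each interval of the strip lies inside its own fragile piece. Fix $I_{f_i}=[a,b]$. It is contained in the maximum extension $[\lext{s_i},\rext{s_i}]$ of each adjacent solid piece $s_i$ (Condition~\ref{prop:strips-in-extensions}). By Proposition~\ref{prop:block-limits}, the window in $f_i$ contains at least one marker of $[\lext{s_i},\rext{s_i}]$ for each side; more importantly, the block of any satisfying CSP containing $s_i$ lies inside $[\lext{s_i},\rext{s_i}]$. Since $f_i$ is sandwiched between its two adjacent solid pieces and those pieces' maximum extensions cannot "cross" past $f_i$ into the pieces on the far side (by the "no marker in another solid piece" clause in the definition of $\rext{}/\lext{}$), the intersection of the two adjacent maximum extensions is contained in $f_i$ together with one marker of each adjacent solid piece. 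Hence any interval contained in \emph{both} adjacent maximum extensions (in particular $I_{f_i}$, by Condition~\ref{prop:strips-in-extensions} again — it lies in the extension from the left and from the right) is contained in $f_i$. This is the place where I expect to have to be a little careful: I need the two maximum extensions of $f_i$'s neighbours to overlap only inside $\overline{f_i}$, which follows because the extension of the left neighbour stops before entering the right neighbour (it would otherwise contain a marker of another solid piece) and vice versa, so their overlap is forced into the fragile piece between them.

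Finally, uniqueness and polynomial-time computability. Given the starting interval $I_{f_1}$, Condition~\ref{prop:aligned-strips} determines $I_{f_2},\ldots,I_{f_\ell}$ uniquely (equidistance from a fixed piece with a known reference marker is a deterministic shift), so a strip is determined by $I_{f_1}$ alone. Condition~\ref{prop:maximal-strips} then selects the unique longest admissible $I_{f_1}$: the admissible intervals $I_{f_1}$ form a nested family (if $[a,b]$ and $[a',b']$ both satisfy Conditions~\ref{prop:aligned-strips}--\ref{prop:strips-in-extensions} then so does $[\min(a,a'),\max(b,b')]$, since equidistance and containment in maximum extensions are both preserved under taking such a hull along the whole path), so the maximum exists and is unique. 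For computability, one simply tries to grow $I_{f_1}$ outward one marker at a time, propagating the candidate interval along the path via the reference markers of the $u_i$'s and checking at each of the $\ell\le 2k$ fragile pieces whether the propagated interval still lies in the relevant maximum extensions; each maximum extension is computable in polynomial time (it was computed in Line~3 of \framesf), and the whole interval has length $O(\poly(n))$, so the procedure runs in polynomial time. The main obstacle is really just the "each interval lies in its fragile piece" step — making precise why the two neighbouring maximum extensions cannot overlap outside the fragile piece — since the length-equality, content-equality, and uniqueness arguments are all direct propagations along the path using Propositions~\ref{prop:aligned-in-extension} and~\ref{prop:keep-aligned}.
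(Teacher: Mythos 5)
Your proof is correct and follows essentially the same route as the paper: same length/content via equidistance and Proposition~\ref{prop:aligned-in-extension}, containment in the fragile piece because the two adjacent maximum extensions can only overlap inside it, and uniqueness plus polynomial-time computability by propagating a candidate interval along the path. The only cosmetic difference is that you justify uniqueness abstractly (the admissible intervals are closed under taking convex hulls — note they need not be \emph{nested}, as you briefly say, but your parenthetical states the right property), whereas the paper folds uniqueness into an explicit greedy expansion algorithm; the two observations are equivalent.
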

\begin{proof}
  The fact that strips have the same length and content is a direct
  consequence of Proposition~\ref{prop:aligned-in-extension}, which can
  be applied according to Conditions~\ref{prop:aligned-strips}
  and~\ref{prop:strips-in-extensions}. Each strip must be contained in
  its fragile piece since it is in the intersection of the maximum
  extensions of the two adjacent solid pieces. 

  The second part of the claim can be seen by considering the
  following algorithm to compute the strip. First, check whether the
  strip is nonempty. That is, try the following for each marker~$a_1$
  in~$f_1$. Compute the marker~$a_2$ in~$f_2$ that is equidistant
  with~$a_1$ from~$u_1$. Then, compute the marker~$a_3$ in~$f_3$ that
  is equidistant with~$a_2$ from~$u_2$, and so on. If all~$a_i$'s are
  in the maximum extensions of both solid pieces next to~$f_i$, then
  the strip is nonempty. Otherwise, the length of~$I_{f_1}$ is
  zero. Now, assume the case that there was one~$a_1$ for which the
  above procedure is successful, that is,~$I_{f_1}$ contains one or
  more markers. Then, set~$I_{f_i}:=\{a_i\}$ for each~$i$. Now try to
  simultaneously expand all~$I_{f_i}$'s. That is, check whether one
  can add the marker to the left of each~$I_{f_i}$ without violating
  Condition~\ref{prop:strips-in-extensions} of the strip
  definition. If this is the case, then add these markers to
  the~$I_{f_i}$'s. If this is not the case, then continue by adding
  markers to the right until this is also not possible anymore. The resulting set of~$I_{f_i}$'s is the strip of the \reprep-path.
\end{proof}

\begin{proposition}\label{prop:breakpoints_close_to_strips}
  Let $\sol$ be any solution satisfying constraint \cons such that the
  total length of all windows in~$\sol$ is~$p$.  In each fragile
  piece $f$ of a \reprep\ path of \cons, writing $I_f=[c,d]$,
  the window of~$f$ is contained in~$[\shiftl cp, \shiftr dp]$.
\end{proposition}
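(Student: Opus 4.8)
The plan is to leverage the structure of the strip: by Proposition~\ref{prop:nice_strips} the intervals $I_{f_i}$ of a \reprep\ path all have the same length and content, and each is contained in its fragile piece. Fix a fragile piece $f=f_j$ on the path with $I_f=[c,d]$. I would argue by tracking the window endpoints around the path, in the same spirit as the second half of the correctness proof of Frame~Rule~\ref{rule:aligned-cycles}. The key observation is that consecutive intervals $I_{f_i}, I_{f_{i+1}}$ are equidistant from the solid piece $u_i$, and so—by Proposition~\ref{prop:keep-aligned} and Proposition~\ref{prop:aligned-in-extension}—the windows of consecutive fragile pieces on the path are ``coupled'' through the fixed solid vertices in exactly the way the $[a_h,b_h]$ intervals were coupled in the cycle proof: the leftmost (rightmost) marker of the window of $f_i$ and the leftmost (rightmost) marker of the window of $f_{i+1}$ are equidistant from $u_i$, up to the $\pm1$ shift contributed by breakpoints inside the window of $f_i$.

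The main step is a ``telescoping'' argument. Walking from the repetitive endpoint $v_s$ along the path to $f_j$, I would show that the left endpoint of the window of $f_j$ lies within distance $p_1$ of $\shiftl cp_1$ (where $p_1$ is the total window length accumulated in the fragile pieces strictly between $v_s$ and $f_j$ on that side), because each window we pass through can only ``push'' the endpoint by at most the length of that window, and the relation to the strip interval $I_{f_i}$ is preserved by equidistance whenever the relevant markers stay inside the maximum extensions of the adjacent solid pieces. Crucially, the strip interval $I_{f_i}$ is defined to be exactly the maximal interval that stays inside both maximum extensions and is equidistant along the path, so as long as a window endpoint is within distance $p$ of $I_{f_i}$, the coupling relation to $I_{f_{i+1}}$ still holds (the offset markers remain in the extensions). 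Doing the same walk from the other repetitive endpoint $v_t$ controls the right window endpoint of $f_j$ by $\shiftr d{p_2}$. Since $p_1+p_2 \le p$, the window of $f_j$ is contained in $[\shiftl c{p_1}, \shiftr d{p_2}] \subseteq [\shiftl cp, \shiftr dp]$.

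One subtlety to handle carefully is what the repetitive endpoints contribute at the very ends of the path. At $v_s$ (and $v_t$) the adjacent solid piece is repetitive rather than fixed, so instead of an equidistance relation I would use Proposition~\ref{prop:block-limits}: the window of the fragile piece next to a repetitive solid piece $s$ must contain at least one marker of $[\lext s,\rext s]$, and the strip interval $I_{f_1}$ is by definition inside $[\lext s,\rext s]$ as well; combined with periodicity of $s$ (Property~\ref{prop:repetitive-period} of satisfying CSPs) and Lemma~\ref{lem:periodicity}, this pins the window endpoint on that side to within $O(p)$—in fact within $p$—of the strip. The other potential obstacle is the bookkeeping of the $\pm 1$ per-window shifts and making sure they sum correctly to at most $p$ rather than accumulating an off-by-one per fragile piece; I expect the clean way to do this is to phrase the induction not on the window endpoints directly but on ``the window of $f_i$ is contained in $[\shiftl{c_i}{q_i}, \shiftr{d_i}{q_i}]$ where $I_{f_i}=[c_i,d_i]$ and $q_i$ is the total window length in the fragile pieces of the path seen so far,'' which makes the step amount to adding the current window's length. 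That inductive formulation, together with closing the two directional walks at $f_j$, is the heart of the argument; the rest is routine and parallels the cycle proof of Frame~Rule~\ref{rule:aligned-cycles} almost verbatim.
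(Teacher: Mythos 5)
Your propagation step is essentially the paper's: walk along the path and track the window endpoints through the fixed vertices, using the fact that at each fixed piece $u_i$ either the left or the right endpoints of the blocks containing $u_i$ and $u_i'$ are matched and hence equidistant from $u_i$ (Proposition~\ref{prop:keep-aligned}), while the strip intervals $I_{f_i}$, $I_{f_{i+1}}$ are equidistant by definition; so at each step one of the two offsets $\size{d_ja_j}$, $\size{d_jb_j}$ is preserved and the other grows by at most the window length of the new piece, and these increments telescope to at most $p$. (Your worry about the ``offset markers remaining in the extensions'' is a red herring: equidistance is purely positional and no content equality is needed for this proposition.)

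The genuine gap is where you anchor the telescoping. You anchor at the two repetitive endpoints of the path, arguing that the window of $f_1$ contains a marker of $[\lext{s},\rext{s}]$, that $I_{f_1}\subseteq[\lext{s},\rext{s}]$, and that periodicity then pins the window to within $p$ of $I_{f_1}$. This does not follow: the maximum extension of a repetitive piece can be very long, the block containing $s$ can end anywhere inside it (Condition~\ref{prop:repetitive-period} and Lemma~\ref{lem:periodicity} give no bound on where), and $I_{f_1}$ need not be tight against $[\lext{s},\rext{s}]$ at all --- its length and position may be dictated by a maximum extension of a fixed piece several steps further along the path, in which case the window of $f_1$ is controlled only by information propagated from that interior piece, never by anything local to $s$. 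Moreover, even when $s$ lies to the left of $f_1$, Proposition~\ref{prop:block-limits} yields only an \emph{upper} bound $a_1\le\rext{s}$ on the left window endpoint, not the lower bound your directional walk needs. The correct anchor is the tightness of the strip itself: by maximality (Condition~\ref{prop:maximal-strips}) there is some $f_{j_0}$ with $d_{j_0}=\rext{s}$ for the solid piece $s$ to the left of $f_{j_0}$, and Proposition~\ref{prop:block-limits} then forces $\size{d_{j_0}a_{j_0}}\le 0$, which starts the one-sided telescoping for the bound $\size{d_jb_j}\le p$ in both directions from $j_0$; the bound on the left endpoints is symmetric, anchored at the piece where the strip cannot be extended to the left. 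Replace your endpoint anchors by these two interior anchors and your argument goes through.
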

\begin{proof}
We first introduce some notations: 
let $f_1, f_2, \ldots, f_\ell$ be the fragile pieces of the path, and, for every
$1\leq j\leq \ell$, let $[a_j,b_j]$ denote the window of~$f_j$,~$I_{f_j}=[c_j,d_j]$, $\alpha_j=\size{d_ja_j}$ and $\beta_j=\size{d_jb_j}$.
 
Hence we aim at showing that for all~$j$,
$\beta_j\leq p$, that is,~$b_j$ is either to the left or at at most~$p$ markers to the right 
of~$d_j$. The proof for the left bound, that is, to show that~$a_j$ is
at most~$p$ markers to the left of~$c_j$ is symmetrical.

By maximality of the strip length (Condition~\ref{prop:maximal-strips}),
the intervals of the strip cannot be extended to the
right. Condition~\ref{prop:strips-in-extensions} is the one
constraining the strip length, hence there exists a fragile piece
$f_{j_0}$ in the path such that this condition is tight, that is,
$d_{j_0}=\rext{s}$, where $s$ is the solid piece to the left of
$f_{j_0}$. Hence, $a_{j_0}$ is not to the right of~$d_{j_0}$, and thus
$\alpha_{j_0}=\size{d_{j_0}a_{j_0}}=\size{\rext{s}a_{j_0}}\leq 0$.

Now for all~$j$, $\beta_j-\alpha_j=\length{[a_j,b_j]}-1$, that is, it is
the length of the window contained in~$f_j$ minus one. Consequently,~$\beta_{j_0}<\length{[a_{j_0},b_{j_0}]}$. Moreover, for every $1\leq
j<\ell$, either the first markers of the window of $f_j$ and $f_{j+1}$
are matched and thus equidistant to the piece represented by~$u_i$ or
the last markers of the window of~$f_j$ and~$f_{j+1}$ are matched and
thus equidistant to~$u_i$. Hence,~either $\alpha_j=\alpha_{j+1}$ or
$\beta_j=\beta_{j+1}$. In the first case,~$\beta_{j+1}$ increases,
compared to~$\beta_j$, by at most~$\length{[a_{j+1},b_{j+1}]}-1$. Hence,~$\beta_j\le
\beta_{j_0}+p$ for all~$j\ge j_0$. By symmetry, the same holds for
all~$j\le j_0$. 
\end{proof}

The following rule serves as a ``preparation'' of our main rule that
deals with cycles containing repetitive vertices. It will ensure that
if there is a cycle containing repetitive vertices, then these
repetitive vertices will have have a common shortest period.
\begin{Frame Rule} \label{rule:repetitive-small-strips} If the piece
  graph contains a \reprep\ path between repetitive vertices~$v_s$
  and~$v_t$ with strip~$\{I_{f_1}, \ldots, I_{f_\ell}\}$ such that the
 strip $I_f=[u,v]$ in $f$ is shorter than the sum $\length{\pi_s}+\length{\pi_t}$ of
 a shortest periods of~$s$ and~$t$,
  then add the frame $f\cap [\shiftl uw, \shiftr vw]$ to $f$.
\end{Frame Rule}
\begin{proof}[of the correctness of Frame~Rule~\ref{rule:repetitive-small-strips}]
  By definition,~$w$ is at least the total length of the windows of \sol.  By
  Proposition~\ref{prop:breakpoints_close_to_strips}, the endpoints of the window of~$f$ thus have distance at
  most~$w$ from~$I_f$.  
\end{proof}

\begin{Frame Rule}
\label{rule:repetitive-cycle}
If Frame~Rule~\ref{rule:repetitive-small-strips} does not apply and the piece graph contains a simple cycle with repetitive vertices,
then do the following. Let $\length{\pi}$ be the length of a shortest period
of any repetitive solid piece in the cycle. Then, create one branch for each edge~$\{v_f,u_s\}$ of the cycle where~$u_s$ is a solid vertex for the solid piece~$s$. In each branch, add to~$\Phi$ the frame
\begin{itemize}
\item $f\cap [\shiftl{\rext{s}}{(\length{\pi}+w)},
  \shiftr{\rext{s}}{w}]$ if $f$ is to the right of $s$, or
\item $f\cap [\shiftl{\lext{s}}{w},
  \shiftr{\lext{s}}{(\length{\pi}+w)}]$  if $f$ is to the left of~$s$.
\end{itemize}
\end{Frame Rule}
\begin{proof}[of the correctness of Frame~Rule~\ref{rule:repetitive-cycle}]
  First, all repetitive pieces of the path have a common shortest
  period. Indeed, consider any two consecutive repetitive pieces $s$
  and $t$ of the cycle: they are linked by a \reprep\ path, in which
  we compute the strips. All strips in this path have equal length~$S$
  and also equal content (Proposition~\ref{prop:nice_strips}).  Hence,
  the maximal extensions of repetitive pieces~$s$ and $t$ have a
  common substring of length~$S$.  Since
  Frame~Rule~\ref{rule:repetitive-small-strips} does not apply, we
  have $S\geq \length{\pi_s}+\length{\pi_t}$. Thus, the maximum
  extensions of $s$ and $t$ contain a common substring longer than the
  sum of their respective shortest
  periods. By~Lemma~\ref{lem:periodicity}, each shortest period of~$s$ is a shortest period of~$t$ and vice versa. Thus all repetitive pieces of the cycle have a
  common shortest period~$\pi$.

Let $s_1, s_2, \ldots , s_\ell$ denote the repetitive pieces crossed successively by
the cycle (again, we write $s_{\ell+1}=s_1$). For each $i$, $1\leq i\leq \ell$, let
 $\lpiece xi$, $\rpiece xi$, $\lpiece yi$, $\rpiece yi$ be the fragile pieces to the left and right of
$s_i$ in $x$ and $s_i'$ in $y$, respectively. For each \reprep\ path of the cycle from
$s_i$ to~$s_{i+1}$, we say the path is \emph{positive} if the first vertex after
$s_i$ is~$v_{\lpiece xi}$ or~$v_{\rpiece yi}$, and \emph{negative} otherwise. In positive
\reprep\ paths, fragile pieces in $x$ are crossed from right to left (that
is, the solid piece to the right of the fragile piece is ``seen'' before the solid piece to its left),
and fragile pieces in $y$ are crossed from left to right. Thus a positive path
enters $s_{i+1}$ via either $v_{\rpiece x{i+1}}$ or $v_{\lpiece y{i+1}}$, and likewise a
negative path enters $s_{i+1}$ via either $v_{\lpiece x{i+1}}$ or $v_{\rpiece y{i+1}}$. 

First, consider the case that all windows are contained within the
strip and that both endpoints of the piece have distance at least $\length{\pi}$ 
to the borders of the strip.
We show that in this case,
we can shift all breakpoints in positive paths to the right by step
$\length{\pi}$ positions and all breakpoints in negative paths to the
 left by $\length{\pi}$ positions.
This is done as follows:
\begin{itemize}
\item For each fixed vertex $l_s$ in a positive path, cut out
  $\length{\pi}$ markers from the left of the blocks containing $s$
  and $s'$.
\item For each fixed vertex $r_s$ in a positive path, add $\length{\pi}$ markers from the right of the blocks containing $s$ and $s'$. 
\item For each fixed vertex $l_s$ in a negative path,
add $\length{\pi}$ markers to the left of the blocks containing $s$ and $s'$.
\item For each fixed vertex $r_s$ in a negative path, 
  cut out
$\length{\pi}$  markers from the right of the blocks containing $s$ and $s'$. 
 \item Replace each undiscovered block $[a,b]$ in a fragile
piece of a positive path by $[\shiftr {a}{\length{\pi}}, \shiftr
{b}{\length{\pi}}]$. 
 \item Replace each undiscovered block $[a,b]$, in a fragile piece of
a negative path by $[\shiftl {a}{\length{\pi}}, \shiftl
{b}{\length{\pi}}]$. 
  \item For a repetitive vertex~$v_{s_i}$  such that the paths before and after~$v_{s_i}$ enter and leave~$v_{s_i}$ via the
same side (either  $\lpiece xi$ and  $\lpiece yi$, or $\rpiece xi$ and  $\rpiece yi$) either both paths are positive or both paths are negative. Apply the same operation as if the piece was fixed:
\begin{itemize}
\item If the path enters~$v_{s_i}$ via $\lpiece xi$ and leaves via $\lpiece yi$, then cut out the~$\length{\pi}$ leftmost markers of~$s$ and~$s'$ if the path is positive or add the~$\length{\pi}$ markers to the left of~$s$ and~$s'$ if the path is negative.
\item If the path enters~$v_{s_i}$ via $\rpiece xi$ and leaves via $\rpiece yi$, then cut out the~$\length{\pi}$ rightmost markers of~$s$ and~$s'$ if the path is negative or add the~$\length{\pi}$ markers to the right of~$s$ and~$s'$ if the path is positive.
\end{itemize}
\item For a repetitive vertex~$v_{s_i}$ such that the paths enter and leave the vertex via the
same string (either $\lpiece xi$ and  $\rpiece xi$, or $\lpiece yi$ and  $\rpiece yi$) it holds that the paths have the same orientation. Apply a similar operation as for a undiscovered block (assume without loss of generality that the path enters and leaves via~$x$):
\begin{itemize}
\item If~$v_{s_i}$ is between two positive paths then replace the block~$[a,b]$ of~$x$ containing~$s_i$ by $[\shiftr{a}{\length{\pi}},\shiftr{b}{\length{\pi}}]$.\item If~$v_{s_i}$ is between two negative paths then replace the block~$[a,b]$ of~$x$ containing~$s_i$ by $[\shiftl{a}{\length{\pi}},\shiftl{b}{\length{\pi}}]$.
\end{itemize}
\item For all other repetitive vertices, the paths enter from one string and leave via the other string and enter from one side and leave via the other side. Then the paths have opposite orientations; assume without loss of generality that the entering path is positive and the outgoing path is negative. Let~$[a,b]$ denote the block in~$x$ containing~$s_i$, and let~$[a',b']$ denote the block in~$y$ containing~$s_i'$. 
  \begin{itemize}
  \item If the cycle enters from $\lpiece yi$ and leaves via $\rpiece
    xi$, then replace~$[a,b]$ by~$[a,\shiftl{b}{\length{\pi}}]$ and~$[a',b']$ by~$[\shiftr{a'}{\length{\pi}},b']$ ($\length{\pi}$ markers are cut out of both blocks).
\item If the cycle enters from $\rpiece xi$ and leaves via $\lpiece
    yi$, then replace~$[a,b]$ by~$[a,\shiftr{b}{\length{\pi}}]$ and~$[a',b']$ by~$[\shiftl{a'}{\length{\pi}},b']$ (${\length{\pi}}$ markers are added to both blocks).
  \end{itemize}

\end{itemize}
Thus, all the breakpoints in fragile pieces have been shifted to the
right (in positive paths) or to the left (in negative paths) by a
period length ${\length{\pi}}$. Hence, this modification still gives a partition of both
strings. This partition has the same size as the original
one. Furthermore, it is also a common string partition which can be
seen as follows. The set of strings represented by the undiscovered blocks
of~$x$ and~$y$ remains exactly the same since they were shifted by the
period length. Hence, there is matching for the undiscovered blocks such that
each undiscovered block is matched to one representing the same string. For
the discovered blocks, the old matching remains a valid matching: The blocks
containing fixed solid pieces have both been modified on the same
side. Thus, they are either shortened by~${\length{\pi}}$ markers; in this case,
the matched blocks clearly represent equivalent strings. Or ${\length{\pi}}$
markers have been added on one side. In this case, the matched strings
are also equivalent, since the windows have distance at least~${\length{\pi}}$ to
the borders of the strip. The blocks containing repetitive pieces have
either been moved by~${\length{\pi}}$ positions, shortened by~${\length{\pi}}$ markers on
the same side,~${\length{\pi}}$ markers on the same side have been added, or they
have been shortened or extended on different sides. In the first three
cases, the strings represented by the new blocks remain equivalent for
the same reasons as for the blocks containing fixed solid pieces. It
remains to show the case in which blocks have been modified on
different sides. 

First, consider the case in which~$[a,b]$ is replaced
by~$[a,\shiftl{b}{\length{\pi}}]$ and~$[a',b']$ by~$[\shiftr{a'}{\length{\pi}},b']$. Since
the blocks are periodic with period length~${\length{\pi}}$ we
have~$[\shiftr{a'}{\length{\pi}},b']\equiv [{a'},\shiftl{b'}{\length{\pi}}]$. In the old
solution, this subinterval of~$[a',b']$ was matched
with~$[a,\shiftl{b}{\length{\pi}}]$, and thus~$[\shiftr{a'}{\length{\pi}},b']\equiv
[{a'},\shiftl{b'}{\length{\pi}}]\equiv [a,\shiftl{b}{\length{\pi}}]$.

Now consider the case in which~$[a,b]$ is replaced by~$[a,\shiftr{b}{\length{\pi}}]$ and~$[a',b']$
by~$[\shiftl{a'}{\length{\pi}},b']$. Since the blocks are periodic with period
length~${\length{\pi}}$ we have~$[a',b']\equiv
[\shiftl{a'}{\length{\pi}},\shiftl{b'}{\length{\pi}}]$. Since~$[a,b]\equiv [a',b']$ this
implies that the first~$\length{[a,b]}$ markers of the new blocks are
equivalent. Also because of the periodicity, we
have~$[b,\shiftr{b}{\length{\pi}}]\equiv
[\shiftl{b}{\length{\pi}},b]$. Since~$[\shiftl{b}{\length{\pi}},b]\equiv[\shiftl{b'}{\length{\pi}},b']$,
this implies that also the last~${\length{\pi}}$ markers of the new blocks are
equivalent.

Altogether, the modification gives a CSP of the same size, in which
the distance between the window endpoints and the strip endpoints has
decreased. The above operation can be repeated until at least one
breakpoint is at distance less than $\length{\pi}$ from the border of
a strip. In this case, all breakpoints of the corresponding path are
at distance at most $w+\length{\pi}$ from the border of their
corresponding strip (an argument similar to the proof of
Proposition~\ref{prop:breakpoints_close_to_strips} applies). In some
fragile piece $f$, the border of $I_f$ coincides with the maximum
extension of an adjacent solid piece $s$, thus, in $f$, the window is
contained in either $[\shiftl{\lext{s}}{(\length{\pi}+w)},
\shiftr{\lext{s}}{w}]$ or $[\shiftl{\rext{s}}{w},
\shiftr{\rext{s}}{(\length{\pi}+w)}]$. Since in one of the considered
branches, the rule adds the frame to this piece~$s$ and to the correct
side of the strip interval it is correct. Note that the modifications
made in the proof do not increase the length of any undiscovered
block. Hence, the second requirement for correctness is also
satisfied.
\end{proof}

The final case that needs to be considered is the one in which  the piece graph is acyclic but none of the other rules
applies. Then, the piece graph contains a repetitive degree-one vertex.
\begin{Frame Rule}\label{rule:repetitive-missing}
	  If the piece graph contains an edge~$\{v_s,v_f\}$ such that~$v_s$ is
  repetitive and has degree one, then assume without loss of generality 
	that $f$ is to the right of $s$ in $x$, and do the following. 
	Let $[a_l, a_r]$, $[b_l, b_r]$, and $[c_l, c_l]$ be 
	the (possibly phantom) frames such that $[a_l, a_r]$ is to the
  left of~$s'$ in~$y$, that~$[b_l, b_r]$ is to the right of $s'$ in~$y$, and
  that $[c_l, c_r]$ is to the left of~$s$ in $x$.  
Add the frame $f\cap [f_l, f_r]$ to $f$, where
$f_l:=\shiftr{c_l}{(\size{a_rb_l}+1)}$ and
$f_r:=\shiftr{c_r}{(\size{a_lb_r}+w-2)}$.
\end{Frame Rule}
\begin{proof}[of the correctness of Frame~Rule~\ref{rule:repetitive-missing}]
  The window to the left and right of $s'$ in $y$ are contained in
  $[a_l,a_r]$ and $[b_l,b_r]$ respectively, and the window to the left
  of $s$ in $x$ is contained in $[c_l,c_r]$.  Consider the blocks
  containing $s$ and $s'$, and let $\ell$ be their length. The
  two endpoints of the block containing~$s'$ are
  in~$[\shiftr{a_l}1,a_r]$ and~$[b_l,\shiftl{b_r}1]$. Hence $\ell\geq
  \size{a_rb_l}$ and $\ell\leq
  \size{(\shiftr{a_l}1)(\shiftl{b_r}1)}=\size{a_lb_r}-2$.

  The leftmost marker of the block containing~$s$ is contained in
  $[\shiftr{c_l}1,c_r]$.  Thus, the rightmost marker (the one in $f$)
  is necessarily in $[\shiftr{c_l}{(\ell+1)},\shiftr{c_r}{(\ell)}]$
  which, by the above upper and lower bounds on~$\ell$, is contained
  in~$[\shiftr{c_l}{(\size{a_rb_l}+1)},\shiftr{c_r}{(\size{a_lb_r}-2)}]$.
  This marker is the leftmost marker of the window of~$f$ which has
  length at most $w$. Hence the frame
  $[\shiftr{c_l}{(\size{a_rb_l}+1)},\shiftr{c_r}{(\size{a_lb_r}+w-2)}]$
  contains the window of $f$. The rule is still correct if $s$ or $s'$
  corresponds to the end of a string, since the phantom frames 
  contain the leftmost or rightmost marker of the blocks
  containing~$s$ or~$s'$.
\end{proof}

After exhaustively applying the frame rules,  parts of fragile
pieces that are outside of frames do not contain a
breakpoint. Hence, we perform the following rule which shrinks fragile
pieces such that they fit their frame; at the same time, the solid
pieces are extended accordingly.
\begin{Fitting Rule}
  If there is a fragile piece~$f=[a,b]$ with frame~$[c,d]$ such
  that~$a\neq c$ or~$b\neq d$, then add~$[a,c]$ to the solid piece
  left of~$f$, add~$[d,b]$ to the solid piece right of~$f$, and
  set~$f:=[c,d]$.
\end{Fitting Rule}
\begin{figure}\centering
  
  \includegraphics[scale=1]{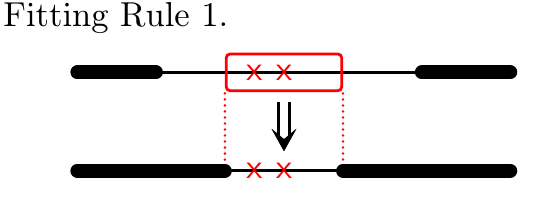}\hfill
  
  \caption{An illustration of Fitting Rule 1 of \framesf }
\label{fig:fitting-rule}
\end{figure}

We now show two important properties of instances for which none of
the frame rules applies. First, every fragile piece of these instances
has a frame. Second, the frame lengths 
are upper-bounded by a function of~$k$,~$\beta$, and the longest period of any
repetitive piece.
\begin{lemma}\label{lem:frame-size-with-period}
  Let~$\cons$ be a constraint with frame set~$\phi$ such that none of
  the Frame Rules~1--6 applies. Then, each fragile piece has a frame,
  and all frames have length at most~$(6k^2w+ 3kw+
  3k\max\{w,2\length{\pi}\})$, where, among the shortest periods of
  all repetitive solid pieces, $\pi$ denotes the longest one.
\end{lemma}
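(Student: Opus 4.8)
The plan is to prove the two parts separately. The first part — that every fragile piece has a frame when no rule applies — is best argued by contrapositive: I will show that whenever the piece graph $G[\cons,\Phi]$ has a nonempty edge set, i.e.\ some fragile piece is frameless, at least one of Frame~Rules~1--6 applies. Structurally, a frameless fragile piece has degree $1$ or $2$ in the piece graph; if it has a degree-one neighbor that is a fixed left vertex $l_s$, Frame~Rule~\ref{rule:propagation} applies (the phantom-frame convention covers the end-of-string case so Frame~Rule~\ref{rule:fragile-end} is subsumed). So I may assume every fixed vertex adjacent to a frameless fragile piece has the ``far side'' fragile piece also frameless, which propagates: starting from any frameless fragile piece and walking through solid vertices I either (a) reach a repetitive degree-one vertex, whence Frame~Rule~\ref{rule:repetitive-missing} applies, or (b) enter a cycle. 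A cycle either contains no repetitive vertex — Frame~Rule~\ref{rule:aligned-cycles} applies — or it contains one, in which case Frame~Rule~\ref{rule:repetitive-small-strips} applies if some strip along a \reprep\ subpath is short, and otherwise Frame~Rule~\ref{rule:repetitive-cycle} applies. A finiteness/termination remark is needed here: the piece graph is finite, every rule adds a frame (so removes at least one $V_F$ vertex, possibly also edges), hence after finitely many applications $V_F=\emptyset$ and every fragile piece has a frame. This case analysis is the conceptual core; it should be routine once the piece-graph structure is spelled out, but one must be careful that the rule preconditions are genuinely exhaustive.

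For the second part — the length bound — I will bound the length of a frame $\phi_f$ for a fragile piece $f$ by following how $f$ received its frame. Each frame, by inspection of the six rules, is an interval obtained by taking an anchor marker (one of $x[1]$, $x[n]$, $\lext s$, $\rext s$, a strip endpoint, or a shifted reference marker) and padding it by an additive term that is $O(w)$ plus, in the repetitive cases (Frame~Rules~\ref{rule:repetitive-small-strips} and~\ref{rule:repetitive-cycle}), $O(\length\pi)$, and in Frame~Rule~\ref{rule:propagation} the quantity $v-u$ which is the length of the \emph{previously placed} frame on $s'$. So Frame~Rule~\ref{rule:propagation} is the only one that ``propagates'' a frame length, and it does so through a chain of fixed solid vertices. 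The key observation is that such a propagation chain is a path in the piece graph of fixed vertices, and each step adds only an additive $w-1$ (the $u+w-1$ versus $u$ in the rule), so along a chain of length at most the number of solid pieces — bounded by $2k$ by Observation~\ref{obs:piece-number} — the frame length grows by at most $O(kw)$ over the base frame. Phantom frames have length $1$, and the base (non-propagated) frames created by the other five rules have length at most $\max\{w,2\length\pi\}+O(w)$ or $O(w)+O(\length\pi)$.

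Assembling the arithmetic: a frame produced directly by one of Frame~Rules~\ref{rule:fragile-end}, \ref{rule:aligned-cycles}, \ref{rule:repetitive-small-strips}, \ref{rule:repetitive-cycle}, \ref{rule:repetitive-missing} has length $O(kw)$ (the $\size{a_lb_r}$ term in Frame~Rule~\ref{rule:repetitive-missing} is a block length, itself $O(kw)$ since the block sits inside a maximum extension whose length I bound via the frames on either side, and the strip/period terms contribute $O(\length\pi)$ or $O(k\length\pi)$); then Frame~Rule~\ref{rule:propagation} may iterate $O(k)$ times adding $O(w)$ each, for another $O(kw)$. Tracking the constants exactly to match $6k^2w+3kw+3k\max\{w,2\length\pi\}$ is the part requiring care — one must count: up to $2k$ fixed pieces each possibly contributing a propagation of width $\le w$, a block length bounded through $O(k)$ frames each of width $O(kw)$ giving the $k^2w$ term, and $k$ repetitive pieces each contributing a $\max\{w,2\length\pi\}$ padding giving the $3k\max\{w,2\length\pi\}$ term. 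I expect \textbf{the main obstacle} to be exactly this bookkeeping: making the recursion on propagated frame lengths precise — in particular arguing that propagation chains do not revisit pieces and terminate at a phantom frame or a base frame — and then tallying the six rules' additive contributions so that they sum to the stated bound rather than something merely of the same order. The conceptual content is light; the danger is an off-by-a-factor error, so I would organize the proof around a single inductive inequality ``frame length $\le$ (distance from nearest base anchor, measured in solid pieces) $\cdot\, w$ + (base frame length)'' and then plug in the $O(k)$ and $O(k\length\pi)$ bounds on the pieces.
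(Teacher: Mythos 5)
Your argument for the first claim (every fragile piece eventually gets a frame) matches the paper's: a piece graph with a nonempty edge set either contains a cycle, in which case Frame Rules~3--5 cover it, or is acyclic and has a degree-one vertex, covered by Rules~1, 2 or~6; each rule application frames a fragile piece, so the process terminates. That part is fine.

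The length bound is where there is a genuine gap. You assert that Frame Rule~2 is ``the only one that propagates a frame length'' and organize the proof around the inequality ``frame length $\le$ (chain length)$\cdot w$ $+$ base frame length.'' But Frame Rule~6 also propagates, and it does so by \emph{summing}: expanding $f_l$ and $f_r$, the created frame has length $\size{c_lc_r}+\size{a_la_r}+\size{b_lb_r}+w-2$, i.e.\ essentially the sum of the lengths of the three (possibly phantom) frames adjacent to $s$ and $s'$ plus $w$ --- the block length $\ell$ cancels entirely, so your plan to control $\size{a_lb_r}$ ``as a block length, itself $O(kw)$'' is both beside the point and unavailable (discovered blocks need not have length $O(kw)$). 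A linear per-chain recursion cannot absorb a rule whose output is the sum of three earlier outputs; taken naively it would even permit exponential growth. The paper instead runs a global amortized argument: a solid vertex is \emph{open} while some adjacent fragile piece is frameless, its \emph{weight} is the total length of the frames already placed next to it, and every application of Rule~2 or~6 closes an open vertex of weight $u$ while creating a single frame of length $u+w$ that is adjacent to at most one still-open vertex, so the total open weight $W$ increases by at most $w$ per application. Since such applications occur at most $2k-h$ times and the initial $W$ is at most $3k\left[\max\{w,2\length{\pi}\}+(h+1)w\right]$, no frame can exceed the stated bound. To make your proof work you would need to replace the per-chain induction by an accounting of this amortized form (or otherwise tame the three-way summation in Rule~6); without that, not only the constants but any polynomial bound is out of reach.
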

\begin{proof}
  First, we show that every fragile piece has a frame. If the piece
  graph contains a cycle, then either
  Frame~Rule~\ref{rule:aligned-cycles},~\ref{rule:repetitive-small-strips},
  or~\ref{rule:repetitive-cycle} applies. Otherwise, the piece graph is
  acyclic, and thus it either contains a degree-one vertex and one of
  the other Frame~Rules applies, or all vertices have degree zero
  which means that all fragile pieces have frames.

  Next, we show the upper bound on the frame length.  Let $L$ be the
  length of the longest frame created in this procedure, and let $\pi$
  be among the shortest periods of all repetitive pieces one with
  maximum length. We show that
\begin{equation}
  L\leq 6k^2w+ 3kw+ 3k\max\{w,2\length{\pi}\} \label{eq:frame-size}
\end{equation}
Let $h$ be the number of frames created before Frame~Rule~6 is first
applied, $1\leq h\leq 2k$.  Rules~1, 3, 4 and~5 produce frames of length
at most $(\max\{w,2\length{\pi}\}+2w)$. Since each application
of~Rule~2 increases the maximum frame length by $w$, all frames have
length at most $(\max\{w,2\length{\pi}\}+(h+1)w)$ before the first
application of~Frame~Rule~6.  Note that once Rule~6 is applied for the
first time, only Rules~2 and~6 can be applied.  We introduce the
following notations. A solid vertex (fixed or repetitive) is
\emph{closed} if all its adjacent fragile pieces have frames, and
\emph{open} otherwise. The \emph{weight} of an open vertex is the
total length of the frames in the adjacent fragile pieces. Let~$W$
denote the sum of the weights of all open vertices.

Before the first application of Frame Rule~6, $W\leq
3k[\max\{w,2\length{\pi}\}+(h+1)w]$ (for each solid piece $s$, the
weight of either $v_s$ or $l_s$ and $r_s$ together is at most the sum
of the weights of three different frames).  Afterwards, each time
Rule~2 or 6 is applied, an open vertex with some weight $u$ is
closed, and a frame of length $u+w$ is created in a fragile piece $f$
which is adjacent to at most one open vertex. Thus, the total weight
of open vertices $W$ is increased by at most $u+w-u=w$ with each
application of Frame~Rule~2 or~6. In the course of the algorithm, Frame~Rule~2 or~6 are applied at most $2k-h$
times. Hence $W$ never exceeds
 \begin{align*}
 &3k\left[\max\{w,2\length{\pi}\}+(h+1)w\right] +(2k-h)w \\
 &\leq 6k^2w+ 3kw+ 3k\max\{w,2\length{\pi}\}.
 \end{align*}
 Since no frame of length more than $W$ can be created, we have $L\leq
 W$, which proves the second part of the claim. 
\end{proof}
The bound given by the lemma above still contains the maximum~period
length~$\pi$ which means that it is too large to be useful for the
 \splitf procedure. However, the algorithm can now either find a
repetitive piece which can be fixed with few options 
(see Lemma~\ref{lem:long-period-few-options})
or the maximum period length is not too long.

\begin{restatable}{lemma}{lpoptionlem}\label{lem:long-period-few-options}
  Let~$\cons$ be a constraint that contains a repetitive solid
  piece~$s$ with shortest period~$\pi_s$ such that each fragile piece adjacent
  to~$s$ or~$s'$ has length at most~$(12k^2+9k)\length{\pi_s}$. Then,
  there are at most~$24k^2+18k$ feasible alignments,
  and any CSP  satisfying~$\cons$ matches elements of $s$ according to a feasible alignment.
\end{restatable}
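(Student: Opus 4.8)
The plan is to identify feasible alignments with their shifts $\delta$ and to bound separately the length of the interval in which admissible shifts can lie and the minimum gap between two such shifts. Write $s=[a,b]$, $s'=[a',b']$, $L=\length s$, $L'=\length{s'}$, $p=\length{\pi_s}$, and for a shift $\delta$ put $\sigma(\delta):=[\shiftr{a'}{\delta},\shiftr{a'}{(\delta+\size{ab})}]$ (the interval equidistant to $[a,b]$ from $s$, i.e.\ the image of $s$ in $y$) and $\tau(\delta):=[\shiftr a{(-\delta)},\shiftr a{(\size{a'b'}-\delta)}]$ (the image of $s'$ in $x$); thus the alignment of shift $\delta$ is feasible exactly when $[a,b]\equiv\sigma(\delta)$, $[a',b']\equiv\tau(\delta)$, $\sigma(\delta)$ contains no marker of a solid piece other than $s'$, and $\tau(\delta)$ contains no marker of a solid piece other than $s$. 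I will use throughout that $s$ and $s'$ are repetitive and hence (cf.\ the proof of Lemma~\ref{lem:split-correct}; the frame rules and Fitting Rule~1 only extend solid pieces, and such an extension keeps a repetitive piece periodic with the same shortest period since it stays inside the surrounding block, which by Condition~5 has period $\pi_s$) share a common shortest period $\pi_s$, that $\pi_s$ is primitive, and that $L,L'\ge 2p$ (a solid piece is created with length at least $\piecelen$ while a repetitive piece has period length at most $\piecelen/2$, and later extensions only lengthen it).

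\emph{Range of admissible shifts.} If the alignment of shift $\delta$ is feasible, then $\sigma(\delta)$ meets $s'$ (because the alignment forces $-\size{ab}\le\delta\le\size{a'b'}$) but meets no other solid piece, so $\sigma(\delta)$ is contained in the union of $s'$ and the two fragile pieces adjacent to $s'$ (only one such piece if $s'$ touches an end of $y$). By hypothesis that union has at most $2(12k^2+9k)p+L'-2$ markers, so there are at most $2(12k^2+9k)p+(L'-L)-1$ shifts for which $\sigma(\delta)$ is feasible; symmetrically at most $2(12k^2+9k)p+(L-L')-1$ shifts have $\tau(\delta)$ feasible. A feasible alignment satisfies both, so its shift lies in an interval of at most $2(12k^2+9k)p-1$ integers (the average of the two bounds).

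\emph{Gap between admissible shifts.} Let $\delta_1<\delta_2$ both give feasible alignments and put $c:=\delta_2-\delta_1$. Since $\sigma(\delta_1)\equiv[a,b]\equiv\sigma(\delta_2)$ are two occurrences of the period-$\pi_s$ string $[a,b]$ in $y$ placed $c$ positions apart, equating their contents on the overlap (of length $L-c$, if $c<L$) and expressing the letters of $[a,b]$ as $\pi_s$ read cyclically yields $\pi_s[r]=\pi_s[(r-c)\bmod p]$ for $L-c$ consecutive values of $r$ modulo $p$. If $c\le L-p$ these are all $p$ residues, so $\pi_s$ equals its cyclic rotation by $c$, and primitivity of $\pi_s$ forces $p\mid c$, hence $c\ge p$. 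If instead $c>L-p$, then $c>L-p\ge p$ because $L\ge2p$. In either case $c\ge p$. Consequently the admissible shifts are pairwise at distance at least $p$ inside an interval of at most $2(12k^2+9k)p-1$ integers, so there are at most $2(12k^2+9k)=24k^2+18k$ feasible alignments.

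For the second statement, let $\sol$ satisfy $\cons$. By Condition~3 of satisfaction, $s$ and $s'$ lie in matched blocks $B\equiv B'$, which by Proposition~\ref{prop:block-limits} are contained in $[\lext s,\rext s]$ and $[\lext{s'},\rext{s'}]$ respectively and hence contain no solid piece other than $s$, resp.\ $s'$. Since $s$ contains the $\piecelen$-middle of $B$ and $s'$ that of $B'$ (cf.\ the proof of Lemma~\ref{lem:split-correct}), the correspondence $B\equiv B'$ matches at least one marker of $s$ to a marker of $s'$, and reading off this correspondence gives a valid alignment; its images $\sigma(\delta)\subseteq B'$ and $\tau(\delta)\subseteq B$ meet no other solid piece, so it is feasible and $\sol$ matches the markers of $s$ according to it. The main obstacle is the gap argument: one has to squeeze a factor of $p$ out of periodicity, which requires the overlap/no-overlap case distinction together with the a priori bound $L\ge2p$; a secondary point is to make the range argument airtight at the ends of the strings and to verify that ``feasible'' genuinely confines $\sigma(\delta)$ (resp.\ $\tau(\delta)$) to the $O(k^2)\cdot p$-sized neighbourhood of $s'$ (resp.\ $s$), which is exactly where the hypothesis on the adjacent fragile pieces enters.
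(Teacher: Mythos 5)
Your counting route is genuinely different from the paper's. The paper assumes w.l.o.g.\ $\length{s}\ge\length{s'}$ and splits the feasible alignments into two families, according to whether the leftmost marker of $s$ is mapped into the fragile piece left of $s'$ or the rightmost marker of $s$ into the fragile piece right of $s'$; each family is bounded by $12k^2+9k$ using the (merely asserted) fact that distinct alignments are separated by a multiple of $\length{\pi_s}$. You instead bound the whole window of admissible shifts by intersecting the containment constraints around $s'$ and around $s$ (taking the minimum, bounded by the average, of the two counts) and then divide by a minimum gap of $\length{\pi_s}$. This is a legitimate alternative, gives the same bound $24k^2+18k$, and has the merit of making the separation argument explicit where the paper only asserts it.

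Two steps are weaker than they need to be, and one of them is a real gap as written. First, your gap argument is made to depend on $L\ge 2p$ (writing $L=\length{s}$, $p=\length{\pi_s}$), needed in your case $c>L-p$, and you justify $L\ge 2p$ by ``a solid piece is created with length at least $\piecelen$'' plus the $\piecelen$-middle facts. Those facts are established only in the good branch analysed in Lemma~\ref{lem:split-correct}: the rightmost piece of a $\piecelen$-splitting may be shorter than $\piecelen$, and the present lemma must hold for \emph{every} constraint on which Line~10 of \framesf branches (it is used to bound the branching degree), not only for constraints arising in the good branch. The dependence is also unnecessary: if two admissible shifts differ by $c$ with $0<c<p$, then since $p\le L$ the two occurrences of the content of $s$ in $y$ overlap, so that content has a period of length $c<p$, contradicting minimality of the shortest period; hence $c\ge p$ with no case distinction, no primitivity, and no lower bound on $L$. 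Second, for the claim that a satisfying CSP matches $s$ according to a feasible alignment, you invoke the $\piecelen$-middle containment of $s$ and $s'$ in their blocks; again this was proved only for the specific blocks of the specific CSP at splitting time, and the block of an arbitrary CSP satisfying the current constraint can sit very asymmetrically around $s$, so its $\piecelen$-middle need not lie in $s$. The paper's argument for this part is simply that an infeasible alignment would place two distinct solid pieces in one block, i.e.\ leave some fragile piece without a breakpoint, violating Condition~\ref{prop:break-fragile}; that observation (which you in fact already make via Proposition~\ref{prop:block-limits}) is what carries the feasibility claim, while the in-range property of the induced shift is taken for granted by the paper and should not be propped up by the middle argument.
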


\begin{proof}
The alignment corresponding
 to any CSP satisfying~\cons{} is necessarily feasible, since otherwise two
 distinct solid pieces would be contained in the same block.

  Without loss of generality, let~$\length{s}\ge \length{s'}$. Thus,
  in a satisfying CSP~$\sol$, either the leftmost marker of~$s$ is
  matched to a marker left of~$s'$ (or to the leftmost marker of $s'$),
	either
	the rightmost marker of~$s$ is
  matched to a marker right of~$s'$. 
	Consider the first case; by
  Condition~\ref{prop:break-fragile} of satisfying CSPs, the leftmost
  marker of~$s$ is matched to some marker in the fragile piece to the
  left of~$s'$. Note that since $s$ and $s'$ have a shortest period
  $\pi_s$, two different alignments are separated by a a multiple of
  $\length{\pi_s}$ markers.  Hence, there are at most~$12k^2+9k$
  different alignments in which the leftmost marker of~$s$ is matched
  to some marker of the fragile piece to the left of~$s'$. Similarly,
  there are at most~$12k^2+9k$ possible alignments in which the
  rightmost marker of~$s$ is matched to a marker of the fragile piece
  to the left of~$s'$ (for $\size{s}=\size{s'}$, it is possible that
  both left and right endpoints of $s$ are matched to markers of the
  fragile pieces to the left and right of $s'$). The total
  number of feasible alignments between~$s$ and~$s'$ thus is at
  most~$24k^2+18k$. 
\end{proof}

By guessing the alignments of the long periods we have finally
achieved the goal of \framesf: all frames are ``short'' enough
to be split by \splitf.
\lemframesize*
\begin{proof}
  By~Lemma~\ref{lem:frame-size-with-period}, an instance in which no
  frame rule applies has frames of length at most~$(6k^2w+ 3kw+
  3k\max\{w,2\length{\pi}\})$ where~$\pi$ is a longest period among
  all repetitive pieces. In case~$2\length{\pi}\ge w$, this bound becomes
  $(12k^2+9k)\cdot \length{\pi}$. Consider any 
  repetitive piece~$s$ with period~$\pi$, the condition of Line~9 is satisfied:
  adjacent fragile pieces have length $\leq (12k^2+9k)\cdot \length{\pi}$. 
  Hence, at least one repetitive piece is fixed in the loop Lines 8--11, 
  and \textttup{new-align} is set
  ``True'', which means that the outer loop in \framesf will be
  repeated. Thus, when \framesf terminates, 
  we must be in the case~$2\length{\pi}\le w$ and  
  the frame size is bounded by 
  $6(k^2+k)w\le 12(k^2+k)k\beta$.
\end{proof}

The correctness of \framesf is simply a consequence of the
correctness of all single steps (always considering the correct branching in each branching step).
\windowslem*
\begin{proof}
  The correctness of all frame rules have already been proven.  The
  correctness of Fitting Rule~1 is trivial.  Finally, the correctness
  of Lines~8--11 follows simply from the fact that the alignment in
  one of the branches is the correct one (it considers all
  feasible alignments). Since the correctness definition of the frame
  rules demands that all undiscovered blocks are at most as long as before
  adding the frame, also the size bound for the longest undiscovered
  block holds.
\end{proof}

It thus remains to bound the running time of \framesf. In particular,
we need to show that the number of branches is bounded by a
function of~$k$.
\windowsrt*
\begin{proof}
  First, note that the outer repeat-until~loop of \framesf is
  repeated at most~$2k$ times over the course of \emph{all} calls to
  \framesf: The procedure \framesf is called at most~$k$ times from
  the main method. Each additional time the repeat-until~loop is
  repeated, there is a pair of repetitive solid pieces that becomes a
  pair of fixed solid pieces at Line~10 of the previous pass of the
  repeat-until~loop. This can happen at most~$k$ times. 

  Second, note that the while~loop of Lines~4--5 is iterated at
  most~$2k$ times in each repetition of the other repeat-until~loop of
  \framesf: each rule creates exactly one frame, and, by
  Observation~\ref{obs:piece-number} there are at most $2k-2$ fragile
  pieces.
  
  Hence, there are at most~$4k^2$ times in which one of the frame
  rules at Line~5 creates branches and at most~$k$ times in
  which branches are created at Line~10. The only frame
  rules that perform branchings are
  Frame~Rules~\ref{rule:aligned-cycles}
  and~\ref{rule:repetitive-cycle}. In both cases, the rule branches
  into at most~$2k$ cases, since each cycle has at most~$k$ solid
  vertices and thus at most~$2k$ vertices edges in the cycle under
  consideration. Hence, the branchings performed by the frame~rules
  increase the running time by a factor of~$O((2k)^{4k^2})$. Each of the
  at most $k$ branchings in Line~10 is among at most~$24k^2+18k$ choices
  (Lemma~\ref{lem:long-period-few-options}).  Hence, these branchings
  increase the running time by a factor of~$O(k^{2k}\cdot
  k^k)$. Hence, the overall increase due to the branching is by a
  factor of~$\windowstime$; all other steps can be performed in
  polynomial time.
\end{proof}



\section{Conclusion}
We have presented the first fixed-parameter algorithm for MCSP
parameterized by the size of the partition. Aside from the
applications in comparative genomics, we believe that MCSP is a very
fundamental combinatorial string problem. Our work
thus makes a contribution to an area that has seen relatively few
advances on parameterized algorithms. An improvement of the very
impractical running time is desirable. Indeed, we believe that our
algorithm can be further improved to run in~$k^{O(k)}\cdot \poly(n)$
time.
However, a~$2^{O(k)}\cdot \poly(n)$ running
time is impossible with our approach of guessing the matching of solid
pieces and would need substantially new ideas.

Furthermore, it would be interesting to extend our result to
``signed'' MCSP~\cite{CZF+05,FLR+09,SMEM08} where each marker is
annotated with a direction and one may reverse blocks before matching.

\paragraph{Acknowledgments.}

We thank the anonymous reviewers of \textit{SODA} for valuable
feedback improving the presentation of this work.

\bibliographystyle{abbrvnat}
\bibliography{string-partition}

\end{document}